\documentclass{article} 
\usepackage{iclr2026_conference,times}


\usepackage{amsmath,amsfonts,bm}









\def\eqref#1{equation~\ref{#1}}









\def\1{\bm{1}}










\DeclareMathAlphabet{\mathsfit}{\encodingdefault}{\sfdefault}{m}{sl}
\SetMathAlphabet{\mathsfit}{bold}{\encodingdefault}{\sfdefault}{bx}{n}













\usepackage{hyperref}
\usepackage{url}
\usepackage[utf8]{inputenc}
\usepackage{marvosym} 
\usepackage[T1]{fontenc}

\usepackage{graphicx}
\usepackage{booktabs} 

\usepackage[ruled]{algorithm2e} 
\usepackage{algpseudocode}

\usepackage{enumitem} 
\setlist[itemize]{leftmargin=*}
\setlist[enumerate]{leftmargin=*}

\usepackage[dvipsnames]{xcolor}
\usepackage{xcolor, colortbl}
\definecolor{pinkaccent}{HTML}{FF5BC3}
\definecolor{mintgreen}{HTML}{30C37B}
\usepackage{threeparttable} 
\usepackage{wrapfig}

\usepackage{amsmath, amsthm}
\newtheorem{theorem}{Theorem}

\usepackage{xspace}
\usepackage{multirow}
\usepackage{pifont}
\newcommand{\xmark}{\ding{55}}  
\newcommand{\cmark}{\ding{51}}

\title{3DGEER: 3D Gaussian Rendering Made Exact and Efficient for Generic Cameras}


\author{Zixun Huang, Cho-Ying Wu$^*$, Yuliang Guo$^*$\begingroup\thanks{Corresponding Author: \{yuliang.guo2, cho-ying.wu\}@us.bosch.com}\endgroup, Xinyu Huang, Liu Ren \\
Bosch Research North America \& Bosch Center for AI \\
\textcolor{pinkaccent}{\texttt{\href{https://zixunh.github.io/3d-geer}{https://zixunh.github.io/3d-geer}}}
}

%

\iclrfinalcopy 
\begin{document}

\maketitle

\begin{abstract}
3D Gaussian Splatting (3DGS)~\citep{kerbl20233dgs} achieves an appealing balance between rendering quality and efficiency, but relies on approximating 3D Gaussians as 2D projections—an assumption that degrades accuracy, especially under generic large field-of-view (FoV) cameras. 
Despite recent extensions, no prior work has simultaneously achieved both \emph{projective exactness} and \emph{real-time efficiency} for general cameras. We introduce {3DGEER}, a geometrically exact and efficient Gaussian rendering framework. 
From first principles, we derive a closed-form expression for integrating Gaussian density along a ray, enabling precise forward rendering and differentiable optimization under arbitrary camera models. 
To retain efficiency, we propose the \emph{Particle Bounding Frustum (PBF)}, which provides tight ray–Gaussian association without BVH traversal, and the \emph{Bipolar Equiangular Projection (BEAP)}, which unifies FoV representations, accelerates association, and improves reconstruction quality. Experiments on both pinhole and fisheye datasets show that 3DGEER outperforms prior methods across all metrics, runs $5\times$ faster than existing projective exact ray-based baselines, and generalizes to wider FoVs unseen during training—establishing a new state of the art in real-time radiance field rendering.
\end{abstract}

\section{Introduction}

Initiated by 3D Gaussian Splatting (3DGS)~\citep{kerbl20233dgs}, volumetric particle rendering methods formulated under splatting \citep{yu2024mip,yan2024multi,lu2024scaffold,kheradmand20243d,charatan2024pixelsplat,mallick2024taming} have gained significant popularity due to their impressive visual fidelity and high rendering speed. The core efficiency stems from EWA splatting~\citep{zwicker2001ewa}, which uses a local affine approximation to project a 3D Gaussian as a 2D Gaussian on the image plane. While effective in narrow field-of-view (FoV) scenarios, this approximation introduces substantial errors in wide-FoV settings, where nonlinear distortions cause the true projection to deviate significantly from a symmetric 2D Gaussian—especially in fisheye images commonly encountered in robotics and autonomous driving~\citep{rashed2021generalized,courbon2007generic,kumar2023surround,zhang2016benefit,yogamani2019woodscape,yogamani2024fisheyebevseg,sekkat2022synwoodscape}.
Although recent works have attempted to enable differentiable rendering under specific fisheye camera models~\citep{liao2024fisheyegs} or more general camera models~\citep{huang2024gs++}, these approaches remain fundamentally constrained by their reliance on 2D projective-space approximations and have yet to yield an exact closed-form solution.

Meanwhile, \citet{celarek2025does3dgaussiansplatting} questions the necessity of exact rendering by scaling up the number of Gaussians (up to 1M) to mitigate projective approximation in splatting. However, the results are limited to small object-centric scenes. In large-scale scene reconstructions, the required number of Gaussians varies significantly across objects and frustums, making scaling up to larger scenes prohibitively costly. Further, we show projective errors under generic distorted cameras are substantially larger, where simply scaling up the Gaussian counts cannot close the quality gap (see Fig.~\ref{fig:scaling}).

A parallel line of work tackles volumetric particle rendering directly in a ray-tracing/marching formulation~\citep{Yu2024GOF,Condor2025tog,mai2024ever,MoenneLoccoz:2024:tog:3dgrt, talegaonkar2025volumetrically,gu2025irgs}. 
These approaches naturally support large-FoV camera models and are free from projective approximation error. However, achieving high efficiency remains a key challenge—particularly in associating each rendering ray with its most relevant 3D particles.
To address this, significant effort has been devoted to accelerating it, such as adopting advanced Bounding Volume Hierarchies (BVH) employed in EVER~\citep{mai2024ever} and 3DGRT~\citep{MoenneLoccoz:2024:tog:3dgrt}. Nevertheless, due to the inherent algorithmic complexity and difficulty in parallelization, pure ray-based methods have yet to match the efficiency of 3DGS. 
Alternatively, a hybrid approach has been proposed in 3DGUT~\citep{wu2024:cvpr:3dgut} which adopts ray-tracing rendering with rasterization to solve ray-particle association efficiently. However, the inherited projective approximation error from the association stage still compromises rendering exactness in projective geometry and even risks grid-line artifacts (see Fig.~\ref{fig:illus:asso:diff} \textit{Left}).
To date, no existing method achieves both the geometric exactness of ray-based rendering and the high efficiency of splatting.

In this work, we propose 3DGEER—a novel 3D Gaussian Exact and Efficient Rendering method that achieves \textit{exactness in projective geometry} (see definition in Sec.~\ref{app:exact}) while maintaining \textit{high efficiency comparable to splatting} {for generic cameras (both perspective and a variety of fisheye projection models)}.
3DGEER is built upon three key contributions:
(i) We revisit the heuristic maximum response function~\citep{keselman2022approximate} of Gaussians from first principles and reveal its connection with the projective exactness. Based on it, we further derive the differentiable rendering framework with numerical stability.
(ii) To perform efficient ray-particle association without compromising geometric exactness, we reformulate the problem as associating Camera Sub-frustums (CSF) with Particle Bounding Frustums (PBF)—analogous to the tile-AABB mapping used in 3DGS. 
For each 3D Gaussian, we derive a closed-form solution to compute the PBF tightly bounding it, enabling highly efficient association.
(iii) In addition, we propose to uniformly sample rays in a novel Bipolar Equiangular Projection (BEAP) space to apply color supervision.
This ray sampling strategy not only aligns image-space partitioning and the underlying CSF but also improves conventional pinhole or equidistant projections in reconstruction quality.


Extensive experiments on both fisheye and pinhole datasets—including ScanNet++ \citep{yeshwanth2023scannet++}, Zip-NeRF \citep{Barron2023:iccv:zipnerf}, Aria \citep{lv2025photorealscenereconstructionegocentric}, and MipNeRF \citep{BarronMTHMS21:iccv:mipnerf}—show that 3DGEER consistently outperforms prior and concurrent methods, achieving 0.9–4.8 PSNR gains under challenging wide-FoV conditions, and 0.3–2.0 on pinhole views. Remarkably, our approach even surpasses splat-based methods on wide-FoV cameras when trained only with narrow-FoV data, highlighting its strong generalization to extremely wide-FoV rendering.

\begin{figure}[t]
    \centering
    \includegraphics[width=1\linewidth]{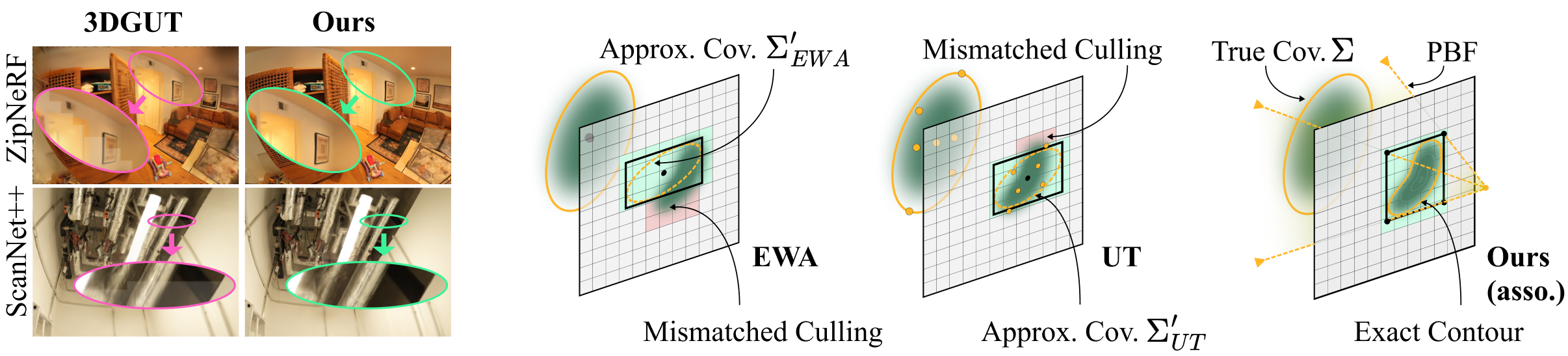}
    \vspace{-1.8em}
    \caption{\textbf{Linear Approximation Error in Ray-Particle Association.}
        (\textit{Left}) Grid-line artifacts caused by inaccurate UT association. (\textit{Right}) Diagram illustrating our association and comparison with others. Our method avoids the intermediate conic approximation by directly computing the exact bounding structure from the true 3D covariance.
        }
    \label{fig:illus:asso:diff}
    \vspace{-1.5em}
\end{figure}



\section{Related Works}
\label{sec:related}



\begin{table}[t]
\centering
\smaller
\caption{\textbf{Summary of Particle Rendering Methods w.r.t. Projective Exactness.}
The top section lists splatting-based methods, while the bottom section presents ray-based approaches—some of which still rely on projective approximated association (e.g., EWA or Unscented Transform).}
\setlength{\tabcolsep}{5pt}
\begin{tabular}{l|c|cc|c|c|c|c}
\toprule
Method & Render  & \multicolumn{2}{c|}{ Particle Asso.} & Particle  & Large    & High & Exact\\
       &         & Method & Scope                      & Type      & FoV      &  FPS & Meth.\\
\midrule
3DGS \citep{kerbl20233dgs}     & Splat. & EWA & Img.    & 3D Gauss. & \xmark  & \cmark & \xmark\\
FisheyeGS \citep{liao2024fisheyegs}  & Splat. & EWA & Img.    & 3D Gauss. & \cmark  & \cmark & \xmark\\
GS++ ~\citep{huang2024gs++}        & Splat. & EWA & Img.    & 3D Gauss. & \xmark  & \cmark & \xmark\\
\midrule
2DGS~\citep{Huang:2024:siggraph:2dgs} & Ray & Splat Bound. & Img.      & 2D Gauss. & \xmark   &\cmark & \cmark \\
GOF~\citep{Yu2024GOF}        & Ray & EWA & Img.      & 3D Gauss. & \xmark   &\xmark & \xmark \\
EVER~\citep{mai2024ever}        & Ray & BVH  & Scene    & 3D Ellip. & \cmark   &\xmark & \cmark \\
3DGRT~\citep{MoenneLoccoz:2024:tog:3dgrt}       & Ray & BVH  & Scene    & 3D Gauss. & \cmark   &\xmark &\cmark\\
3DGUT~\citep{wu2024:cvpr:3dgut}       & Ray & UT & Img.      & 3D Gauss. & \cmark   &\cmark & \xmark\\
HTGS~\citep{hahlbohm2025efficientperspectivecorrect3dgaussian} & Ray & Splat Bound. & Img.      & 3D Gauss. & \xmark   &\cmark & \cmark \\
3DGEER (Ours)                & Ray & PBF & Frust.   & 3D Gauss. & \cmark   &\cmark &\cmark\\
\bottomrule
\end{tabular}
\vspace{-1.5em}
\label{tab:liter:compare}
\end{table}

\subsection{Splatting-Based Gaussian Rendering}
Most 3DGS variants~\citep{yu2024mip,zhang2024pixel,rota2024revising,gao6dgs,zhaoscaling,kerbl2024hierarchical,housort,liugs} adopt the original EWA splatting strategy~\citep{zwicker2001ewa, ren2002object}, projecting 3D Gaussians onto the image plane and approximating their appearance as 2D conics. To support differentiable rendering under large-FoV camera models with distortion, methods such as FisheyeGS \citep{liao2024fisheyegs} and GS++ \citep{huang2024gs++}
estimate the Jacobian of equidistant or spherical projections to adjust the 2D mean and covariance of the projected Gaussians. However, these splatting-based approaches still rely on a first-order Taylor expansion of the highly non-linear projection function. In wide-FoV settings, higher-order terms are non-negligible, resulting in significant approximation errors and degraded reconstruction quality.

Fundamentally, the true projection of a 3D Gaussian under non-linear camera models—whether ideal pinhole or distorted projections—is not a symmetric 2D Gaussian. 
In other words, any framework that relies on linear projective geometry to approximate inherently nonlinear transformations will inevitably introduce approximation errors that cannot be fully eliminated.
In contrast, our approach avoids such projective approximations entirely by aggregating Gaussian density directly along rays in a volumetric rendering formulation.

\subsection{Ray-Based Gaussian Rendering}

\citet{keselman2022approximate} introduces an assumption that a ray intersects the Gaussian density function at the maximum intensity response.
Building on this idea, 3DGRT~\citep{MoenneLoccoz:2024:tog:3dgrt} adopts this heuristic intersection to establish a 3D Gaussian ray-tracing framework, which enables ray-space sampling without explicit projection and alleviates wide-FoV distortions.
However, 3DGRT introduces substantial computational overhead, as the ray–particle association relies on costly scene-level spatial partitioning structures for BVHs, ultimately resulting in low frame rates.

Alternatively, EVER~\citep{mai2024ever} replaces the Gaussian representation with simplified ellipsoids of constant density, enabling a closed-form solution for transmittance-weighted ray integration.
{However, this substitution reduces the expressive capacity of the representation compared to full 3D Gaussians and still suffers from limited runtime efficiency due to its reliance on BVH-based structures. 
2DGS~\citep{Huang:2024:siggraph:2dgs} substitutes 3D Gaussians with 2D surfels, allowing exact ray–surface intersections. }
HTGS~\citep{hahlbohm2025efficientperspectivecorrect3dgaussian} reintroduces the full Gaussian expressions while adopting 2DGS’s ray–splat association. Yet, because its association remains restricted to screen space, it falls short of ensuring projective exactness under wider FoVs. {Our contribution lies in deriving closed-form bounds and a full rendering pipeline directly in a camera-agnostic angular domain (see details in Sec.~\ref{app:asso}) and we especially discuss the relation and differences in computation to prior 2DGS and HTGS parametrization and rendering in Sec.~\ref{sec:2dgs}}.

3DGUT~\citep{wu2024:cvpr:3dgut}, a follow-up to 3DGRT, improves runtime efficiency by adopting splatting-based particle association, achieving frame rates comparable to 3DGS. However, its acceleration relies on the Unscented Transform (UT), a sampling-based approximation used to estimate 2D conics in projective image space. 
A similar limitation exists in GOF~\citep{Yu2024GOF}, which directly applies the EWA trick to fit 2D bounding conics and projective centers for association. Notably, under extreme-FoV conditions, imprecise AABB estimation and the resulting inaccurate tile-to-AABB associations can lead to visible grid-line artifacts (see Fig.~\ref{fig:illus:asso:diff} \textit{Left}).

In contrast to these methods, our approach solves the ray-particle association problem directly at the frustum level in 3D space (see Fig.~\ref{fig:illus:asso:diff} \textit{Right}), preserving the geometric exactness of ray-based rendering across arbitrary camera models while maintaining rendering speeds comparable to 3DGS.
Tab.~\ref{tab:liter:compare} summarizes representative methods that aim to address projective approximation errors, highlighting key differences across several important dimensions.

\begin{figure}[t]
    \centering
    \includegraphics[width=1.0\linewidth]{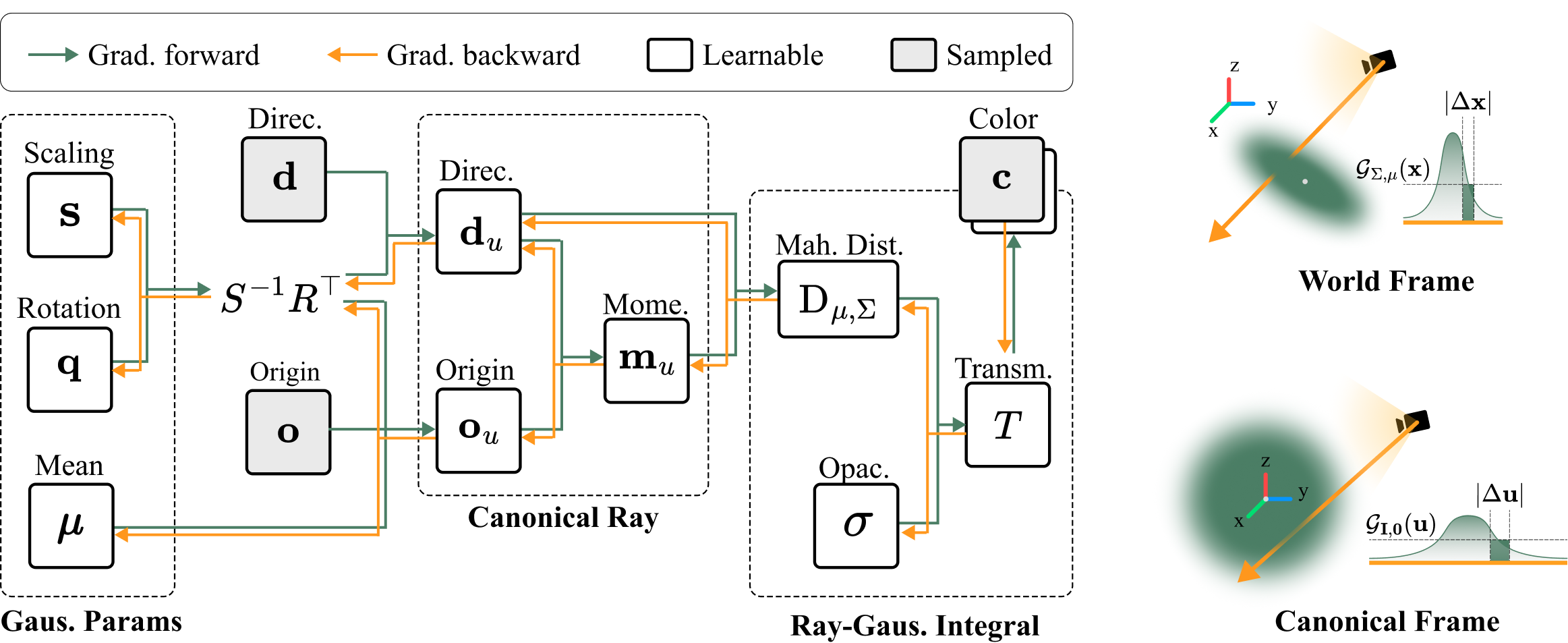}\vspace{-1em}
    \caption{\textbf{Gradient Flow and Canonical Transformation.} (\textit{Left}) Illustration of forward and backward gradient propagation from the Gaussian parameters \(\mathbf{s}\), \(\mathbf{q}\), \(\boldsymbol{\mu}\), and the opacity coefficient \(\sigma\) to the transmittance. (\textit{Right}) Integral of 3D Gaussian density along a ray, where the green box highlights the product of density and ray segment length (see Eq.~\ref{eq:integral:same}).
    }
    \label{fig:exact:ray:illus}
    \vspace{-1em}
\end{figure}

\section{Method}
\label{sec:overview}

We propose {3DGEER}, a Gaussian rendering framework that achieves both geometric exactness and real-time efficiency under arbitrary FoV cameras. 
Our method addresses the two key bottlenecks of prior work—\emph{projection approximation} in splatting-based rendering and \emph{BVH overhead} in ray-based rendering—through three complementary components: 
\begin{itemize}
    \item Ray-based rendering formulation (Sec.~\ref{sec:exact:render}): a closed-form differentiable solution for integrating Gaussian density along 3D rays, eliminating linearization error from projection-based splatting. 
    \item Particle Bounding Frustum (PBF) (Sec.~\ref{sec:asso}): a novel ray–particle association scheme that replaces BVH traversal with exact frustum–particle tests, yielding both accuracy and efficiency. 
    \item Bipolar Equiangular Projection (BEAP) (Sec.~\ref{sec:bipolar:rep}): an image representation unifying arbitrary FoV camera models, which accelerates PBF computation and improves reconstruction quality. 
\end{itemize}

The Appendix further provides: (i) the necessary preliminaries and notations (Sec.~\ref{sec:prelim}), (ii) the complete proof to Sec.~\ref{sec:exact:render} establishing the connection between maximum response and projective exactness (Sec.~\ref{ap:sec:ray:int:}), (iii) the full derivations of backward gradients (Sec.~\ref{ap:sec:bp:grad}) and its numerical stability, and (iv) the full derivations of geometrically exact association (Sec.~\ref{app:asso}).




\subsection{Rendering with Exact Projective Geometry}
\label{sec:exact:render}

One key challenge in geometrically exact Gaussian rendering is to obtain a closed-form expression for integrating density and color along a ray. Our strategy is to map each anisotropic Gaussian in world space into a canonical coordinate system where all Gaussians become isotropic. This reduces the transmittance integral to a measure-preserving change of variables, admitting a simple closed-form solution.

\paragraph{Canonical transformation.}
For each Gaussian with mean $\boldsymbol{\mu}$, rotation $R$, and scaling $S$, we define the mapping
\begin{equation}
\begin{bmatrix} \mathbf{x} \\ 1 \end{bmatrix} =
\begin{bmatrix}
R S & \boldsymbol{\mu} \\
\mathbf{0} & 1
\end{bmatrix}
\begin{bmatrix} \mathbf{u} \\ 1 \end{bmatrix},
\end{equation}
which maps a canonical coordinate $\mathbf{u}$ to world coordinates $\mathbf{x}$. This implies $\Delta \mathbf{x} = RS \Delta \mathbf{u}$ and absorbs the Jacobian determinant into the density measure (see Fig.~\ref{fig:exact:ray:illus} \textit{Right}). Consequently, each Gaussian reduces to the shared isotropic form
\begin{equation}
\mathcal{G}_{\mathbf{I}, \mathbf{0}}(\mathbf{u}) = \frac{1}{\rho}\exp\!\left(-\tfrac{1}{2}\|\mathbf{u}\|^2\right).
\end{equation}

\paragraph{Closed-form transmittance.}
For a ray $\mathbf{r}(t)=\mathbf{o}+t\mathbf{d}$, its canonical form is $\mathbf{r}_u(t)=\mathbf{o}_u+t\mathbf{d}_u$ with
\begin{equation}
\mathbf{o}_u = S^{-1}R^\top(\mathbf{o}-\boldsymbol{\mu}), \quad
\mathbf{d}_u = S^{-1}R^\top\mathbf{d}.
\end{equation}
The accumulated transmittance is then
\begin{equation}\label{eq:transmit}
T(\mathbf{o}, \mathbf{d}) = \sigma \int_{t\in\mathbb{R}}
\mathcal{G}_{\mathbf{I},\mathbf{0}}(\mathbf{o}_u+t\mathbf{d}_u)\,dt
= \sigma \exp\!\left(-\tfrac{1}{2} D_{\mu,\Sigma}^2(\mathbf{o},\mathbf{d})\right),
\end{equation}
where
\begin{equation}
D_{\mu,\Sigma}^2(\mathbf{o},\mathbf{d}) = 
\frac{\|\mathbf{o}_u \times \mathbf{d}_u\|^2}{\|\mathbf{d}_u\|^2}
\end{equation}
is the squared perpendicular (Mahalanobis) distance from the ray to the Gaussian center.

\paragraph{Discussion.}
This result reveals that the transmittance depends only on the ray’s distance to the Gaussian in the canonical space, yielding an efficient and geometrically exact closed-form solution. Interestingly, this expression is equivalent to the “maximum response” heuristic used in prior works~\citep{keselman2022approximate,MoenneLoccoz:2024:tog:3dgrt}, while our mathematical derivation reveals its underlying projective exactness (see details in Sec.~\ref{ap:sec:ray:int:}). Furthermore, by following our closed-form derivation process and the corresponding backward gradient chain (see Fig.~\ref{fig:exact:ray:illus} \textit{Left} and details in Sec.~\ref{ap:sec:bp:grad}), we explicitly avoid the numerical instabilities that prior ray-based work~\citep{Yu2024GOF} encounters and eliminates the need for redundant post-processing, such as filtering out degenerate Gaussians.

\begin{wrapfigure}{r}{0.5\textwidth}
    \centering\vspace{-2em}
    \includegraphics[width=1.0\linewidth]{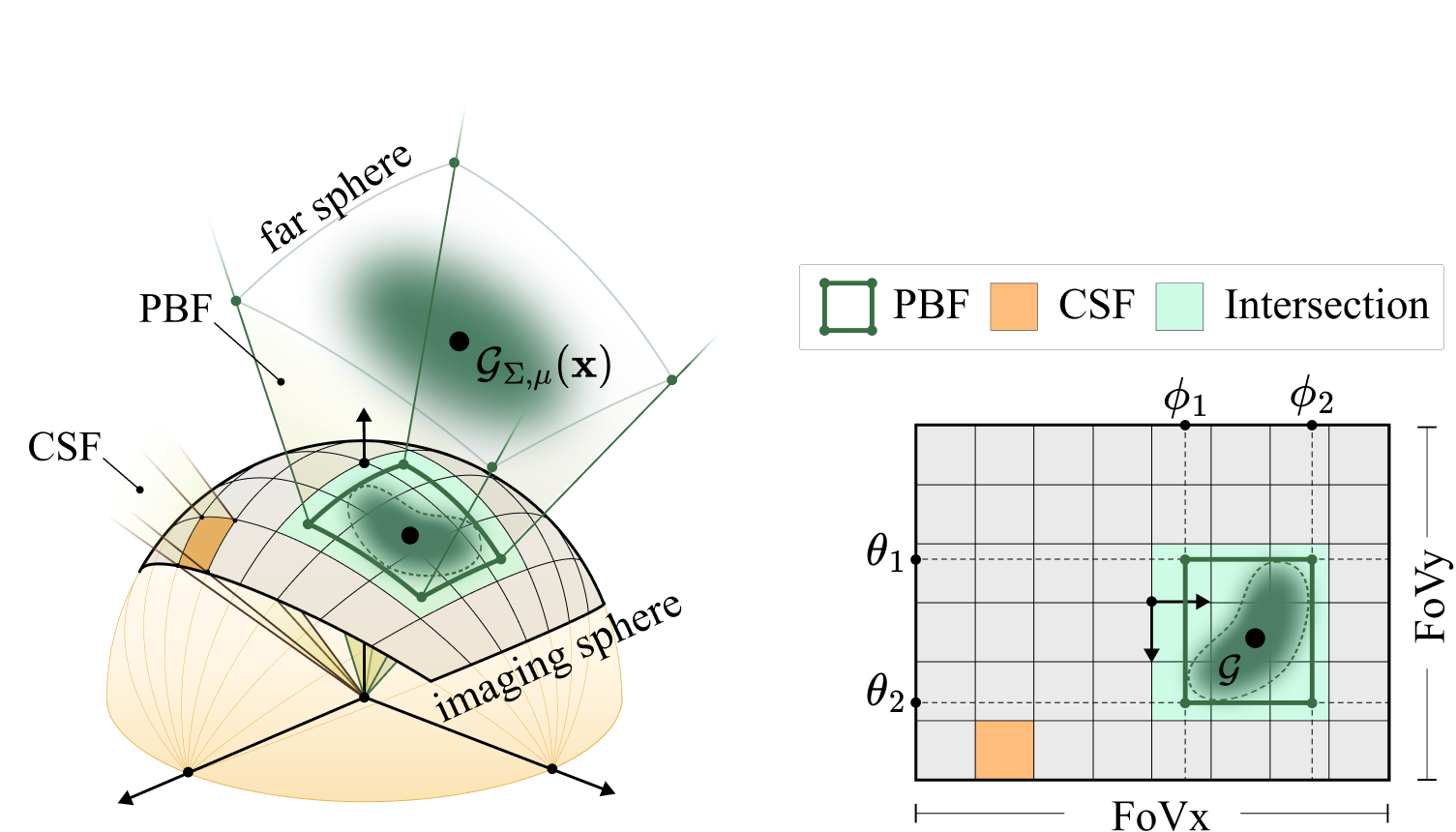}\vspace{-1em}
    \caption{\textbf{PBF-CSF Association.}
    (\textit{Left}) PBF defined by four planes tangent to a 3D Gaussian ellipsoid.
    (\textit{Right}) The intersection between the PBF and CSF, unfolded onto the BEAP imaging plane.}
    \label{fig:pbf}\vspace{-1em}
\end{wrapfigure}

\subsection{Ray-Particle Association with Exact Projective Geometry}
\label{sec:asso}



Given a ray-based Gaussian rendering method, both its final accuracy—particularly under large-FoV camera models—and its runtime efficiency depend heavily on the ray-particle association strategy, which
has been largely overlooked in prior works. To determine which Gaussians contribute to each ray, we simplify the association problem by mapping Camera Sub-Frustums (CSFs) to Particle Bounding Frustums (PBFs)—the tight frustums enclosing each Gaussian at a certain
distance from its center (Fig.~\ref{fig:pbf}). A Gaussian is assigned to a CSF if its PBF intersects the CSF. All Gaussians associated with a CSF are then considered contributors to every ray within that CSF.


To efficiently compute a PBF, we first define two spherical angles (see Fig.~\ref{fig:beap}) in the camera space:
\begin{equation}\label{eq:angle}
    \theta = \arctan\left(\frac{\mathbf{d}_{c,x}}{\mathbf{d}_{c,z}}\right), \quad 
    \phi = \arctan\left(\frac{\mathbf{d}_{c,y}}{\mathbf{d}_{c,z}}\right),
\end{equation}
where \((\mathbf{d}_{c,x}, \mathbf{d}_{c,y}, \mathbf{d}_{c,z})\) denotes the normalized direction vector pointing to the camera (view) space point \(\mathbf{d_c}\). The angles \(\theta\) and \(\phi\) correspond to the ray's incidence angles in the horizontal (\(xz\)) and vertical (\(yz\)) planes, respectively, each ranging from \(-\frac{\pi}{2}\) to \(\frac{\pi}{2}\).
We define a PBF by four planes — two aligned with the \(\theta\) angle and two with the \(\phi\) angle (see Fig.~\ref{fig:pbf}):
\begin{equation}
\begin{aligned}
    \mathbf{g}_{\theta_{1,2}} = \left(-1, 0, \tan \theta_{1,2}, 0\right)^\top,\quad
    \mathbf{g}_{\phi_{1,2}} = \left(0, -1, \tan \phi_{1,2}, 0\right)^\top,
\end{aligned}
\end{equation}
where \(\theta_1 < \theta < \theta_2\) and \(\phi_1 < \phi < \phi_2\) specify the angular limits.


Given a camera with extrinsic matrix \(V = \left[R_c \mid \mathbf{t}_c\right] \in \mathrm{SE}(3)\) and a canonical-to-world transformation matrix \(H\) (See Eq.~\ref{o2w}), we apply the inverse transpose of the camera-to-canonical matrix \((VH)^{-1}\) to parameterize the corresponding canonical planes using the angular limits defined in view space. This transformation enables us to effectively derive the constraints for the PBFs. Take the bounds of $\theta$ as an example, the canonical plane is given by:
\begin{equation}
\begin{aligned}
    \mathbf{g}_{u_{1,2}} &= (VH)^{\top} \mathbf{g}_{\theta_{1,2}} = - (VH)_0 + \tan \theta_{1,2} (VH)_2, 
\end{aligned}
\label{eq:planeTrans}
\end{equation}
where \((VH)_i\) denotes the \(i\)-th row of the matrix, expressed as a column vector.

Assuming a Gaussian ellipsoid defined by a \(\lambda\)-standard-deviation contour, we compute the angular bounds of the local frustum by solving the following quadratic equations, which enforce that the transformed planes are tangent to the ellipsoidal contour:
\begin{equation}
    \left((\lambda^2, \lambda^2, \lambda^2, -1) \circ \mathbf{g}_{u}^\top\right) \cdot \mathbf{g}_{u} = 0,
    \label{eq:ellipsoid_tangent}
\end{equation}
where $\circ$ denotes the Hadamard (element-wise) product, and $\lambda$ is a parameter that controls the spread of the contour, typically set to $3$. An identical constraint is applied to \(\mathbf{g}_v^\top\) in order to compute the corresponding angular bounds \(\phi_{1,2}\). The roots of these quadratic equations determine the tangent values of the angular bounds $[\theta_1, \theta_2]$ and $[\phi_1, \phi_2]$, which define the PBF angular limits. 
Notably, solving these quadratic equations under canonical space constraints respects the true 3D covariance $\Sigma_c$ and 3D mean $\boldsymbol{\mu}_c$ of the Gaussian in the camera space. Specifically, for the bounds of $\theta$, we have \(c=\tan\theta_{1,2}\) satisfying:
\begin{align}
    \mathcal{T}_{22} c^2 -2\mathcal{T}_{02}c+\mathcal{T}_{00}=0,\quad
\text{where}\quad   \mathcal{T}_{ij}=\lambda^2\Sigma_c^{i,j}-\left(\boldsymbol{\mu}_c\boldsymbol{\mu}_c^\top\right)^{i,j}. 
\end{align}
This formulation is efficient to solve (See details in Sec.~\ref{ap:sec:pbf}) and eliminates the need for intermediate conic approximations, as in EWA or UT, as well as the screen-space constraints in HTGS and 2DGS that limit their FoV.


\subsection{Bipolar Equiangular Projection (BEAP)}
\label{sec:bipolar:rep}
\begin{wrapfigure}{r}{0.5\textwidth}
    \centering
    \includegraphics[width=1.0\linewidth]{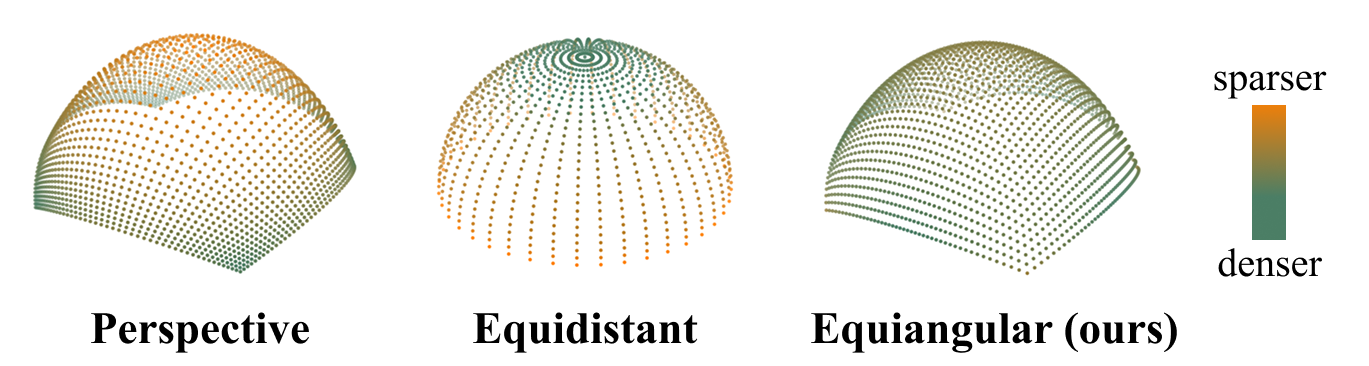}
    \caption{\textbf{Comparison of Ray Distributions under Varying Projections.} We project pixels from three imaging spaces onto the unit sphere to compare ray distributions. Among them, our BEAP projection achieves the most uniform coverage.}
    \vspace{-1.5em}
    \label{fig:ray:distr:compare}
\end{wrapfigure}
We introduce a novel BEAP image representation for several reasons. First, it effectively represents images from large-FoV cameras without introducing FoV loss. Second, it allows image tiles and their corresponding CSFs to share an identical parameterization in terms of two $\theta$ and two $\phi$ angles, enabling highly efficient PBF association and optimal GPU parallelization. Third, the resulting ray sampling achieves a more uniform distribution in the 3D space compared to conventional projective rendering, which empirically improves training efficiency and reconstruction quality.

Specifically, we evenly sample the rays using the spherical angular coordinates ($\theta$, $\phi$) (See Eq.~\ref{eq:angle}), and then map the rays back to a discrete \(\mathrm{w}\times\mathrm{h}\) image space using a linear transformation consisting of a scaling factor and a center shift as:
\begin{equation}
    (x, y) = \left\lfloor \frac{\mathrm{w}\theta}{\mathrm{FoV}_x} + \frac{\mathrm{w}+1}{2}, \frac{\mathrm{h}\phi}{\mathrm{FoV}_y} + \frac{\mathrm{h}+1}{2} \right\rfloor,
\end{equation}
where \(\mathrm{FoV}_x, \mathrm{FoV}_y\) indicate the camera's horizontal and vertical field of view.
As illustrated in Fig.~\ref{fig:ray:distr:compare}, sampling rays uniformly in \((\theta, \phi)\) leads to more balanced spatial coverage within the view frustum. Our sampling strategy contrasts with projective sampling, which disproportionately allocates samples toward peripheral regions, and equidistant projection, which oversamples near the image center.


\section{Experiments}

In this section, we evaluate our proposed method, 3DGEER, across multiple datasets featuring both pinhole and fisheye camera models. All experiments are conducted using an RTX 4090 GPU and 64GB of RAM as the default configuration, unless otherwise stated. Full implementation details are provided in the Sec.~\ref{ap:sec:more:impl:detail}.

\paragraph{Datasets.} 
We evaluate performance on generic distorted cameras, where projective errors are critical, on all scenes from Fisheye ZipNeRF~\citep{Barron2023:iccv:zipnerf}, five from ScanNet++~\citep{yeshwanth2023scannet++}, and two from Aria~\citep{lv2025photorealscenereconstructionegocentric}. 
ZipNeRF/ScanNet++ cameras have 180$^{\circ}$ diagonal FoV, and Aria has 110$^{\circ}$ circular FoV. 
For pinhole settings, we use Pinhole ZipNeRF and MipNeRF360~\citep{BarronMTHMS21:iccv:mipnerf}.

\paragraph{Evaluation Metrics and Protocol.} 
We report PSNR$\uparrow$, SSIM$\uparrow$, LPIPS$\downarrow$, runtime performance (FPS$\uparrow$) and Gaussian counts$\downarrow$. 
To avoid representation bias, all results are projected back to the image space for full-FoV evaluation after 30k iterations.




\begin{figure}[t]
    \centering
    \includegraphics[width=1\linewidth]{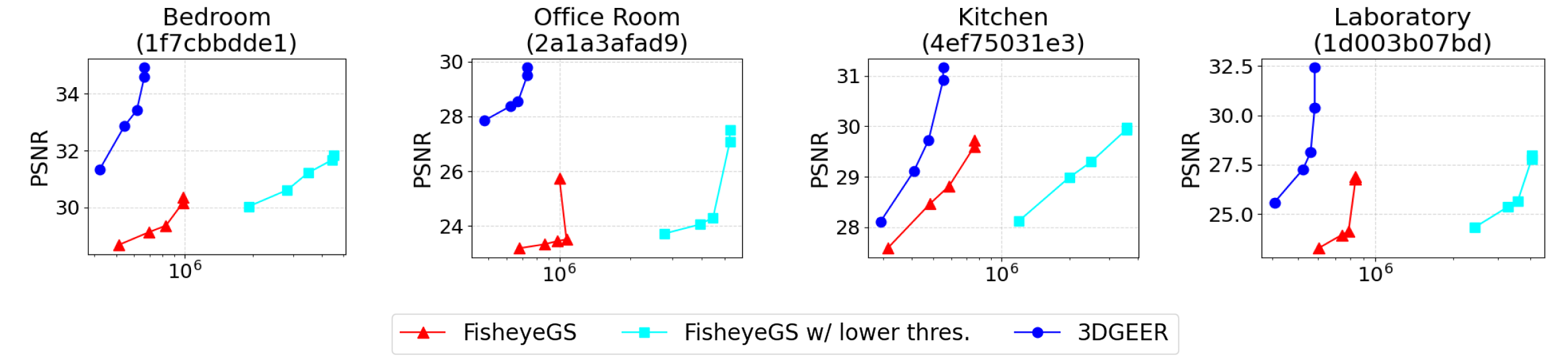}
    \caption{\textbf{PSNR$\uparrow$ Trends When Scaling-Up Gaussian Counts.} Brute-force scaling fails to close the gap in projective exactness (see full metrics in Fig.~\ref{fig:scaling-full} and Tab.~\ref{tab:scannet-scaling-details}).}
    \label{fig:scaling}
\end{figure}

\begin{figure}[t]
    \centering
    \includegraphics[width=1\linewidth]{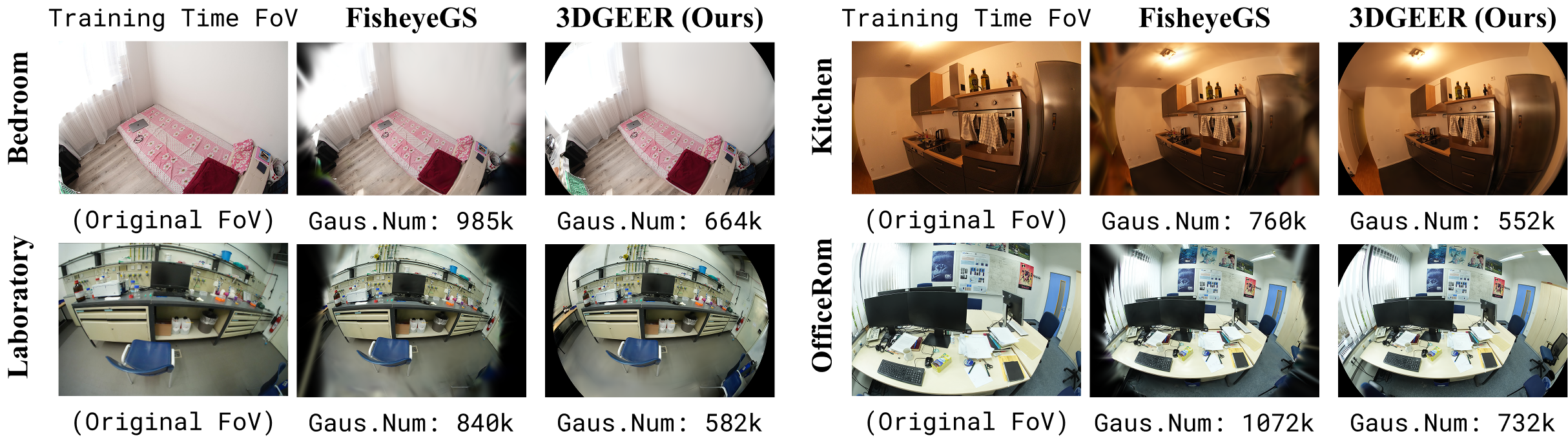}
\caption{\textbf{Extreme-FoV Generalization.} 3DGEER reconstructs complete scenes with fewer Gaussians, while FisheyeGS suffers from severe quality degradation.}
    \label{fig:ood}
\end{figure}

\begin{table}[t]
\centering
\small
\caption{\textbf{Comparison on the ScanNet++ Dataset.} Each method is trained either on the full FoV or only the central region, and then evaluated on the full, central, and peripheral regions to assess robustness under varying distortions. Region-wise metrics are weighted by pixel masks, while LPIPS is excluded since it is not directly separable by region (Tab.~\ref{tab:scannet-fov-details} shows full stats).}
\label{tab:scannetpp}
\setlength{\tabcolsep}{1pt}
\begin{tabular}{l|c|c|ccc|cc|cc}
\toprule
Method & Training & Training &
\multicolumn{3}{c|}{Full FoV} & \multicolumn{2}{c|}{Central}  & 
\multicolumn{2}{c}{Peripheral} \\
 &FoV&Space& PSNR $\uparrow$ & SSIM $\uparrow$ & LPIPS $\downarrow$ & PSNR $\uparrow$ & SSIM $\uparrow$ & PSNR $\uparrow$ & SSIM $\uparrow$ \\
\midrule
3DGS & Central & Persp.  & - & - & - & 31.26 & 0.948 & - & -\\
\midrule
FisheyeGS& Full & Eq.  & 27.81 & 0.946 & 0.139 & 32.44&\cellcolor{orange!50}\textbf{ 0.956} &23.28& 0.914 \\
EVER& Full & KB  & 29.47 & 0.924 & 0.167 & 29.93& 0.924& 28.72& 0.925 \\
3DGUT& Full & Eq. & \cellcolor{orange!30}{30.64} & 0.944 & 0.150  & 31.87& 0.945& \cellcolor{orange!30}{28.84}& \cellcolor{orange!30}{0.937} \\
\midrule
3DGEER (Ours) & Central & BEAP  & 29.93 & \cellcolor{orange!30}{0.949} & \cellcolor{orange!30}{0.130} & \cellcolor{orange!50}\textbf{32.86} & 0.954 & 26.64 & 0.930\\
3DGEER (Ours) & Full & BEAP  & \cellcolor{orange!50}\textbf{31.50} & \cellcolor{orange!50}\textbf{0.953} & \cellcolor{orange!50}\textbf{0.126} & \cellcolor{orange!30}{32.64}& \cellcolor{orange!30}{0.955}& \cellcolor{orange!50}\textbf{28.94}& \cellcolor{orange!50}\textbf{0.945}  \\
\bottomrule
\end{tabular}
\vspace{-1.5em}
\end{table}

\begin{table*}[t]
\centering
\small
\caption{\textbf{Comparison on the ZipNeRF Dataset.}  The experiments follow a cross-camera generalization setup, where each method is trained on either Fisheye (FE, $1/8$ resolution) or Pinhole (PH, $1/4$ resolution) data and evaluated separately on both Fisheye and Pinhole test sets (Tab.~\ref{tab:zipnerf-details} shows full stats).}
\setlength{\tabcolsep}{3.5pt}
\begin{tabular}{l|cc|cc|cc||cc|cc|cc}
\toprule
Train Data &
\multicolumn{2}{c|}{1/8 FE}  & 
\multicolumn{2}{c|}{1/8 FE} & 
\multicolumn{2}{c||}{1/8 FE} &
\multicolumn{2}{c|}{1/4 PH}  & 
\multicolumn{2}{c|}{1/4 PH} & 
\multicolumn{2}{c}{1/4 PH} \\
Test Data  &FE & PH &FE & PH &FE & PH &FE & PH &FE & PH &FE & PH\\
Method &
\multicolumn{2}{c|}{PSNR $\uparrow$}  & 
\multicolumn{2}{c|}{SSIM $\uparrow$} & 
\multicolumn{2}{c||}{LPIPS $\downarrow$} &
\multicolumn{2}{c|}{PSNR $\uparrow$}  & 
\multicolumn{2}{c|}{SSIM $\uparrow$} & 
\multicolumn{2}{c}{LPIPS $\downarrow$} \\
\midrule
FisheyeGS & 23.18 &26.44 &0.858&0.868&0.211&0.239 & 19.43 & 26.61& 0.791&\textbf{0.889}&0.247&\textbf{0.212}  \\
EVER & 25.17 & 26.14& 0.880 &0.851 &0.153 &0.237 & 22.90&26.45 & 0.835&0.862& \textbf{0.207}&0.222 \\
3DGUT & 24.77 & 25.59& 0.879&0.804&0.183&0.324 & 18.61 & 25.96& 0.662&0.841&0.300&0.270 \\
\midrule
3DGEER (Ours) & \textbf{26.24} & \textbf{27.62} &\textbf{0.897}&\textbf{0.888} & \textbf{0.140} & \textbf{0.214} & \textbf{23.39}&\textbf{27.61}&\textbf{0.846}&0.863 &0.209&0.254\\
\bottomrule
\end{tabular}
\label{tab:zipnerf}
\vspace{-1.5em}
\end{table*}



\subsection{Scaling Gaussians vs. Projective Exactness in Wide-FoV Cameras}
\label{sec:scalingup}

\citep{celarek2025does3dgaussiansplatting} argues that simply scaling the number of Gaussians (up to 1M) suffices to mitigate the linear approximation gap between splat-based and ray-based methods. However, this strategy has only been demonstrated on small object-centric scenes and does not address the challenges posed by wide-FoV distortions. To further investigate, we scale up the splat-based fisheye method (FisheyeGS) on ScanNet++ scenes by lowering its densification threshold, producing 3.6–5M Gaussians (H200). 

As shown in Fig.~\ref{fig:scaling}, the red curve (FisheyeGS) and blue curve (3DGEER) show a clear PSNR gap under the same Gaussian count, implying an approximation gap in projective exactness (from 4k to 30k iterations, after which densification typically saturates). The light-blue curve further shows the scaled-up FisheyeGS, despite its substantial increase in memory and computation, performance saturates at 29.3 PSNR, whereas ours achieves 32.1 with only 550-700K Gaussians. Full statistics are reported in Tab.~\ref{tab:scannet-scaling-details}.


\subsection{Comprehensive Evaluation on All-FoV Novel View Synthesis}
We first evaluate 3DGEER against representative baselines on both fisheye and pinhole benchmarks. To ensure comprehensive evaluations, we incorporate additional analysis tailored to each dataset.

\textbf{ScanNet++.} As shown in Tab.~\ref{tab:scannetpp}, 3DGEER consistently outperforms all baselines across metrics.
To examine distortion effects, we separately evaluate central and peripheral regions (details in Sec.~\ref{ap:sec:more:impl:detail}). Competing methods perform reasonably at the center but degrade at the periphery, whereas 3DGEER maintains high fidelity throughout, producing sharper boundaries and fewer artifacts (Fig.~\ref{fig:vis:main:scannetpp}).
Notably, even when trained only on central crops, 3DGEER still surpasses FisheyeGS across all regions. 

Qualitatively (Fig.~\ref{fig:ood}), we evaluate extreme-FoV generalization by training on the original dataset FoV ($180^\circ$ diagonal FoV) but testing at much wider FoVs ($180^{\circ}$ FoV, circular shape). 3DGEER reconstructs complete scenes with fewer Gaussians, while splatting-based methods such as FisheyeGS degrade significantly at the periphery (see more visuals in Fig.~\ref{fig:ext:zoom}-\ref{fig:ext:full} with extreme FoV).


\begin{wraptable}{r}{0.53\textwidth}
\centering
\small
\caption{\textbf{Quantitative Results on the MipNeRF360 Dataset.} In addition to quality metrics, FPS is reported using an RTX 4090 as the default benchmark. Numbers sourced from original works are marked with $^\dagger$ (measured on RTX 6000 Ada) and $^\ddagger$ (measured on RTX 5090).}
\setlength{\tabcolsep}{2pt}
\begin{tabular}{l|cccc}
\toprule
 Method & PSNR $\uparrow$ & SSIM $\uparrow$ & LPIPS $\downarrow$& FPS$\uparrow$  \\
\midrule
3DGS & 27.21&0.815&\cellcolor{orange!30}0.214& \cellcolor{orange!50}\textbf{343} \\

EVER  &\cellcolor{orange!30}27.51&\cellcolor{orange!50}\textbf{0.825}& 0.233 & 36 \\
3DGRT   & 27.20&0.818&0.248& 52$^\dagger$ / 68$^\ddagger$ \\
3DGUT  & 27.26&0.810&0.218& 265$^\dagger$ / 317$^\ddagger$ \\
\midrule
3DGEER (Ours)  & \cellcolor{orange!50}\textbf{27.76} & \cellcolor{orange!30}0.821 & \cellcolor{orange!50}\textbf{0.210} & \cellcolor{orange!30}327\\
\bottomrule
\end{tabular}
\vspace{-1em}
\label{tab:mipnerf360}
\end{wraptable}

\textbf{ZipNeRF}.
The dataset provides frame-wise aligned pinhole and fisheye views. Beyond standard fisheye evaluation, we conduct \textit{cross-camera generalization}—training on fisheye and testing on pinhole, and vice versa—to assess robustness. As shown in Tab.~\ref{tab:zipnerf}, 3DGEER consistently outperforms all methods across metrics, fully exploiting large-FoV training data via geometrically exact rendering. Trained on pinhole, 3DGEER still yields notable PSNR gains on both views. In contrast, FisheyeGS performs well only on pinhole-to-pinhole, and 3DGUT drops sharply when tested on fisheye after pinhole training. Overall, EVER and 3DGEER generalize better due to geometric exactness, while 3DGEER is superior in all cases. Qualitatively (Fig.~\ref{fig:vis:main:zipnerf}), 3DGEER recovers finer details and shows stronger resistance to artifacts in the periphery in the most challenging pinhole-to-fisheye setting.

\begin{figure}[t]
\centering
\includegraphics[width=1.0\linewidth]{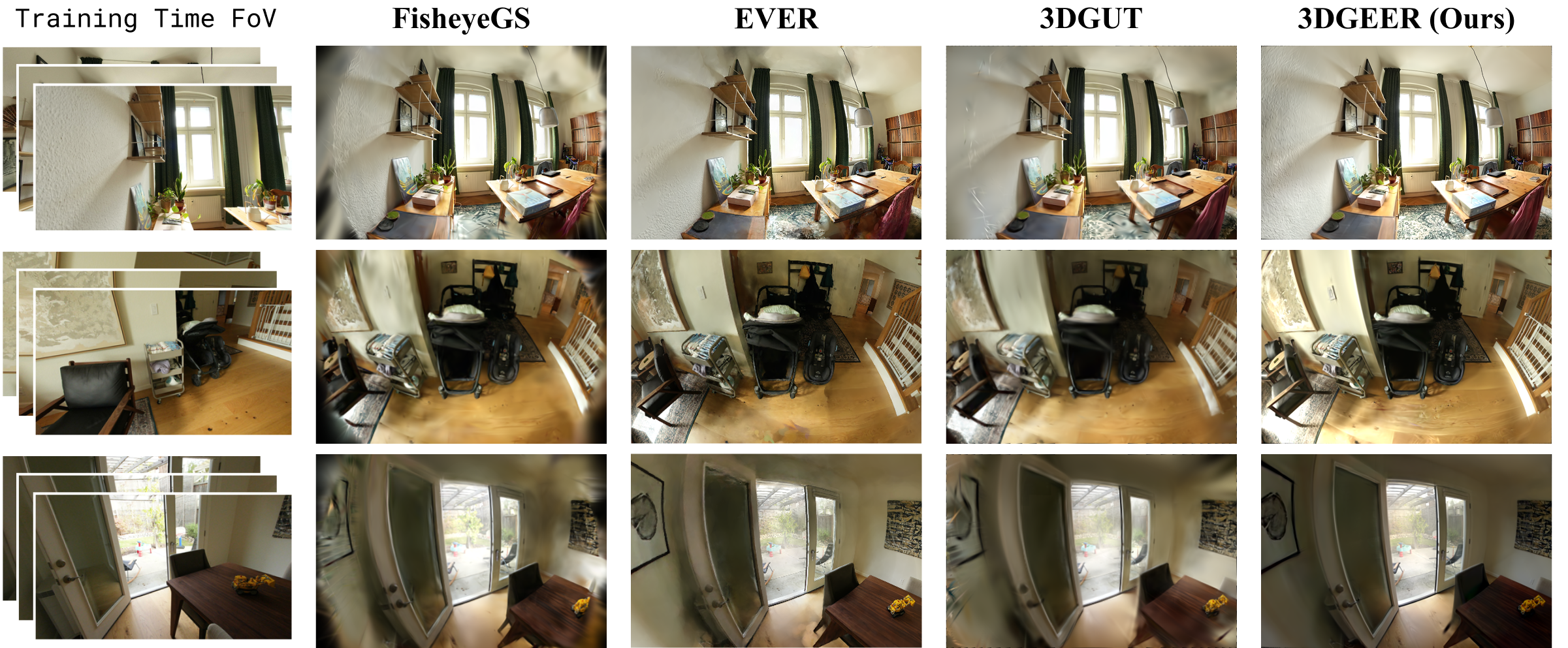}
    \vspace{-1.5em}
    \caption{\textbf{Qualitative Results on the ZipNeRF Dataset with Cross-Camera Generalization.}
Performance differences are pronounced in the most challenging Pinhole (PH) train – Fisheye (FE) test setup, where our method demonstrates a significant advantage.}
    \label{fig:vis:main:zipnerf}
    \vspace{-1em}
\end{figure}

\begin{table}[t]
\centering
\small
\caption{\textbf{Training-Time Transmittance Replacement on ScanNet++.} Using splatting-based transmittance increases the Gaussian count due to larger approximation error, yet the perceptual gap remains compared with our projective-exact formulation (Tab.~\ref{tab:scannetpp-trans-details} shows the full stats).}
\label{tab:scannetpp_trans}
\setlength{\tabcolsep}{1pt}
\begin{tabular}{l|ccc|cc|cc|c}
\toprule
Transmittance  &
\multicolumn{3}{c|}{Full FoV} & \multicolumn{2}{c|}{Central}  & 
\multicolumn{2}{c|}{Peripheral} & Avg. Gaussian\\
Method & PSNR $\uparrow$ & SSIM $\uparrow$ & LPIPS $\downarrow$ & PSNR $\uparrow$ & SSIM $\uparrow$ & PSNR $\uparrow$ & SSIM $\uparrow$ & Number (k) $\downarrow$\\
\midrule
3DGS Splats & 22.86 & 0.878 & 0.177 & 27.16 & 0.884 & 18.56 & 0.855 & 1396.4\\
FisheyeGS Splats & 27.90 & 0.948 & 0.141 & 32.49 & 0.957 & 23.36 & 0.917 & 920.6\\
Projective-Exact (Ours) & \bf{31.50} & \bf{0.953} & \bf{0.126} & \bf{32.64}& \bf{0.955}& \bf{28.94}& \bf{0.945} & \textbf{591.8} \\
\bottomrule
\end{tabular}
\end{table}

We further evaluate 3DGEER on the \textbf{MipNeRF360} dataset under standard narrow-FoV conditions. 
As shown in Tab.~\ref{tab:mipnerf360}, 3DGEER sets the new state-of-the-art across all the metrics. 
It is over $5\times$ faster than exact ray-marching/tracing methods such as EVER and 3DGRT, and the only projective-exact method with runtime comparable to 3DGS. 
Moreover, it consistently surpasses projective-approximation baselines (3DGS, 3DGUT) in PSNR, SSIM, and LPIPS. 
Additional quantitative results on \textbf{Aria}, an egocentric fisheye dataset (moderate $110^\circ$ FoV, circular shape) with low-light imaging, and more visuals are provided in the Sec.~\ref{sec:ego}.



\subsection{Comprehensive Runtime Analysis}
We further analyze runtime {on RTX 4090} by comparing 3DGEER with splatting-based methods (3DGS on \textbf{MipNeRF360} and FisheyeGS on \textbf{ScanNet++}). Although our ray-based rendering incurs slightly higher cost in the rendering stage, PBF’s efficient ray-particle association greatly reduces Gaussian duplication and sorting overhead (See detailed runtime in Tab.~\ref{tab:ablation:asso:time}), resulting in overall efficiency comparable to optimized splatting pipelines.

Specifically, PBF yields very tight bounds, with each tile associated with only $\sim$475 Gaussians on average—3–5$\times$ fewer than EWA or UT (Tab.~\ref{tab:ablation:asso}). This directly explains the 2.5–5$\times$ runtime speedup in the association stage (Tab.~\ref{tab:ablation:timing}). In contrast, projection-based schemes (EWA, UT) conservatively overestimate screen-space extent, leading to excessive overdraw and inefficient memory access. Even when accelerated by SnugBox~\citep{hanson2025speedysplatfast3dgaussian}, they remain constrained by miscentered density approximations, often exacerbating artifacts (Fig.~\ref{fig:illus:asso:diff}). Beyond speed, PBF preserves geometric accuracy and numerical stability, eliminating projective approximation errors. As shown in Tab.~\ref{tab:ablation:asso}, even when {ray-based} rendering Gaussian fields trained under other association schemes, PBF remains artifact-free.

\begin{figure*}[t]
    \centering
    \includegraphics[width=1\linewidth]{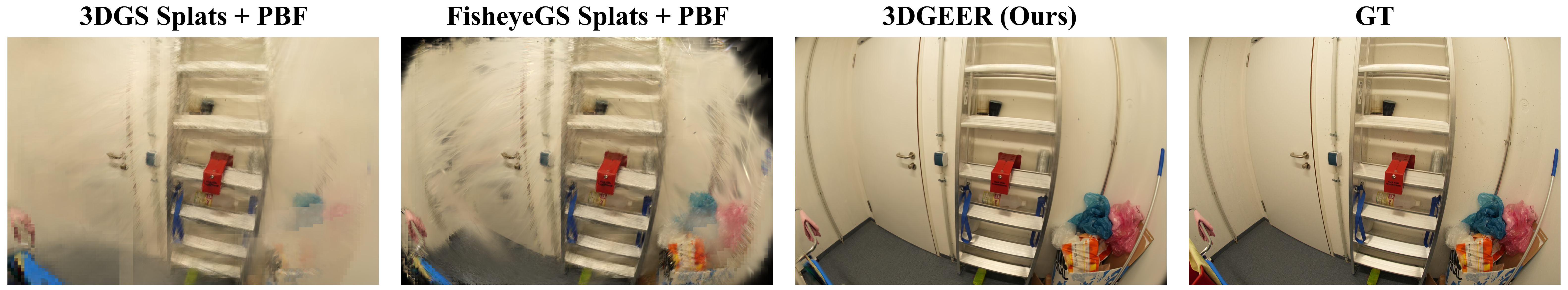}\vspace{-0.5em}
    \caption{\textbf{Rendering-Time Transmittance Func. Replacement.} Replacing projective-exact transmittance with the splatting approximation only at test time leads to mismatched culling.}
    \label{fig:scannet-exact-abl}
\end{figure*}

\begin{table}[t]
\centering
\small
\caption{\textbf{BEAP vs. Alternative Supervision Schemes.} BEAP consistently outperforms other projections across all metrics on ScanNet++. (Tab.~\ref{tab:scannet-beap-details} shows full stats).}
\label{tab:scannetpp_beap}
\setlength{\tabcolsep}{1pt}
\begin{tabular}{l|c|c|ccc|cc|cc}
\toprule
Method & Training & Training &
\multicolumn{3}{c|}{Full FoV} & \multicolumn{2}{c|}{Central}  & 
\multicolumn{2}{c}{Peripheral} \\
 &FoV&Space& PSNR $\uparrow$ & SSIM $\uparrow$ & LPIPS $\downarrow$ & PSNR $\uparrow$ & SSIM $\uparrow$ & PSNR $\uparrow$ & SSIM $\uparrow$ \\
\midrule
3DGEER w/o BEAP & Central & Persp.  & 29.84 & 0.943 & 0.131 & 32.21 & 0.951 & 26.23 & 0.915\\
\midrule
3DGEER w/o BEAP & Full & Persp. & 21.11 & 0.853 & 0.300 & 21.32 & 0.855 & 20.46 & 0.846 \\
3DGEER w/o BEAP & Full & Eq.  & 31.05 & 0.948 & 0.135 & 32.21 & 0.952 & 28.56& 0.935\\
3DGEER (Ours) & Full & BEAP  & \bf{31.50} & \bf{0.953} & \bf{0.126} & \bf{32.64}& \bf{0.955}& \bf{28.94}& \bf{0.945}  \\
\bottomrule
\end{tabular}
\vspace{-1.5em}
\end{table}

{
\subsection{Local Affine Approximation vs.~Projective-Exact Integral}
To isolate the effect of our projective-exact transmittance formulation (Eq.~\ref{eq:transmit}), we replace it with the commonly used local affine splats (i.e., 2D conic as approximated in 3DGS and FisheyeGS).

\noindent\textbf{Rendering-Time Approx.}
To better highlight the difference between the two rendering formulations, we replace the transmittance with the splatting-based approximation only at test time, while keeping all projective-exact components during training. As shown in Fig.~\ref{fig:scannet-exact-abl}, when the transmittance relies on the splat approximation but the associations are computed using our exact projective geometry, their inconsistency leads to mismatched culling. This experiment magnifies and explains the gridline artifact reported in 3DGUT (Fig.~\ref{fig:illus:asso:diff}), where the mismatch arises in the opposite direction—ray-based transmittance combined with conic-approximate association.

\noindent\textbf{Training-Time Approx.}
As shown in Tab.~\ref{tab:scannetpp_trans} experimented on \textbf{ScanNet++}, when splatting-based transmittance is also used during training while association remains projective-exact, the optimizer and densification partially compensate for the approximation error and the cross-view inconsistency introduced by splatting. The larger loss induces stronger gradients throughout training, ultimately leading to a larger (1.6–2.4$\times$) number of Gaussians being added to fit the scene. However, despite this increased Gaussian count, the perceptual gap introduced by the approximation remains, indicating that the error cannot be fully corrected through densification alone.
}

{
\subsection{BEAP On/Off Study}
We further compare BEAP supervision against other widely used projection models on \textbf{ScanNet++}. In this experiment, we keep the projective-exact integral and PBF for Gaussian transmittance and association fixed, and vary only the ray-sampling distribution used for supervision (pinhole, equidistant, and BEAP). As shown in Tab.~\ref{tab:scannetpp_beap} and Fig.~\ref{fig:scannet-beap-abl}, our BEAP-based supervision outperforms all alternatives across every metric. Its more uniform sampling provides smoother and more consistent gradients, enabling better convergence on fine details.

The discrepancy becomes more pronounced under pinhole cameras. Under a fixed resolution, a well-balanced information distribution can matter more than simply using a larger FoV. Notably, when training in the perspective domain, the full FoV can even underperform the central region (see Tab.~\ref{tab:scannetpp_beap}, rows 1–2) because large-FoV images (Fig.~\ref{fig:ext:full}, \textit{Left}) have highly non-uniform angular sampling: informative content is concentrated at the center, while the less informative periphery occupies most of the image area and becomes disproportionately oversampled.


}

\section{Conclusions}

We presented 3DGEER, a complete, closed-form solution for projective-exact volumetric Gaussian rendering derived from first principles. To complement this formulation, we introduced an efficient PBF computation method for ray-particle association, enabling high-speed rendering while maintaining exactness in projective geometry. Additionally, we proposed the BEAP image representation to facilitate effective color supervision under wide-FoV camera models. Extensive experiments demonstrate that 3DGEER consistently outperforms state-of-the-art methods across multiple datasets, with particularly strong results in challenging wide-FoV scenarios.

\clearpage

\subsection*{Ethics Statement}
\addcontentsline{toc}{section}{\protect\numberline{a} Ethics Statement}
This work builds on publicly available datasets (ScanNet++, ZipNeRF, Aria, and MipNeRF) and does not involve any human subjects, private data, or personally identifiable information. The methods developed are intended for advancing 3D scene reconstruction and rendering research, with potential applications in XR, robotics and autonomous driving. We do not foresee direct harmful consequences, but we note that any misuse of high-fidelity 3D reconstruction techniques could raise privacy concerns. To mitigate this, we restrict our experiments to research datasets that are widely used in the vision community.

\subsection*{Reproducibility Statement}
\addcontentsline{toc}{section}{\protect\numberline{b} Reproducibility Statement}
We make significant efforts to ensure reproducibility. All mathematical derivations and proofs are provided in the Appendix (Secs.~\ref{ap:sec:ray:int:}, \ref{ap:sec:bp:grad} and \ref{app:asso}). The datasets used in our experiments are publicly available (ScanNet++, ZipNeRF, Aria, and MipNeRF). Details of preprocessing, training protocols, and evaluation metrics are described in Sec.~\ref{ap:sec:more:impl:detail}. Our implementation will be released with training and evaluation scripts upon publication to further facilitate reproducibility.



\bibliography{iclr2026_conference}

@inproceedings{westover1990footprint,
  title={Footprint evaluation for volume rendering},
  author={Westover, Lee},
  booktitle={Proceedings of the 17th annual conference on Computer graphics and interactive techniques},
  pages={367--376},
  year={1990}
}

@inproceedings{ren2002object,
  title={Object space EWA surface splatting: A hardware accelerated approach to high quality point rendering},
  author={Ren, Liu and Pfister, Hanspeter and Zwicker, Matthias},
  booktitle={Computer graphics forum (CGF)},
  year={2002}
}

@inproceedings{mao1995splatting,
  title={Splatting of curvilinear volumes},
  author={Mao, Xiaoyang and Hong, Lichan and Kaufman, Arie},
  booktitle={Proceedings Visualization'95},
  pages={61--68},
  year={1995},
  organization={IEEE}
}

@inproceedings{yeshwanth2023scannet++,
  title={Scannet++: A high-fidelity dataset of 3d indoor scenes},
  author={Yeshwanth, Chandan and Liu, Yueh-Cheng and Nie{\ss}ner, Matthias and Dai, Angela},
  booktitle={Proceedings of the IEEE/CVF International Conference on Computer Vision (ICCV)},
  year={2023}
}

@inproceedings{keselman2022approximate,
  title={Approximate differentiable rendering with algebraic surfaces},
  author={Keselman, Leonid and Hebert, Martial},
  booktitle={European Conference on Computer Vision (ECCV)},
  year={2022},
}

@inproceedings{Huang:2024:siggraph:2dgs,
  author       = {Binbin Huang and
                  Zehao Yu and
                  Anpei Chen and
                  Andreas Geiger and
                  Shenghua Gao},
  editor       = {Andres Burbano and
                  Denis Zorin and
                  Wojciech Jarosz},
  title        = {2D Gaussian Splatting for Geometrically Accurate Radiance Fields},
  booktitle    = {{{ACM} {SIGGRAPH} Conference Papers}},
  year         = {2024},
}

@inproceedings{celarek2025does3dgaussiansplatting,
  title={Does 3D Gaussian Splatting Need Accurate Volumetric Rendering?},
  author={Celarek, Adam and Kopanas, George and Drettakis, George and Wimmer, Michael and Kerbl, Bernhard},
  booktitle={Computer Graphics Forum (CGF)},
  year={2025}
}

@inproceedings{hahlbohm2025efficientperspectivecorrect3dgaussian,
  title={Efficient Perspective-Correct 3D Gaussian Splatting Using Hybrid Transparency},
  author={Hahlbohm, Florian and Friederichs, Fabian and Weyrich, Tim and Franke, Linus and Kappel, Moritz and Castillo, Susana and Stamminger, Marc and Eisemann, Martin and Magnor, Marcus},
  booktitle={Computer Graphics Forum (CGF)},
  year={2025}
}

@inproceedings{lv2025photorealscenereconstructionegocentric,
      title={Photoreal Scene Reconstruction from an Egocentric Device}, 
      author={Zhaoyang Lv and Maurizio Monge and Ka Chen and Yufeng Zhu and Michael Goesele and Jakob Engel and Zhao Dong and Richard Newcombe},
      year={2025},
      booktitle    = {{ACM} {SIGGRAPH} Conference Papers},
}

@article{MoenneLoccoz:2024:tog:3dgrt,
  author       = {Nicolas Mo{\"{e}}nne{-}Loccoz and
                  Ashkan Mirzaei and
                  Or Perel and
                  Riccardo de Lutio and
                  Janick Martinez Esturo and
                  Gavriel State and
                  Sanja Fidler and
                  Nicholas Sharp and
                  Zan Gojcic},
  title        = {3D Gaussian Ray Tracing: Fast Tracing of Particle Scenes},
  journal      = {ACM Transactions on Graphics (TOG)},
  year         = {2024},
}

@article{wu2024:cvpr:3dgut,
  title={3DGUT: Enabling Distorted Cameras and Secondary Rays in Gaussian Splatting},
  author={Wu, Qi and Esturo, Janick Martinez and Mirzaei, Ashkan and Moenne-Loccoz, Nicolas and Gojcic, Zan},
  journal={Conference on Computer Vision and Pattern Recognition (CVPR)},
  year={2025},
}

@article{Yu2024GOF,
  author    = {Yu, Zehao and Sattler, Torsten and Geiger, Andreas},
  title     = {Gaussian Opacity Fields: Efficient Adaptive Surface Reconstruction in Unbounded Scenes},
  journal   = {ACM Transactions on Graphics (TOG)},
  year      = {2024},
}

@article{Guo:etal:cvpr2025:dac,
  author       = {Yuliang Guo and
                  Sparsh Garg and
                  S. Mahdi H. Miangoleh and
                  Xinyu Huang and
                  Liu Ren},
  title        = {Depth Any Camera: Zero-Shot Metric Depth Estimation from Any Camera},
  journal={Conference on Computer Vision and Pattern Recognition (CVPR)},
  year={2025}
}

@inproceedings{zwicker2001ewa,
  title={EWA volume splatting},
  author={Zwicker, Matthias and Pfister, Hanspeter and Van Baar, Jeroen and Gross, Markus},
  booktitle={Proceedings Visualization, 2001. VIS'01.},
  pages={29--538},
  year={2001},
  organization={IEEE}
}

@article{kannala2006generic,
  title={A generic camera model and calibration method for conventional, wide-angle, and fish-eye lenses},
  author={Kannala, Juho and Brandt, Sami S},
  journal={IEEE Transactions on Pattern Analysis and Machine Intelligence (TPAMI)},
  year={2006},
}

@inproceedings{Barron2023:iccv:zipnerf,
  author       = {Jonathan T. Barron and
                  Ben Mildenhall and
                  Dor Verbin and
                  Pratul P. Srinivasan and
                  Peter Hedman},
  title        = {Zip-NeRF: Anti-Aliased Grid-Based Neural Radiance Fields},
  booktitle    = {IEEE International Conference on Computer Vision, (ICCV)},
  year         = {2023},
}

@inproceedings{BarronMTHMS21:iccv:mipnerf,
  author       = {Jonathan T. Barron and
                  Ben Mildenhall and
                  Matthew Tancik and
                  Peter Hedman and
                  Ricardo Martin{-}Brualla and
                  Pratul P. Srinivasan},
  title        = {Mip-NeRF: {A} Multiscale Representation for Anti-Aliasing Neural Radiance
                  Fields},
  booktitle    = {IEEE International Conference on Computer Vision (ICCV)
                  },
  year         = {2021}
}

@article{DuckworthHRZTLS:siggraph:2024:smerf,
  author       = {Daniel Duckworth and
                  Peter Hedman and
                  Christian Reiser and
                  Peter Zhizhin and
                  Jean{-}Fran{\c{c}}ois Thibert and
                  Mario Lucic and
                  Richard Szeliski and
                  Jonathan T. Barron},
  title        = {{SMERF:} Streamable Memory Efficient Radiance Fields for Real-Time
                  Large-Scene Exploration},
  journal      = {ACM Transactions on Graphics (TOG)},
  year         = {2024},
}

@article{mai2024ever,
  title={{EVER:} Exact Volumetric Ellipsoid Rendering for Real-time View Synthesis},
  author={Mai, Alexander and Hedman, Peter and Kopanas, George and Verbin, Dor and Futschik, David and Xu, Qiangeng and Kuester, Falko and Barron, Jonathan T and Zhang, Yinda},
  journal = {IEEE International Conference on Computer Vision (ICCV)},
  year={2025}
}

@article{liao2024fisheyegs,
  title={Fisheye-GS: Lightweight and Extensible Gaussian Splatting Module for Fisheye Cameras},
  author={Liao, Zimu and Chen, Siyan and Fu, Rong and Wang, Yi and Su, Zhongling and Luo, Hao and Ma, Li and Xu, Linning and Dai, Bo and Li, Hengjie and others},
  journal={ECCV Workshop},
  year={2024}
}

@article{kerbl20233dgs,
  title={3D Gaussian Splatting for Real-Time Radiance Field Rendering.},
  author={Kerbl, Bernhard and Kopanas, Georgios and Leimk{\"u}hler, Thomas and Drettakis, George},
  journal={ACM Transactions on Graphics (TOG)},
  year={2023}
}

@article{huang2024gs++,
  title={Gs++: Error analyzing and optimal gaussian splatting},
  author={Huang, Letian and Bai, Jiayang and Guo, Jie and Guo, Yanwen},
  journal={CoRR},
  year={2024}
}

@inproceedings{schoenberger2016sfm,
    author={Sch\"{o}nberger, Johannes Lutz and Frahm, Jan-Michael},
    title={Structure-from-Motion Revisited},
    booktitle={Conference on Computer Vision and Pattern Recognition (CVPR)},
    year={2016},
}

@inproceedings{mei2007single,
  title={Single view point omnidirectional camera calibration from planar grids},
  author={Mei, Christopher and Rives, Patrick},
  booktitle={IEEE International Conference on Robotics and Automation (ICRA)},
  year={2007},
}

@inproceedings{rashed2021generalized,
  title={Generalized object detection on fisheye cameras for autonomous driving: Dataset, representations and baseline},
  author={Rashed, Hazem and Mohamed, Eslam and Sistu, Ganesh and Kumar, Varun Ravi and Eising, Ciaran and El-Sallab, Ahmad and Yogamani, Senthil},
  booktitle={Proceedings of the IEEE/CVF Winter Conference on Applications of Computer Vision, (WACV)},
  pages={2272--2280},
  year={2021}
}

@inproceedings{courbon2007generic,
  title={A generic fisheye camera model for robotic applications},
  author={Courbon, Jonathan and Mezouar, Youcef and Eckt, Laurent and Martinet, Philippe},
  booktitle={IEEE/RSJ International Conference on Intelligent Robots and Systems, (IROS)},
  pages={1683--1688},
  year={2007}
}

@article{kumar2023surround,
  title={Surround-view fisheye camera perception for automated driving: Overview, survey \& challenges},
  author={Kumar, Varun Ravi and Eising, Ciar{\'a}n and Witt, Christian and Yogamani, Senthil Kumar},
  journal={IEEE Transactions on Intelligent Transportation Systems},
  volume={24},
  number={4},
  pages={3638--3659},
  year={2023},
  publisher={IEEE}
}

@inproceedings{zhang2016benefit,
  title={Benefit of large field-of-view cameras for visual odometry},
  author={Zhang, Zichao and Rebecq, Henri and Forster, Christian and Scaramuzza, Davide},
  booktitle={2016 IEEE International Conference on Robotics and Automation (ICRA)},
  pages={801--808},
  year={2016},
  organization={IEEE}
}

@inproceedings{yogamani2019woodscape,
  title={Woodscape: A multi-task, multi-camera fisheye dataset for autonomous driving},
  author={Yogamani, Senthil and Hughes, Ciar{\'a}n and Horgan, Jonathan and Sistu, Ganesh and Varley, Padraig and O'Dea, Derek and Uric{\'a}r, Michal and Milz, Stefan and Simon, Martin and Amende, Karl and others},
  booktitle={Proceedings of the IEEE/CVF International Conference on Computer Vision (ICCV)},
  pages={9308--9318},
  year={2019}
}

@inproceedings{yu2024mip,
  title={Mip-splatting: Alias-free 3d gaussian splatting},
  author={Yu, Zehao and Chen, Anpei and Huang, Binbin and Sattler, Torsten and Geiger, Andreas},
  booktitle={Proceedings of the IEEE/CVF conference on computer vision and pattern recognition, (CVPR)},
  pages={19447--19456},
  year={2024}
}

@inproceedings{mallick2024taming,
  title={Taming 3dgs: High-quality radiance fields with limited resources},
  author={Mallick, Saswat Subhajyoti and Goel, Rahul and Kerbl, Bernhard and Steinberger, Markus and Carrasco, Francisco Vicente and De La Torre, Fernando},
  booktitle={SIGGRAPH Asia Conference Papers},
  year={2024}
}

@inproceedings{yogamani2024fisheyebevseg,
  title={FisheyeBEVSeg: Surround View Fisheye Cameras based Bird's-Eye View Segmentation for Autonomous Driving},
  author={Yogamani, Senthil and Unger, David and Narayanan, Venkatraman and Kumar, Varun Ravi},
  booktitle={Proceedings of the IEEE/CVF Conference on Computer Vision and Pattern Recognition, (CVPR)},
  pages={1331--1334},
  year={2024}
}

@article{sekkat2022synwoodscape,
  title={SynWoodScape: Synthetic surround-view fisheye camera dataset for autonomous driving},
  author={Sekkat, Ahmed Rida and Dupuis, Yohan and Kumar, Varun Ravi and Rashed, Hazem and Yogamani, Senthil and Vasseur, Pascal and Honeine, Paul},
  journal={IEEE Robotics and Automation Letters (RAL)},
  volume={7},
  number={3},
  pages={8502--8509},
  year={2022},
  publisher={IEEE}
}

@inproceedings{hanson2025speedysplatfast3dgaussian,
  title={Speedy-splat: Fast 3d gaussian splatting with sparse pixels and sparse primitives},
  author={Hanson, Alex and Tu, Allen and Lin, Geng and Singla, Vasu and Zwicker, Matthias and Goldstein, Tom},
  booktitle={Proceedings of the Computer Vision and Pattern Recognition Conference (CVPR)},
  pages={21537--21546},
  year={2025}
}

@article{Condor2025tog,
  author       = {Jorge Condor and
                  S{\'{e}}bastien Speierer and
                  Lukas Bode and
                  Aljaz Bozic and
                  Simon Green and
                  Piotr Didyk and
                  Adri{\'{a}}n Jarabo},
  title        = {Don't Splat your Gaussians: Volumetric Ray-Traced Primitives
                  for Modeling and Rendering Scattering and Emissive Media},
  journal      = {ACM Transactions on Graphics (TOG)},
  year         = {2025}
}

@inproceedings{yan2024multi,
  title={Multi-scale 3d gaussian splatting for anti-aliased rendering},
  author={Yan, Zhiwen and Low, Weng Fei and Chen, Yu and Lee, Gim Hee},
  booktitle={Proceedings of the IEEE/CVF Conference on Computer Vision and Pattern Recognition, (CVPR)},
  pages={20923--20931},
  year={2024}
}

@inproceedings{lu2024scaffold,
  title={Scaffold-gs: Structured 3d gaussians for view-adaptive rendering},
  author={Lu, Tao and Yu, Mulin and Xu, Linning and Xiangli, Yuanbo and Wang, Limin and Lin, Dahua and Dai, Bo},
  booktitle={Proceedings of the IEEE/CVF Conference on Computer Vision and Pattern Recognition (CVPR)},
  pages={20654--20664},
  year={2024}
}

@article{kheradmand20243d,
  title={3d gaussian splatting as markov chain monte carlo},
  author={Kheradmand, Shakiba and Rebain, Daniel and Sharma, Gopal and Sun, Weiwei and Tseng, Yang-Che and Isack, Hossam and Kar, Abhishek and Tagliasacchi, Andrea and Yi, Kwang Moo},
  journal={Advances in Neural Information Processing Systems (NeurIPS)},
  volume={37},
  pages={80965--80986},
  year={2024}
}

@inproceedings{zhang2024pixel,
  title={Pixel-gs: Density control with pixel-aware gradient for 3d gaussian splatting},
  author={Zhang, Zheng and Hu, Wenbo and Lao, Yixing and He, Tong and Zhao, Hengshuang},
  booktitle={European Conference on Computer Vision (ECCV)},
  pages={326--342},
  year={2024},
  organization={Springer}
}

@inproceedings{rota2024revising,
  title={Revising densification in gaussian splatting},
  author={Rota Bul{\`o}, Samuel and Porzi, Lorenzo and Kontschieder, Peter},
  booktitle={European Conference on Computer Vision (ECCV)},
  pages={347--362},
  year={2024},
  organization={Springer}
}

@article{kerbl2024hierarchical,
  title={A hierarchical 3d gaussian representation for real-time rendering of very large datasets},
  author={Kerbl, Bernhard and Meuleman, Andreas and Kopanas, Georgios and Wimmer, Michael and Lanvin, Alexandre and Drettakis, George},
  journal={ACM Transactions on Graphics (TOG)},
  volume={43},
  number={4},
  pages={1--15},
  year={2024},
  publisher={ACM New York, NY, USA}
}

@inproceedings{charatan2024pixelsplat,
  title={pixelsplat: 3d gaussian splats from image pairs for scalable generalizable 3d reconstruction},
  author={Charatan, David and Li, Sizhe Lester and Tagliasacchi, Andrea and Sitzmann, Vincent},
  booktitle={Proceedings of the IEEE/CVF conference on computer vision and pattern recognition (CVPR)},
  pages={19457--19467},
  year={2024}
}

@inproceedings{talegaonkar2025volumetrically,
  title={Volumetrically Consistent 3D Gaussian Rasterization},
  author={Talegaonkar, Chinmay and Belhe, Yash and Ramamoorthi, Ravi and Antipa, Nicholas},
  booktitle={Proceedings of the Computer Vision and Pattern Recognition Conference (CVPR)},
  pages={10953--10963},
  year={2025}
}

@inproceedings{gu2025irgs,
  title={Irgs: Inter-reflective gaussian splatting with 2d gaussian ray tracing},
  author={Gu, Chun and Wei, Xiaofei and Zeng, Zixuan and Yao, Yuxuan and Zhang, Li},
  booktitle={Proceedings of the Computer Vision and Pattern Recognition Conference (CVPR)},
  pages={10943--10952},
  year={2025}
}

@inproceedings{zhaoscaling,
  title={On Scaling Up 3D Gaussian Splatting Training},
  author={Zhao, Hexu and Weng, Haoyang and Lu, Daohan and Li, Ang and Li, Jinyang and Panda, Aurojit and Xie, Saining},
  booktitle={The Thirteenth International Conference on Learning Representations (ICLR)},
  year={2025}
}

@inproceedings{gao6dgs,
  title={6DGS: Enhanced Direction-Aware Gaussian Splatting for Volumetric Rendering},
  author={Gao, Zhongpai and Planche, Benjamin and Zheng, Meng and Choudhuri, Anwesa and Chen, Terrence and Wu, Ziyan},
  booktitle={The Thirteenth International Conference on Learning Representations (ICLR)},
  year={2025}
}

@inproceedings{housort,
  title={Sort-free Gaussian Splatting via Weighted Sum Rendering},
  author={Hou, Qiqi and Rauwendaal, Randall and Li, Zifeng and Le, Hoang and Farhadzadeh, Farzad and Porikli, Fatih and Bourd, Alexei and Said, Amir},
  booktitle={The Thirteenth International Conference on Learning Representations (ICLR)},
  year={2025}
}

@inproceedings{liugs,
  title={GS-CPR: Efficient Camera Pose Refinement via 3D Gaussian Splatting},
  author={Liu, Changkun and Chen, Shuai and Bhalgat, Yash Sanjay and HU, Siyan and Cheng, Ming and Wang, Zirui and Prisacariu, Victor Adrian and Braud, Tristan},
  booktitle={The Thirteenth International Conference on Learning Representations (ICLR)},
  year={2025}
}
\bibliographystyle{iclr2026_conference}

\appendix
\counterwithin{figure}{section} 
\renewcommand{\thefigure}{\thesection.\arabic{figure}}

\counterwithin{equation}{section}
\counterwithin{table}{section}

\renewcommand{\theequation}{\thesection.\arabic{equation}}
\renewcommand{\thetable}{\thesection.\arabic{table}}

\clearpage
\tableofcontents
\clearpage

\subsubsection*{LLM Usage Statement}
\addcontentsline{toc}{section}{\protect\numberline{c} LLM Usage Statement}
We used a large language model (ChatGPT, OpenAI) as a writing assistant. Specifically, the LLM was employed to (i) improve the clarity and readability of the manuscript, (ii) suggest stylistic edits to figure captions and section transitions, and (iii) provide alternative phrasings for technical descriptions without altering the scientific content.

The LLM was not involved in research ideation, algorithm design, data analysis, or experimental validation. All technical contributions, mathematical derivations, and results are the work of the authors. The authors take full responsibility for the content of this paper.

\section{Appendix: Exactness and Limitation}\label{app:exact}
We clarify that our exactness refers to \textit{eliminating linear approximation in projective geometry} during both the “rendering" and “ray-particle association" procedure, where errors will not decrease as the number of Gaussians increases under our settings (see Sec.~\ref{sec:scalingup}). We do not claim exactness in the full physical volumetric sense. Our framework still involves approximations like ordering, self-attenuation, and overlapping which are our limitations.


{
\vspace{0.5em}
\noindent\textbf{Limitations.}
Our formulation integrates each Gaussian independently; hence it does not 
explicitly resolve self-occlusion, and ordering is handled through 
depth-based sorting. As also shown in the supplementary videos, this may lead 
to occasional popping artifacts. To our knowledge, EVER~\citep{mai2024ever} is the 
only method that analytically resolves ordering and self-occlusion by 
approximating each Gaussian as a constant-density ellipsoid, though this comes 
with a substantial reduction in rendering speed.

Importantly, prior ray-tracing–based Gaussian renderers such as 
Fuzzy Metaballs~\citep{keselman2022approximate} and subsequent methods including 
3DGR/UT~\citep{MoenneLoccoz:2024:tog:3dgrt, wu2024:cvpr:3dgut} rely on the ``maximum-response'' assumption for 
ray--Gaussian interaction. This assumption inherently ignores ordering and self-attenuation, }yet has been shown to introduce negligible error in practice (see \citet{celarek2025does3dgaussiansplatting} Fig. 9-11 / Sec. 7.1)  regardless of the number of Gaussians. {This explains that our depth-based ordering still can achieve the same practical 
accuracy as existing ray-tracing–based Gaussian renderers.

\vspace{0.5em}
\noindent\textbf{Future work.}
While the practical impact of ordering and self-occlusion remains small in current Gaussian-based ray tracing, these factors fundamentally limit exact volumetric correctness. A promising future direction is to incorporate analytic or learnable self-attenuation models—potentially inspired by ellipsoidal integration (e.g., EVER~\cite{mai2024ever}) but without incurring heavy runtime costs. Another important avenue is handling ordering in a more principled manner, which may further reduce the rare popping artifacts observed under extreme configurations. We believe these limitations are orthogonal to our contributions on projective exactness, and addressing them would make the framework even more robust for large-scale or high-dynamic scenes.
}

\section{Appendix: Ray-Gaussian Transmittance with Projective Exactness}
\label{ap:sec:ray:int:}

\subsection{Preliminaries And Notations}
\label{sec:prelim}


We follow the classic parameterization of 3D Gaussian primitives as introduced in prior work~\citep{zwicker2001ewa}. Each 3D Gaussian distribution is defined by a mean position \(\boldsymbol{\mu} \in \mathbb{R}^3\) and a 3D covariance matrix \(\Sigma\) in the world coordinate system:
\begin{equation}\label{eq:gauss:world}
    \mathcal{G}_{\Sigma,\boldsymbol{\mu}}(\mathbf{x}) = \frac{1}{\rho  \left|\Sigma\right|^{1/2}} \exp\left(-\frac{1}{2}(\mathbf{x} - \boldsymbol{\mu})^\top \Sigma^{-1} (\mathbf{x} - \boldsymbol{\mu})\right),
\end{equation}
where we adopt the normalization constant \(\rho\) as \(\sqrt{2\pi}\), and \(\mathbf{x} \in \mathbb{R}^3\) denotes a point in world space. This formulation differs from recent methods such as 3DGS~\citep{kerbl20233dgs} and 3DGUT~\citep{wu2024:cvpr:3dgut}, which omit the determinant term \(|\Sigma|^{1/2}\) to simplify splatting or association during rendering. The form of our covariance matrix obeys: \(\Sigma = R S S^\top R^\top\)
, given a scaling matrix \(S = \mathrm{diag}(\mathbf{s})\) constructed from the scaling vector \(\mathbf{s} \in \mathbb{R}^3\), and a rotation matrix \(R \in \mathrm{SO}(3)\) derived from a unit quaternion \(\mathbf{q} \in \mathbb{R}^4\).

To evaluate the transmittance contributed by a single 3D Gaussian particle along a ray $\mathbf{r}(t) = \mathbf{o} + t \mathbf{d}$, we have:
\begin{equation}\label{eq:trans}
    T\left(\mathbf{o}, \mathbf{d}\right) = \sigma\int_{t\in\mathbb{R}} \mathcal{G}_{\Sigma,\boldsymbol{\mu}}\left(\mathbf{o} + t \mathbf{d}\right) dt,
\end{equation}
where \(\sigma\) is the opacity coefficient associated with the Gaussian.

Finally, considering a set of 3D Gaussian particles sorted by depth contributing to the rendering color of the given ray, the rendered color $C$ is computed as:
\begin{equation}\label{eq:color}
    C(\mathbf{o}, \mathbf{d}) = \sum_i c_i(\mathbf{o},\boldsymbol{\mu}_i)T_i \prod_{j=1}^{i-1}(1-T_j),
\end{equation}
where \(c_i(\mathbf{o},\boldsymbol{\mu}_i)\) derives the view-dependent color of the \(i\)-th Gaussian as seen from the optical center \(\mathbf{o}\), parameterized using spherical harmonics (SH) \citep{kerbl20233dgs}. Through minimizing the reconstruction loss between the rendered color and the observed color along each ray, the parameters \(\mathbf{s}, \mathbf{q}, \boldsymbol{\mu}\), the coefficient \(\sigma\) as well as the SH coefficients, are jointly optimized via backpropagation.

\subsection{Differentiable Rendering with Projective Exactness}
\label{sec:exact:render:full}

The core challenge in achieving projective-exact Gaussian rendering in a ray marching formulation lies in obtaining a closed-form solution for integrating Gaussian density and color along a 3D ray. To address this, we first transform the ray—relative to each Gaussian—into a canonical coordinate system where the transmittance integral (see Eq.~\ref{eq:trans}) remains consistent and the Gaussians become isotropic. We then derive the exact ray color from first principles.

Specifically, we define a canonical coordinate system shared across all 3D Gaussian primitives. A point \(\mathbf{u} \in \mathbb{R}^3\) in this canonical space is mapped to world coordinates for each Gaussian via:
\begin{equation}\label{o2w}
\begin{aligned}
\begin{bmatrix} \mathbf{x} \\ 1 \end{bmatrix} &= 
H\begin{bmatrix} \mathbf{u} \\ 1 \end{bmatrix}=
\begin{bmatrix}
R S & \boldsymbol{\mu} \\
\mathbf{0} & 1
\end{bmatrix}
\begin{bmatrix} \mathbf{u} \\ 1 \end{bmatrix},
\end{aligned}
\end{equation}

From this transformation, an infinitesimal segment \(\Delta \mathbf{u}\) along a ray in the canonical frame corresponds to a segment \(\Delta \mathbf{x} = R S \Delta \mathbf{u}\) in the world frame.

To ensure that the transmittance contributed by each Gaussian along the ray is preserved across coordinate systems, we require that the elementary transmittance (i.e., the product of density and segment length) remains invariant under the transformation (see Fig.~\ref{fig:exact:ray:illus} \textit{Right}):
\begin{equation}\label{eq:integral:same}
\mathcal{G}_{\mathbf{I}, \mathbf{0}}(\mathbf{u}) \, \left|\Delta\mathbf{u}\right| = \mathcal{G}_{\Sigma,\boldsymbol{\mu}}(\mathbf{x}) \, \left|\Delta\mathbf{x}\right|.
\end{equation}
This constraint implies that the Jacobian determinant is absorbed into the measure. Note that \(\left|\frac{\Delta\mathbf{x}}{\Delta\mathbf{u}}\right| = \left|RS\right| = \left|\Sigma\right|^{1/2}\). We can thus express the corresponding canonical density as:
\begin{equation}\label{eq:gauss:cano}
\mathcal{G}_{\mathbf{I}, \mathbf{0}}(\mathbf{u}) = \frac{1}{\rho}\exp\left(-\frac{1}{2}\mathbf{u}^\top \mathbf{u}\right),
\end{equation}
which is isotropic and shared across all primitives in canonical space.

This leads to an important conclusion that the exact closed-form solution for the transmittance integral of each Gaussian along the ray \(\mathbf{r}(t)\) can thus be obtained by computing the accumulated density along the transformed ray \(\mathbf{r}_u(t)=\mathbf{o}_u+t\mathbf{d}_u\) in the canonical isotropic Gaussian. Note that $t<0$ is always negligible unless the optical center is within 3-sigma, then all 3D Gaussian rendering frameworks cull the Gaussian when optimization. 

This admits a \textit{closed-form} solution in terms of the perpendicular distance from the ray to the Gaussian center, as captured in Theorem~\ref{thm:ray_splat}:
\begin{equation}\label{eq:trans:cano}
T\left(\mathbf{o}, \mathbf{d}\right) = \sigma\int_{t\in\mathbb{R}} \mathcal{G}_{\mathbf{I}, \mathbf{0}}(\mathbf{o}_u+t\mathbf{d}_u)\, dt = \sigma\exp\left(-\frac{1}{2} \left(\mathrm{D}_{\mathbf{\mu}, \Sigma}(\mathbf{o}, \mathbf{d})\right)^2\right),
\end{equation}
where \(\mathrm{D}_{\mathbf{\mu}, \Sigma}(\mathbf{o}, \mathbf{d})\) denotes the perpendicular distance from the Gaussian center (i.e., the origin in the canonical space) to the transformed ray \(\mathbf{r}_u(t)\). 
Intuitively, this quantity also represents the minimal \emph{Mahalanobis distance} from any points on the ray to the Gaussian in world space, measuring how close the ray passes by the particle distribution.

The remaining question is just about how to compute the perpendicular distance from the canonical origin to the transformed ray. We derive the transformed ray's direction vector $\mathbf{d}_u$ and moment vector $\mathbf{m}_u$ as:
\begin{align}\label{eq:dir_moment:cano}
    \mathbf{d}_u = \left(S^{-1}R^\top\right)\mathbf{d}\ , \quad\mathbf{m}_u = \mathbf{o}_u \times \mathbf{d}_u,
\end{align}
where \(\mathbf{o}_u =\left(S^{-1}R^\top\right)\left(\mathbf{o}-\boldsymbol{\mu}\right)\) denotes the optical center in canonical space.
Based on this, the square of \(\mathrm{D}_{\mathbf{\mu}, \Sigma}(\mathbf{o}, \mathbf{d})\) can be efficiently calculated as:
\begin{equation}\label{eq:mahdist_comp}
\left(\mathrm{D}_{\mathbf{\mu}, \Sigma}(\mathbf{o}, \mathbf{d})\right)^2=\frac{\mathbf{m}_u^\top\mathbf{m}_u}{\mathbf{d}_u^\top\mathbf{d}_u}.
\end{equation}
By substituting this distance square into Eq.~\ref{eq:trans:cano}, and then the derived transmittance value into Eq.~\ref{eq:color}, we compute the accumulated color along the ray.

\subsection{Ray Integral of Isotropic 3D Gaussian}
\begin{theorem}\label{thm:ray_splat}
Given a ray and a standard isotropic 3D Gaussian, the ray-Gaussian integral has a closed-form proportional to: \[\exp\left(-\frac{1}{2}\mathrm{D}^2\right),\]
where $\mathrm{D}$ is the perpendicular distance from the Gaussian center to the ray.
\end{theorem}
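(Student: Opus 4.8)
The plan is to evaluate the one-dimensional integral $\int_{t\in\mathbb{R}}\mathcal{G}_{\mathbf{I},\mathbf{0}}(\mathbf{o}+t\mathbf{d})\,dt$ directly by splitting the ray offset $\mathbf{o}$ into components parallel and perpendicular to the direction $\mathbf{d}$. Writing $\mathbf{o}=\mathbf{o}_\parallel+\mathbf{o}_\perp$ with $\mathbf{o}_\parallel=\frac{\mathbf{o}\cdot\mathbf{d}}{\|\mathbf{d}\|^2}\mathbf{d}$ and $\mathbf{o}_\perp\perp\mathbf{d}$, the point on the ray closest to the Gaussian center (the origin in canonical space) is attained at $t^\star=-\frac{\mathbf{o}\cdot\mathbf{d}}{\|\mathbf{d}\|^2}$, and $\mathbf{r}(t^\star)=\mathbf{o}_\perp$; hence the perpendicular distance is $\mathrm{D}=\|\mathbf{o}_\perp\|=\|\mathbf{o}\times\mathbf{d}\|/\|\mathbf{d}\|$, matching Eq.~\ref{eq:mahdist_comp}.

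I would then substitute this decomposition into the quadratic in the exponent. Since $\mathbf{o}_\perp$ is orthogonal to both $\mathbf{d}$ and $\mathbf{o}_\parallel$, the Pythagorean identity gives $\|\mathbf{o}+t\mathbf{d}\|^2=\|\mathbf{o}_\perp\|^2+(t-t^\star)^2\|\mathbf{d}\|^2=\mathrm{D}^2+(t-t^\star)^2\|\mathbf{d}\|^2$. The factor $\exp(-\tfrac12\mathrm{D}^2)$ is independent of $t$ and pulls out of the integral, leaving a shifted, scaled one-dimensional Gaussian in $t$.

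Finally I would compute the remaining integral by the substitution $s=(t-t^\star)\|\mathbf{d}\|$, which turns it into $\frac{1}{\|\mathbf{d}\|}\int_{\mathbb{R}}e^{-s^2/2}\,ds=\sqrt{2\pi}/\|\mathbf{d}\|=\rho/\|\mathbf{d}\|$. Combined with the normalization $1/\rho$ in $\mathcal{G}_{\mathbf{I},\mathbf{0}}$, this yields $\int_{t\in\mathbb{R}}\mathcal{G}_{\mathbf{I},\mathbf{0}}(\mathbf{o}+t\mathbf{d})\,dt=\frac{1}{\|\mathbf{d}\|}\exp(-\tfrac12\mathrm{D}^2)$, i.e.\ proportional to $\exp(-\tfrac12\mathrm{D}^2)$ with a constant depending only on $\|\mathbf{d}\|$ (unity when $\mathbf{d}$ is normalized); multiplying by the opacity $\sigma$ reproduces Eq.~\ref{eq:trans:cano}.

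There is no substantive obstacle: this is a textbook Gaussian integral once the orthogonal split is in place. The only two points needing care are (i) keeping track of the $\|\mathbf{d}\|$ prefactor, which is precisely why the statement asserts proportionality rather than equality, and (ii) justifying integration over all of $\mathbb{R}$ rather than $t\ge 0$ — which is legitimate because, as noted in the surrounding text, the $t<0$ tail is negligible unless the optical center lies within the $3\sigma$ contour, a configuration every standard 3D Gaussian pipeline culls during optimization.
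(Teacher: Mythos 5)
Your proposal is correct and follows essentially the same route as the paper's proof: both reduce the line integral to a one-dimensional Gaussian by splitting off the component of the offset perpendicular to the ray, the only cosmetic difference being that the paper realizes this split via a rotation aligning the (unit) ray direction with the $z$-axis while you do it by an explicit orthogonal decomposition of $\mathbf{o}$. Your explicit bookkeeping of the $1/\|\mathbf{d}\|$ prefactor is a minor refinement the paper's unit-direction convention elides, and it is consistent with the theorem's claim of proportionality.
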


\begin{proof}
We first apply a translation to the coordinate system such that the Gaussian \(\mathcal{G}\) is centered at the origin, and the ray \(\mathbf{r}_u\) is translated accordingly. Next, we rotate the entire scene so that the ray direction is aligned with the $z$-axis. Due to the translation-invariance of the integral and the isotropy of the standard Gaussian, these transformations do not affect the result of the integral.

\noindent The translated standard isotropic 3D Gaussian is defined as:
\begin{equation}
    \mathcal{G}_{\mathbf{I},\mathbf{0}}(\mathbf{u}) = \frac{1}{\rho} \exp\left(-\frac{1}{2}\mathbf{u}^\top\mathbf{u}\right),
\end{equation}

\noindent which matches Eq.~\ref{eq:gauss:cano}. Let the \(z\)-axis aligned ray as:
\begin{equation}
\begin{aligned}
    \mathbf{r}_u(t) = 
\begin{bmatrix} 
\mathbf{o}_u \\ \mathbf{o}_v \\ 0 
\end{bmatrix} + t 
\begin{bmatrix} 
0 \\ 0 \\ 1 
\end{bmatrix},
\end{aligned}
\end{equation}
the ray-Gaussian integral becomes:
\begin{equation}
\begin{aligned}\label{eq:ray-integral}
\int_{\mathbf{t}\in\mathbb{R}}
    \mathcal{G}_{\mathbf{I},\mathbf{0}}(\mathbf{r}_u(t))\, dt
    &=\frac{1}{\rho}\exp\left(-\frac{u_0^2+v_0^2}{2}\right)\int_{t\in\mathbb{R}}\exp\left(-\frac{1}{2}t^2\right)\, dt\\
    &=\frac{\sqrt{2\pi}}{\rho}\exp\left(-\frac{1}{2}\mathrm{D}^2\right),
\end{aligned}
\end{equation}
where $\mathrm{D}^2 = \frac{u_0^2+v_0^2}{2}$ denotes the squared perpendicular distance from the ray to the Gaussian center in canonical space.
Thus, the integral is proportional to \(\exp\left(-\frac{1}{2}\mathrm{D}^2\right)\), as stated in theorem.
\end{proof}

Note that the density function shared by our 3D primitives in the canonical space (see Eq.~\ref{eq:gauss:cano}) follows the same formulation as in the isotropic case. Given the world-coordinate ray $(\mathbf{o}, \mathbf{d})$ as the input of $\mathrm{D}$, its value also depends on the world-space Gaussian parameters $\boldsymbol{\mu}, \Sigma$, which are used to transform the ray into the canonical coordinate system. Therefore, for each Gaussian contributing to the ray, we derive the transmittance as:
\begin{equation}\label{eq:trans:app}
    T = \frac{\sqrt{2\pi} \sigma}{\rho} \exp\left( -\frac{1}{2} \left( \mathrm{D}_{\boldsymbol{\mu}, \Sigma}(\mathbf{o}, \mathbf{d}) \right)^2 \right),
\end{equation}
which recovers Eq.~\ref{eq:trans:cano} when picking \(\rho = \sqrt{2\pi}\).


\section{Appendix: Cookbook of Gradient Computation}
\label{ap:sec:bp:grad}
\subsection{Chain Rule}
Recall that \( T=\sigma\alpha\) is the ray-Gaussian transmittance (Eq.~\ref{eq:trans:cano}), where \(\sigma\) is the opacity coefficient, and \(\alpha\) is the ray-Gaussian integral. We denote \( \kappa=\mathrm{D}^2_{\mu,\Sigma} \) as the minimal squared Mahalanobis distance from the ray to the Gaussian distribution, \(S\) as the Gaussian scaling, and \(R\) as the rotation matrix in the world space.
Additionally, given \( \Sigma = RSS^\top R^\top \) as the 3D covariance matrix of the Gaussian in the world space, we have \(\mathrm{W}_{\mathrm{pca}}\) as the \text{PCA} whitening matrix satisfying:
\begin{equation}
    \begin{aligned}
    \mathrm{W}_{\mathrm{pca}} &= \Sigma^{-\frac{1}{2}}=S^{-1}R^\top,\\
\text{where}\quad    S^{-1}&=\text{diag}\left(\frac{1}{\mathbf{s}}\right).
\end{aligned}
\end{equation}

\noindent As illustrated in Fig.~\ref{fig:exact:ray:illus}, we can apply the chain rule to find derivatives \textit{w.r.t.} the Gaussian parameters, e.g, scaling vector $\mathbf{s}$, rotation quaternion $\mathbf{q}$, and mean $\boldsymbol{\mu}$:
\begin{align}
    \frac{\partial T}{\partial\mathbf{s}}&=\frac{\partial T}{\partial\kappa}\frac{\partial\kappa}{\partial\mathrm{W}_{\mathrm{pca}}}\frac{\partial\mathrm{W}_{\mathrm{pca}}}{\partial\mathbf{s}}, \\
    \frac{\partial T}{\partial\mathbf{q}}&=\frac{\partial T}{\partial\kappa}\frac{\partial\kappa}{\partial\mathrm{W}_{\mathrm{pca}}}\frac{\partial\mathrm{W}_{\mathrm{pca}}}{\partial\mathbf{q}},\\
    \frac{\partial T}{\partial\boldsymbol{\mu}}&=\frac{\partial T}{\partial\kappa}\frac{\partial\kappa}{\partial\boldsymbol{\mu}}.
\end{align}

\subsection{Two-Stage Propagation}
To reduce computational overhead in shared propagation, we design our gradient computation following the structured two-stage approach of 3DGS~\citep{kerbl20233dgs}, comprising \texttt{RenderCUDA} (\texttt{RC}) and \texttt{PreprocessCUDA} (\texttt{PC}). The former performs ray-wise accumulation, efficiently accelerated by our Particle Bounding Frustum (PBF) association (see Section~\ref{sec:asso}), while the latter operates with Gaussian-wise complexity, leveraging view frustum culling for additional speedup.

According to Eq.~\ref{eq:trans:cano}, we can respectively write the gradient \textit{w.r.t.} opacity coefficient and the shared partial derivative for Gaussian parameters in \texttt{RC} as:
\begin{equation}
    \frac{\partial T}{\partial \sigma}\bigg|_{\texttt{RC}} = \alpha,\quad \frac{\partial T}{\partial \kappa}\bigg|_{\texttt{RC}}=-\frac{T}{2}.
\end{equation}

The remaining gradients to the Gaussian parameters can then be back-propagated from the squared distance, \textit{e.g.}, \(\frac{\partial\kappa}{\partial\mathbf{s}}, \frac{\partial\kappa}{\partial\mathbf{q}}, \frac{\partial \kappa}{\partial \boldsymbol{\mu}}\).

\subsection{Gradients for Canonical Ray}\label{grad_cano_ray}
We then backpropagate gradients to the canonical ray parameters: the origin \(\mathbf{o}_u\), direction \(\mathbf{d}_u\), and moment \(\mathbf{m}_u = \mathbf{o}_u \times \mathbf{d}_u\), as defined in Eq.~\ref{eq:dir_moment:cano} in our paper. These quantities directly influence the Mahalanobis distance in Eq.~\ref{eq:mahdist_comp} and are critical for capturing canonical ray geometry.

\begin{align}\label{eq:b6}
    \frac{\partial\kappa}{\partial\mathbf{m}_u}\bigg|_{\texttt{RC}} &= \left(\frac{2}{\mathbf{d}_u^\top\mathbf{d}_u}\right)\mathbf{m}_u,\\
    \frac{\partial\kappa}{\partial\mathbf{o}_u}\bigg|_{\texttt{RC}} &= \frac{\partial\kappa}{\partial\mathbf{m}_u} \times \mathbf{d}_u,\\
    \frac{\partial\kappa}{\partial\mathbf{d}_u}\bigg|_{\texttt{RC}} &= -\left(\frac{2\kappa}{\mathbf{d}_u^\top\mathbf{d}_u}\right)\mathbf{d}_u + \mathbf{o}_u\times\frac{\partial\kappa}{\partial\mathbf{m}_u}.
\end{align}

\subsection{Gradients for Whitening Matrix}
Recall that \(\mathbf{d}_u=\mathrm{W}_{\mathrm{pca}}\mathbf{d}\) and \(\mathbf{o}_u=\mathrm{W}_{\mathrm{pca}}(\mathbf{o}-\boldsymbol{\mu})\), where \(\mathbf{o}=-R_c^\top\mathbf{t}_c\) denotes the world space optical center, and \(\mathbf{d}\) denotes the ray direction.
Once the gradients \textit{w.r.t.} the ray parameters are obtained, we further propagate them to the PCA whitening matrix \(\mathrm{W}_{\mathrm{pca}}\) via outer products with the world-space vectors:

\begin{equation}
\begin{aligned}\label{eq:b9}
\texttt{RenderCUDA:}      &\quad \frac{\partial \kappa}{\partial \mathrm{W}_{\mathrm{pca}}}\bigg|_{\texttt{RC}} 
= \frac{\partial \kappa}{\partial\mathbf{d}_u}  \mathbf{d}^\top; \\
\texttt{PreprocessCUDA:} &\quad \frac{\partial \kappa}{\partial \mathrm{W}_{\mathrm{pca}}}\bigg|_{\texttt{PC}} 
= \frac{\partial \kappa}{\partial\mathbf{o}_u} (\mathbf{o} - \boldsymbol{\mu})^\top; \\
\texttt{Overall:}        &\quad \frac{\partial \kappa}{\partial  \mathrm{W}_{\mathrm{pca}}} 
= \frac{\partial \kappa}{\partial \mathrm{W}_{\mathrm{pca}}}\bigg|_{\texttt{RC}} 
+ \frac{\partial \kappa}{\partial \mathrm{W}_{\mathrm{pca}}}\bigg|_{\texttt{PC}}.
\end{aligned}
\end{equation}
Here, the gradient backpropagated through \(\mathbf{d}_u\) is accumulated ray-wise in \texttt{RenderCUDA}, while the gradient through \(\mathbf{o}_u\) is computed Gaussian-wise in \texttt{PreprocessCUDA}. These two components are then summed to yield the complete gradient \textit{w.r.t.} the PCA transformation matrix \(\mathrm{W}_{\mathrm{pca}}\).

\subsection{Gradients for Scaling and Rotation}

Thus, for scaling gradients $\frac{\partial\kappa}{\partial\mathbf{s}} = \frac{\partial\kappa}{\partial\mathrm{W}_{\mathrm{pca}}}\frac{\partial\mathrm{W}_{\mathrm{pca}}}{\partial\mathbf{s}}$, we have:
\begin{align}\label{scaling}
\frac{\partial \mathrm{W}_{\mathrm{pca}}^{i,j}}{\partial s_k}\bigg|_{\texttt{PC}}&=
\begin{Bmatrix}
    -\frac{1}{s_k^2}R_{j,k}&\mathrm{if}\ i=k \\ 
0&\mathrm{otherwise}
\end{Bmatrix}.
\end{align}

\noindent To derive gradients for rotation $\frac{\partial\kappa}{\partial\mathbf{q}} = \frac{\partial\kappa}{\partial\mathrm{W}_{\mathrm{pca}}}\frac{\partial\mathrm{W}_{\mathrm{pca}}}{\partial\mathbf{q}}$, given the quaternion form $\mathbf{q} = q_r + q_i \mathbf{i} + q_j \mathbf{j} + q_k \mathbf{k}$, we can write the whitening matrix as:
\setlength{\arraycolsep}{3pt}
\begin{align}
\mathrm{W}_{\mathrm{pca}}=2
\begin{pmatrix}
    \frac{1}{s_1}(\frac{1}{2}-q_j^2-q_k^2)&
    \frac{1}{s_1}(q_iq_j+q_rq_k)&
    \frac{1}{s_1}(q_iq_k-q_rq_j)\\
    \frac{1}{s_2}(q_iq_j-q_rq_k)&
    \frac{1}{s_2}(\frac{1}{2}-q_i^2-q_k^2)&
    \frac{1}{s_2}(q_jq_k+q_rq_i)\\
    \frac{1}{s_3}(q_iq_k+q_rq_j)&
    \frac{1}{s_3}(q_jq_k-q_rq_i)&
    \frac{1}{s_3}(\frac{1}{2}-q_i^2-q_j^2)
\end{pmatrix}.
\end{align}

As a result, we derive the following end gradients for the quaternion $\mathbf{q}$ as:

\setlength{\arraycolsep}{3pt}
\begin{align}\label{rotation}
\begin{array}{c@{\hskip 0.0cm}c}
    \frac{\partial\mathrm{W}_{\mathrm{pca}}}{\partial q_r}\bigg|_{\texttt{PC}}=2\begin{pmatrix}0&\frac{q_k}{s_1}&-\frac{q_j}{s_1}\\-\frac{q_k}{s_2}&0&\frac{q_i}{s_2}\\\frac{q_j}{s_3}&-\frac{q_i}{s_3}&0\end{pmatrix},
 & \frac{\partial\mathrm{W}_{\mathrm{pca}}}{\partial q_i}\bigg|_{\texttt{PC}}=2\begin{pmatrix}0&\frac{q_j}{s_1}&\frac{q_k}{s_1}\\\frac{q_j}{s_2}&-\frac{2q_i}{s_2}&\frac{q_r}{s_2}\\\frac{q_k}{s_3}&-\frac{q_r}{s_3}&-\frac{2q_i}{s_3}\end{pmatrix}, \\
    \frac{\partial\mathrm{W}_{\mathrm{pca}}}{\partial q_j}\bigg|_{\texttt{PC}} = 2\begin{pmatrix}-\frac{2q_j}{s_1}&\frac{q_i}{s_1}&-\frac{q_r}{s_1}\\\frac{q_i}{s_2}&0&\frac{q_k}{s_2}\\\frac{q_r}{s_3}&\frac{q_k}{s_3}&-\frac{2q_j}{s_3}\end{pmatrix}, & \frac{\partial\mathrm{W}_{\mathrm{pca}}}{\partial q_k}\bigg|_{\texttt{PC}} = 2\begin{pmatrix}-\frac{2q_k}{s_1}&\frac{q_r}{s_1}&\frac{q_i}{s_1}\\-\frac{q_r}{s_2}&-\frac{2q_k}{s_2}&\frac{q_j}{s_2}\\\frac{q_i}{s_3}&\frac{q_j}{s_3}&0\end{pmatrix}.
\end{array}
\end{align}

\subsection{Gradients for 3D Mean}\label{mean}
Meanwhile, we compute end gradients \textit{w.r.t.} the Gaussian 3D mean \(\boldsymbol{\mu}\) as:
\begin{equation}
    \frac{\partial \kappa}{\partial \boldsymbol{\mu}}\bigg|_{\texttt{PC}} = -\mathrm{W}_{\mathrm{pca}}^\top\frac{\partial \kappa}{\partial\mathbf{o}_u}.
\end{equation}

\subsection{Numerical Stability of Our Differentiable Framework}
The choice of forward formulation, backward gradient computation, and intermediate cache variables critically affects numerical stability. 
Prior approaches such as GOF~\citep{Yu2024GOF} suffer from instability in backward propagation and therefore rely on additional 3D filtering heuristics to stabilize training. 
In contrast, our closed-form formulation, together with its exact gradient chain, avoids such instability altogether, eliminating the need for any auxiliary filtering.

The key distinction from our gradient and rendering pipeline lies in that prior methods adopt an alternative expansion of $\mathbf{m}_u^\top \mathbf{m}_u$. 
Ideally, disregarding numerical issues, the following two forms are geometrically equivalent:
\begin{equation}\label{eq:num}
\mathbf{m}_u^\top \mathbf{m}_u 
= \|\mathbf{o}_u\|^2 \|\mathbf{d}_u\|^2 - (\mathbf{o}_u^\top \mathbf{d}_u)^2.
\end{equation}

However, when a 3D Gaussian has one axis scaled close to zero, the canonical-space vectors $\mathbf{o}_u$ and $\mathbf{d}_u$ become nearly parallel. 
In this case, $\cos(\angle(\mathbf{o}_u, \mathbf{d}_u)) \to 1$, while both $\|\mathbf{o}_u\|$ and $\|\mathbf{d}_u\|$ diverge. 
This makes the term $\|\mathbf{o}_u\|^2 \|\mathbf{d}_u\|^2 - (\mathbf{o}_u^\top \mathbf{d}_u)^2$ highly prone to numerical overflow and, in practice, can even yield negative values and further lead to instable backward propagation as well. 

Fig.~\ref{fig:numerical} (\textit{Left}) demostrates the numerical artifacts produced by GOF. When it does not apply additional 3D filtering to remove degenerated Gaussians, severe artifacts appear. 
The snowflake noise arises from numerical overflow in Eq.~\ref{eq:num}, where negative values are prematurely returned and extremely small positive values lead to amplified Gaussian intensity. 
Since the EWA approximation with Jacobian and conic remains numerically stable, the artifacts mainly stem from inaccurate maximum-response computation, which also induces gridline patterns.

In contrast, our CUDA rasterizer implementation computes both the cross product and its gradients directly (Eq.~\ref{eq:mahdist_comp} (forward) and Eqs.~\ref{eq:b6}--\ref{eq:b9} (backward)), thereby fundamentally avoiding this numerical issue. As a result, the forward pass is artifact-free, while the backward pass yields accurate gradients. Fig.~\ref{fig:numerical} (\textit{Right}) illustrates the results of applying our forward and backward formulations to GOF: even without additional 3D filtering to exclude degenerated Gaussians (i.e., flat disk-like or needle Gaussians), our approach effectively eliminates the artifacts caused by numerical instability.

\begin{figure}
    \centering
    \includegraphics[width=0.7\linewidth]{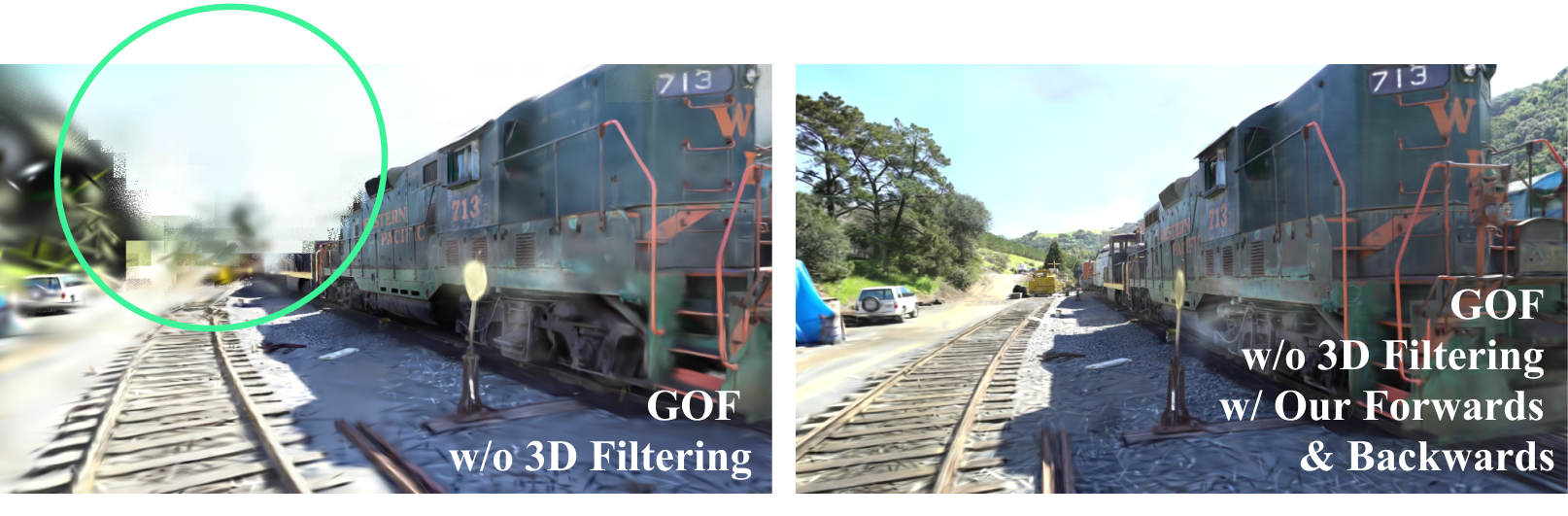}
    \caption{\textbf{Artifacts without 3D filtering in GOF.} 
(\textit{Left}) when degenerated Gaussians are not filtered in GOF, and (\textit{Right}) the stable results obtained when applying our forward and backward formulations.}

    \label{fig:numerical}
\end{figure}


\section{Appendix: Additional Details in Association}
\label{app:asso}

{
Section~\ref{ap:sec:pbf} provides our detailed derivation. Section~\ref{sec:csf} provides implementation considerations required for stable PBF under large-FoV and omnidirectional cameras. Section~\ref{sec:2dgs} clarifies the derivational relationships and explicitly states the conditions under which prior formulas coincide with ours.
}
\subsection{Particle Bounding Frustum (PBF) Solver}\label{ap:sec:pbf}
Recall Eq.~\ref{eq:planeTrans}, we define \(P=(VH)^\top\in\mathbb{R}^{4\times 4}\) as the plane transformation matrix from the view space to the canonical space.
Take the unknown bounds for \(c=\tan\theta_{1,2}\) as example, we have:
\begin{align}
    \mathbf{g}_u=-P_0+cP_2&=\begin{bmatrix}
        -P_{00}+cP_{02}\\
        -P_{10}+cP_{12}\\
        -P_{20}+cP_{22}\\
        -P_{30}+cP_{32}
    \end{bmatrix},
\end{align}
where $P_{i}$ denotes $i$-th row of the matrix, expressed as a \(4\times 1\) column vector, and $P_{ij}$ further denotes $j$-th element.
\noindent Recall Eq.~\ref{eq:ellipsoid_tangent}, assuming we are computing the bounding box of the \(1\)-standard-deviation contour ellipsoid of the Gaussian, \textit{i.e.}, \(\lambda=1\), then we have the following quadratic equation:
\begin{align}
    \left(\left(\mathbf{f}\circ P_2^\top\right)\cdot P_2\right)c^2 - 2\left(\left(\mathbf{f}\circ P_0^\top\right)\cdot P_2\right) c + \left(\mathbf{f}\circ P_0^\top\right)\cdot P_0= 0,
\end{align}
where \(\mathbf{f}=(1,1,1,-1)\) is a \(1\times4\) row vector. Here we denote the equation as:
\begin{align}\label{solve_tan}
    \mathcal{T}_{22} c^2 -2\mathcal{T}_{02} c+\mathcal{T}_{00}=0,
\end{align}
where $\mathcal{T}$ is a \(3\times 3\) upper triangular matrix defined with \(\mathcal{T}_{ij}=\left(f\circ P_i^\top\right)\cdot P_j\).

Similarly, for the bounds of $\phi$, we have \(c=\tan\phi_{1,2}\) satisfying:
\begin{align}\label{solve_tan_2}
    \mathcal{T}_{22} c^2 -2\mathcal{T}_{12}c+\mathcal{T}_{11}=0.
\end{align}

Interestingly, given a camera-view configuration \([R_c\mid\mathbf{t}_c]\), we find that the parameter matrix $\mathcal{T}$ can be efficiently computed from the Gaussian's view-space covariance \(\Sigma_c=R_cRSS^\top R^\top R_c^\top\) and 3D mean \(\boldsymbol{\mu}_c=R_c\boldsymbol{\mu}+\mathbf{t}_c\):
\begin{equation}\label{eq:pbf-cov3d}
\mathcal{T}_{ij}=\lambda^2\Sigma_c^{i,j}-\left(\boldsymbol{\mu}_c\boldsymbol{\mu}_c^\top\right)^{i,j}.
\end{equation}

Thus, we can compute the angular bounds (defined by their tangent-space center and extent) from:
\begin{equation}\label{eq:tan_sol}
    \begin{aligned}
    &\left(\frac{\tan\theta_1+\tan\theta_2}{2},\frac{\tan\phi_1+\tan\phi_2}{2}\right)=\left(\frac{\mathcal{T}_{02}}{\mathcal{T}_{22}},\frac{\mathcal{T}_{12}}{\mathcal{T}_{22}}\right),\\
    &\|\tan\theta_1-\tan\theta_2\|=\frac{2\sqrt{\mathcal{T}_{02}^2-\mathcal{T}_{22}\mathcal{T}_{00}}}{|\mathcal{T}_{22}|},\\
    &\|\tan\phi_1-\tan\phi_2\|=\frac{2\sqrt{\mathcal{T}_{12}^2-\mathcal{T}_{22}\mathcal{T}_{11}}}{|\mathcal{T}_{22}|}.
    \end{aligned}
\end{equation}

If the discriminant of either quadratic is negative—indicating that the camera's optical center lies inside the Gaussian ellipsoid and the angular bounds are undefined—we conservatively clamp the frustum angles using the camera’s maximum field of view (FoV). 

Note that this result is directly computed from the 3D view-space true covariance, in contrast to methods like EWA and UT, which rely on intermediate conic approximations in screen space. Our PBF formulation offers both exactness and efficiency. 


\begin{wrapfigure}{r}{0.5\textwidth}
    \centering
    \includegraphics[width=1.0\linewidth]{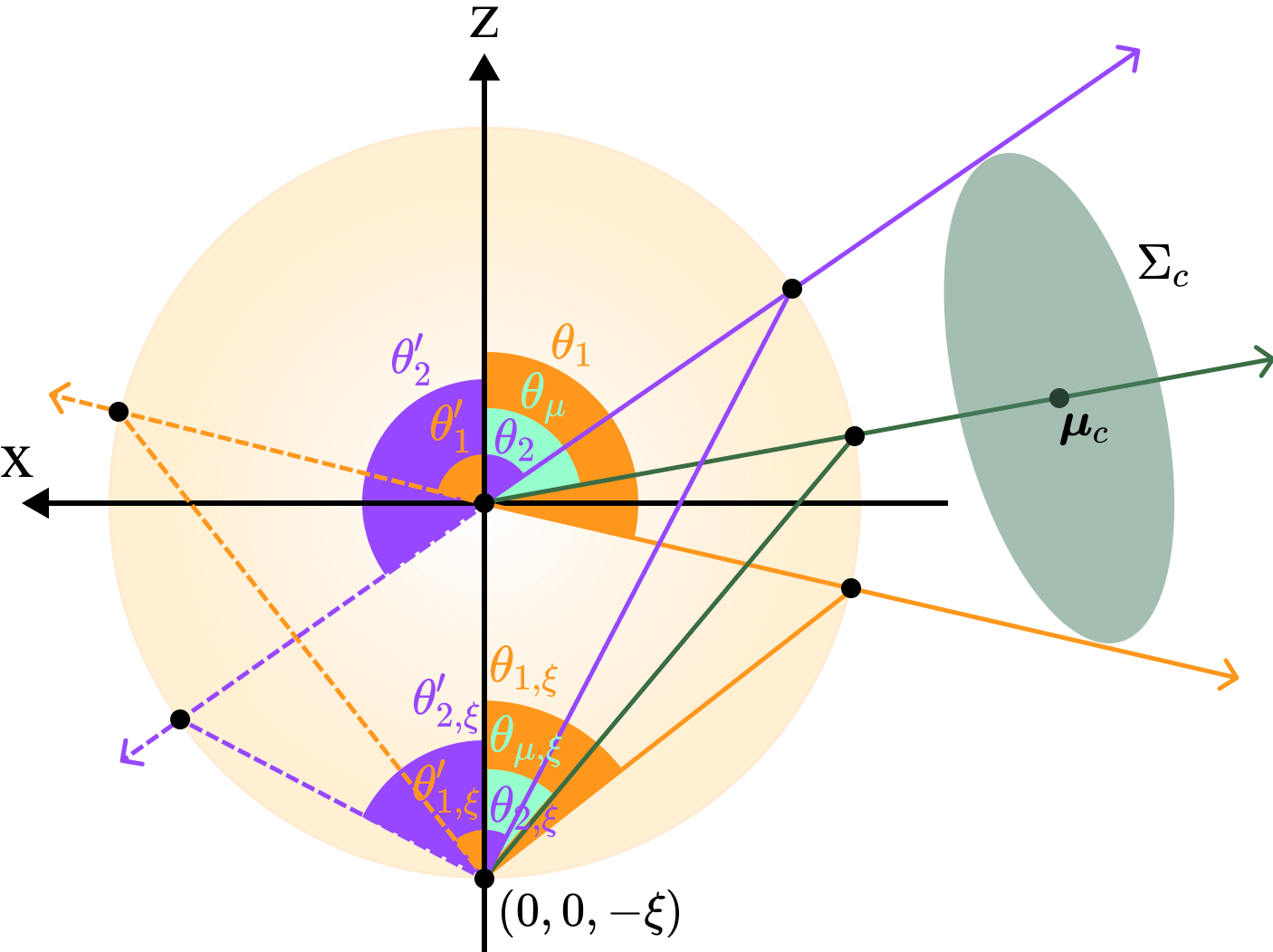}
    \caption{\textbf{Mirror Transformation of PBF Angular Bounds.} Four candidate mirrored angles \(\theta_{1,\xi}, \theta'_{1,\xi}, \theta_{2,\xi}, \theta'_{2,\xi}\) are generated from a given pair of bounds. Given a Gaussian centered at \(\boldsymbol{\mu}_c\), the two closest to the corresponding Gaussian's mirrored angle \(\theta_{\mu,\xi}\) are used as PBF bounds for robust intersection with mirror-transformed CSF partitions. 
    }
    \label{fig:correct-omni}
\end{wrapfigure}

\subsection{Details in Computing CSF-PBF Intersection}\label{sec:csf}

We uniformly sample spherical angles within the maximum field of view (FoV). The resulting \(x\), \(y\), and \(z\) coordinates---i.e., the ray direction vectors \(\mathbf{d}\)---are projected onto a unit sphere and then transformed into pixel space for inverse color interpolation (see Sec.~\ref{SAI}). These sampled rays are then grouped according to their corresponding CSFs for subsequent Gaussian association.

In most scenes using a pinhole camera with a limited FoV, the angular bounds in PBF typically lie within $[-\frac{\pi}{2}, \frac{\pi}{2}]$. In such cases, since the $\tan$ function is monotonically increasing over this interval, the intersections between PBF and CSFs can be computed by directly comparing the $\tan$ values of their respective angular bounds. However, in wide FoV scenarios, even when the incidence angle formed by the Gaussian centers remains within $[-\frac{\pi}{2}, \frac{\pi}{2}]$, the angular bounds used in PBF may still fall outside the monotonic region of the $\tan$ function. In these cases, directly comparing $\tan$ values may yield incorrect PBF-CSF intersections, such as mistakenly matching angular bounds as $[\theta'_2, \theta_1]$ (see Fig.~\ref{fig:correct-omni} for an example).

To resolve the correct intersections, we introduce a mirror transformation applied to the tangent values of both PBF angular bounds and CSF partitions. Inspired by the CMEI camera model~\citep{mei2007single}, we shift the optical center using a mirror parameter \(\xi\), leading to the following transformation:

\begin{align}\label{aabb}
    \tan \theta_{\xi} = 
    \frac{\tan \theta}{1 + \xi \frac{z}{\|z\|} \sqrt{1 + \tan^2 \theta}} = \frac{\sin \theta}{\cos \theta + \xi},
\end{align}
where \(\theta_{\xi}\) represents the mirror-transformed angle of \(\theta\) on the \(xz\)-plane, using a mirror parameter \(\xi\). We set \(\xi = 1\) throughout all experiments.

Given a pair of tangent value \([\tan\theta_1, \tan\theta_2]\) from our PBF solver (See Equation.~\ref{eq:tan_sol}), this transformation yields four candidate mirror angles, as illustrated in Fig.~\ref{fig:correct-omni}. The Gaussian center, in contrast, corresponds to a unique mirror-transformed angle \(\theta_{\mu,\xi}\). We exploit this property by selecting the two candidate angles closest to \(\theta_{\mu,\xi}\). These two then define the effective lower and upper bounds for PBF, i.e., \([\theta_{1,\xi}, \theta_{2,\xi}]\), which are subsequently compared with CSF partitions to compute the correct intersections. 


This trick ensures robust and consistent angular comparisons, even beyond the monotonic domain of the \(\tan\) function. As reported in our timing analysis (Tab.~\ref{tab:ablation:timing}), the pre-processing pipeline—including the mirror-angle-based PBF-CSF intersection computation and the canonical transformation for all Gaussians—takes less than 0.13 ms in total. This confirms the efficiency of our approach, even when applied to thousands of Gaussians across wide FoV scenarios.

{
\subsection{Relation to Prior Works}\label{sec:2dgs}

Substituting the perspective screen space coordinates: 
\[
u = \tan\theta,\qquad v = \tan\phi
\]
into our formulation yields expressions identical to those in 2DGS~\citep{Huang:2024:siggraph:2dgs} and HTGS~\citep{hahlbohm2025efficientperspectivecorrect3dgaussian}. This is expected as a perspective alternative of our angular-domain formulation.


Concretely, 2DGS and HTGS start from a screen-space or tangent-plane parameterization and therefore rely on assumptions that hold only for perspective (pinhole) cameras, e.g., geometric identities that arise when rays are implicitly tied to a planar, z-parallel image plane. This assumption holds only for perspective cameras and does not extend to fisheye or omnidirectional systems, where the corresponding surface in screen space is curved. In contrast, our derivation begins from a camera-agnostic angular parameterization in 3D and directly parameterizes Camera Sub-Frustums (CSFs) and Particle Bounding Frustums (PBFs). These different starting assumptions change the known quantities and constraints available during derivation, and thus the mathematical conditions under which closed-form bounds and culling guarantees remain valid for generic cameras.

We’d like to emphasize that mathematics in itself is a general tool, but deriving these bounds under weaker assumptions (i.e., without relying on screen-space structure) and validating them with comprehensive experiments across both pinhole and wide-FOV camera families constitute an essential part of our contribution.

This difference in parameterization also affects the downstream computational pipeline. Gaussian association, tiling/duplication, and sorting depend on how Gaussians map to tiles or bins. Using an angular (CSF-based) domain yields a different bucketing structure compared to pinhole screen-space tiling. This is practically important for large-FoV scenes: under generic cameras a Gaussian’s bound may appear partially “behind” the camera in screen coordinates (see Fig.~\ref{fig:correct-omni}), causing screen-space methods to produce mirrored or incorrect tangent-based bounds (dashed line in Fig.~\ref{fig:correct-omni}). To robustly handle these cases, we introduce mirror parameters in the association step (Section~\ref{sec:csf}) and incorporate the Gaussian center into sorting and correction—components unnecessary in perspective-only derivations.
}

\section{Appendix: BEAP Image Inverse Interpolation}\label{SAI}

\begin{figure}[t]
    \centering
    \includegraphics[width=0.7\linewidth]{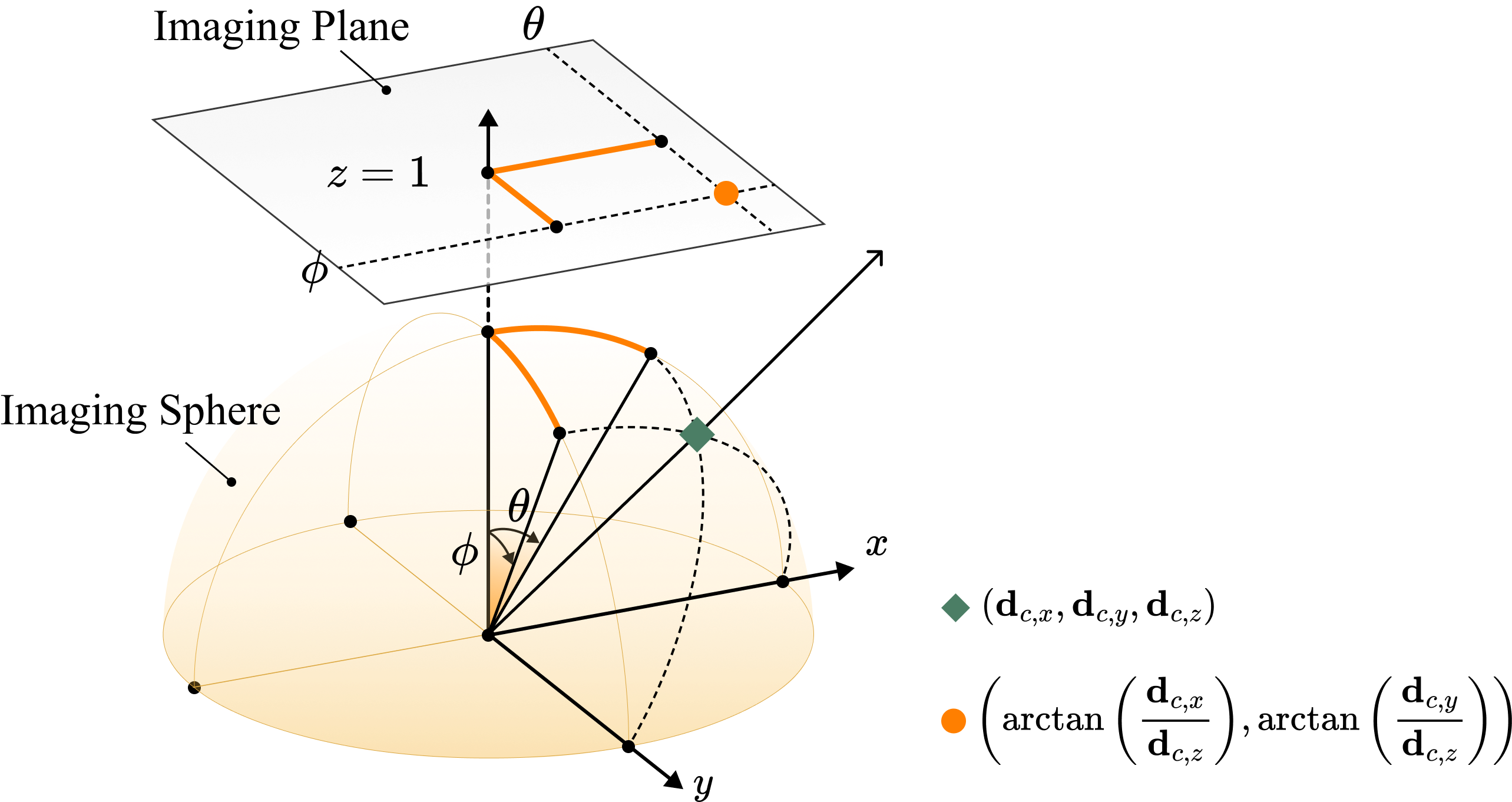}
    \caption{\textbf{Spherical Angles Defining PBF and BEAP}. A view-space direction \(\mathbf{d}_c\) is first normalized onto the optical sphere (green spade), followed by angular projections onto the \(xz\) and \(yz\) planes. The orange point denotes the resulting coordinate \((\theta,\phi)\) on the unfolded BEAP imaging plane. 
    }
    \label{fig:beap}
\end{figure}

To prepare training data, we sample rays to comprehensively cover the original image FoVs and use them as input to our \texttt{CUDA} renderer. 
Simultaneously, we interpolate the ground truth color for each input ray to ensure accurate and uniform supervision during training.  

The projection of sampled rays onto observed images begins with transforming the incidence angle projections \(\theta\) and \(\phi\) into normalized ray direction vectors \(\mathbf{d}=(x,y,z)^\top\), expressed as:

\begin{align}
    x&=\frac{\sin\theta\cos\phi}{\sqrt{\sin^2\theta\cos^2\phi+\cos^2\theta\sin^2\phi+\cos^2\theta\cos^2\phi}},\\
    y&=\frac{\cos\theta\sin\phi}{\sqrt{\sin^2\theta\cos^2\phi+\cos^2\theta\sin^2\phi+\cos^2\theta\cos^2\phi}},\\
    z&=\frac{\cos\theta\cos\phi}{\sqrt{\sin^2\theta\cos^2\phi+\cos^2\theta\sin^2\phi+\cos^2\theta\cos^2\phi}}.
\end{align}

Several camera models are provided to subsequently project the normalized rays \((x,y,z)^\top\) into pixel space \((x_p,y_p,1)^\top\).  

\addcontentsline{toc}{subsection}{Pinhole Camera Model}
\paragraph{Pinhole Camera Model.}
For pinhole camera models, we first have \(\frac{x}{z}=\tan\theta\) and \(\frac{y}{z}=\tan\phi\). Thus, we have:
\begin{align}
    \begin{bmatrix}
        x_p\\y_p\\1
    \end{bmatrix}=\begin{bmatrix}
        f_x&0&c_x\\
        0&f_y&c_y\\
        0&0&1
    \end{bmatrix}\begin{bmatrix}
        \tan\theta\\
        \tan\phi\\
        1
    \end{bmatrix}.
\end{align}

\addcontentsline{toc}{subsection}{KB Fisheye Model}
\paragraph{KB Fisheye Model.}
We have the incidence angle \(\varphi\) derived from its projections:
\begin{align}
r=\sqrt{\tan^2\theta+\tan^2\phi}\quad\text{and}\quad\varphi=\arctan(r).
\end{align}
And we have the fisheye distorted angle:
\begin{align}
\varphi_d=\varphi(1+k_1\varphi^2+k_2\varphi^4+k_3\varphi^6+k_4\varphi^8).
\end{align}
As a result, the rays are projected onto pixel space, allowing for inverse color interpolation from:  
\begin{align}
    \begin{bmatrix}
        x_p\\y_p\\1
    \end{bmatrix}=\begin{bmatrix}
        \frac{f_x\varphi_d}{r}&0&c_x\\
        0&\frac{f_y\varphi_d}{r}&c_y\\
        0&0&1
    \end{bmatrix}\begin{bmatrix}
       \tan\theta\\
        \tan\phi\\
        1
    \end{bmatrix}.
\end{align}


\section{Appendix: High-level Algorithmic Summary}
We provide a high-level overview of our volumetric rendering pipeline (See Alg.~\ref{alg:raster} and Alg.~\ref{alg:one}). For clarity, we follow the structure of 3DGS~\cite{kerbl20233dgs} to illustrate differences: newly introduced modules or replaced ones are highlighted in {\bf\color{Green}green}, and significantly modified ones in {\bf\color{Dandelion}yellow}.

\begin{algorithm}[th]
\SetAlgoNoLine
\SetKwFunction{FMain}{3DGEER}
  \SetKwProg{Fn}{Function}{:}{}
\Fn{\FMain{$\boldsymbol{\mu},\mathbf{s}, \mathbf{q},c,\sigma, \mathbf{r}_c, V$}}{

$\mathrm{CullGaussian}(\boldsymbol{\mu},V)$\Comment{Frustum Culling}

$F\gets\mathrm{InitCSFs}(w,h,K)${\bf\color{Green}\Comment{Partition}}

$\Sigma_c,\boldsymbol{\mu}_c\gets\mathrm{ViewspaceGaussians}(\mathbf{s}, \mathbf{q},V)${\color{Dandelion}\bf\Comment{Transform}}

$F_{ID} \gets \mathrm{PBF\_Intersect}(\Sigma_c,\boldsymbol{\mu}, F)$
{\bf\color{Green}\Comment{Sub-frustum Asso.}}



$\mathrm{W}_{\mathrm{pca}}\gets \mathrm{PCA\_Whitening}(\mathbf{s}, \mathbf{q})$ {\color{Green}\bf\Comment{Transform}}

$\mathcal{G}_{ID}, F_K \gets \mathrm{DuplicateWithKeys}(\mathrm{W}_{\mathrm{pca}}, F_{ID})$

{\bf\color{Dandelion}\Comment{Indices (Gaussian) \& Keys (FrustumID+Depth)}}

$\mathrm{SortByKeys}(\mathcal{G}_{ID}, F_K)$ \Comment{Globally Depth Sort}

$R\gets\mathrm{IdentifyFrustumRanges}(F,F_K)$

$I\gets\mathbf{0}${\bf\color{Dandelion}\Comment{Init Canvas (w/ Rays)}}

\For{$\mathbf{all}\ \mathrm{CSFs}\ f\gets I$}{
\For{$\mathbf{all}\ \mathrm{Rays}\ i\gets f$}{
$r\gets\mathrm{GetFrustumRange}(R,f)$

$\mathbf{m}_u, \mathbf{d}_u \gets \mathrm{Cano\_Ray}(\mathrm{W}_{\mathrm{pca}},\boldsymbol{\mu},i)$ 
{\bf\color{Green}\Comment{Transform}}

$\mathrm{D}_{\boldsymbol{\mu},\Sigma}\gets\mathrm{Dist}(\mathbf{m}_u, \mathbf{d}_u)${\bf\color{Green}\Comment{Mah-distance}}

$I[i]\gets\mathrm{BlendInOrder}(i,\mathcal{G}_{ID},r,F_K,\mathrm{D}_{\boldsymbol{\mu},\Sigma},c,\sigma)$

{\Comment{Alpha Blending}}
}
}
\KwRet{$I$}

}
\caption{GPU Software Asso. \& Render.
}
\label{alg:raster}
\end{algorithm}

\begin{algorithm}[th]



\Comment{Init Gaussian Attributes}

$\boldsymbol{\mu} \gets \mathrm{SfM\ Points}$ \Comment{Position}

$\mathbf{s}, \mathbf{q} \gets \mathrm{InitAttributes}()$ \Comment{Scaling, Rotation}


$c,\ \sigma \gets \mathrm{InitCoefficients}()$\Comment{Color, Opacity}

\While{$\mathrm{not\ converged}$}{
$\mathbf{r}_c, \hat{C}\gets\mathrm{BEAP}(w,h,K)${\color{Green}\bf\Comment{View Rays / Colors}}




{\color{Green}\bf\Comment{Asso. \& Render}}

$C \gets \mathrm{3DGEER}(\boldsymbol{\mu},\mathbf{s}, \mathbf{q},c,\sigma, \mathbf{r}_c, V)${\Comment{See Alg.~\ref{alg:raster}}}

$\mathcal{L}\gets\mathrm{Loss}(C,\hat{C})$\Comment{Loss}

$\boldsymbol{\mu},\mathbf{s}, \mathbf{q},c,\sigma\gets\mathrm{Adam}(\nabla\mathcal{L})${\bf\color{Green}\Comment{Backprop \& Step}}
 
\If{$\mathrm{IsRefinementIteration}(i)$}{
$\boldsymbol{\mu},\mathbf{s}, \mathbf{q},c,\sigma\gets\mathrm{Optimize}(\boldsymbol{\mu},\mathbf{s}, \mathbf{q},c,\sigma)$\Comment{\cite{kerbl20233dgs}}
}
}
\caption{3DGEER Initialization and Training Framework\\
$w,h$: width and height of the training images\\
$K,V$: intrinsics and extrinsic meta info.
}
\label{alg:one}
\end{algorithm}


\section{Appendix: Additional Implementation Details}
\label{ap:sec:more:impl:detail}

\addcontentsline{toc}{subsection}{Baselines}
\paragraph{Baselines.}
We compare 3DGEER with leading volumetric particle rendering methods targeting projective approximation errors in 3DGS~\citep{kerbl20233dgs}. For large FoV fisheye inputs, we evaluate against FisheyeGS~\citep{liao2024fisheyegs}, EVER~\citep{mai2024ever}, and 3DGUT~\citep{wu2024:cvpr:3dgut}. Under pinhole settings, we additionally compare with 3DGS and 3DGRT~\citep{MoenneLoccoz:2024:tog:3dgrt}.

\addcontentsline{toc}{subsection}{Implementation Details}
\paragraph{Implementation Details.}
To ensure fair comparison, we align our implementation with the original 3DGS in terms of initialization and hyperparameters on the ScanNet++ and MipNeRF360 datasets. For ZipNeRF, we adopt the training configurations of EVER and SMERF~\citep{DuckworthHRZTLS:siggraph:2024:smerf} for all candidate methods. Similar to most ray-based frameworks, our method does not directly access gradients of \texttt{mean2D} to control densification or pruning. Instead, we optimize using gradients of the view-space \texttt{mean3D}. Unless otherwise noted, all models are trained with consistent supervision strategies and architectural settings. For fisheye datasets (ScanNet++ and ZipNeRF), we precompute bijective grid mappings~\citep{Guo:etal:cvpr2025:dac} and reproject all rendered outputs to the native KB-Fisheye image space~\citep{kannala2006generic} before evaluation. Experiments on ScanNet++ and MipNeRF360 are conducted using full-resolution images. For ZipNeRF, due to the high frame count and extremely large fisheye resolution, we use the maximum resolution (see Tab.~\ref{tab:zipnerf}) that allows all baselines to run reliably on a local desktop setup. 


\addcontentsline{toc}{subsection}{Central and peripheral regions in ScanNet++ evaluation}
\paragraph{Central and peripheral regions in ScanNet++ evaluation.} The rays sampled in the central part follows the range: \[\theta \in [- \frac{\mathrm{w}}{2 f_x}, \frac{\mathrm{w}}{2 f_x}], \phi \in [-\frac{\mathrm{h}}{2 f_y}, \frac{\mathrm{h}}{2 f_y}].\]
And the Peripheral part takes the rays sampled in remaining angle ranges. Here $\mathrm{w}, \mathrm{h}$ indicate the width and height of the image and $f_x, f_y$ are the focal lengths defined in the original KB camera model.

\addcontentsline{toc}{subsection}{Cross-camera generalization evaluation on ZipNeRF}
\paragraph{Cross-camera generalization evaluation on ZipNeRF.} ZipNeRF provides two sets of data: fisheye raw images and their corresponding undistorted frames, which are generated by applying distortion correction and cropping the central regions. Although the frames are aligned on a per-frame basis, camera poses and point clouds were independently reconstructed using COLMAP~\citep{schoenberger2016sfm} for each set. Due to COLMAP's global optimization independently refining camera trajectories, applying a single rigid transformation to align the two pose sets is insufficient and may introduce alignment errors.
To ensure reliable cross-camera evaluation without introducing errors from structure-from-motion (SfM), we always use the same set of test-time camera poses as those used during training. In cross-camera experiments, we vary only the intrinsic parameters, while keeping the extrinsic poses fixed. This ensures fair comparisons without entangling pose inconsistencies arising from preprocessing.

\section{Appendix: EWA Splatting Preliminaries}
\label{EWA}

\addcontentsline{toc}{subsection}{From camera coordinates to ray coordinates}
\paragraph{From camera coordinates to ray coordinates.}
As the coordinate definitions in EWA Splatting \citep{zwicker2001ewa}, we have the following mapping function from the camera (view) space points $\mathbf{t}$ to the ray space points $\mathbf{x}$ for projective transformation in pinhole camera models, \textit{i.e.}, $\tau:\ \mathbf{t}\xrightarrow{\sim}\mathbf{x}$:
\begin{align}
\mathbf{x}&=\begin{bmatrix}x_{0}\\x_{1}\\x_{2}\end{bmatrix}=\tau(\mathbf{t})=
\begin{bmatrix}
t_{0}/t_{2}\\
t_{1}/t_{2}\\
\left\| (t_{0},t_{1},t_{2})^\top \right\|
\end{bmatrix},
\\
\label{equ:three}
\mathbf{t}&=\begin{bmatrix}t_{0}\\t_{1}\\t_{2}\end{bmatrix}=\tau^{-1}(\mathbf{x})=
\begin{bmatrix}
x_{0}/l\cdot x_{2}\\
x_{1}/l\cdot x_{2}\\
1/l\cdot x_{2}
\end{bmatrix},
\\
     ,\ \mathrm{where}\ l&=\left\| (x_{0},x_{1},1)^\top \right\|.
\end{align}
These mappings are not affine, which requires computationally expensive \citep{westover1990footprint, mao1995splatting} or locally approximated solutions \citep{zwicker2001ewa, kerbl20233dgs} to calculate integral in the ray space.

\addcontentsline{toc}{subsection}{Local affine approximation}
\paragraph{Local affine approximation.} 
EWA splatting by \citet{zwicker2001ewa} introduced local affine approximation, denoted as $\tau_k$, to keep the linear effects and the $k$-indexed reconstruction kernels as Gaussians in 2D screen space. It is defined by the first two terms of the Taylor expansion of $\tau$ at the camera space point $\mathbf{t}_{k}$, then the ray space point $\mathbf{x}$ near around $\mathbf{x}_{k}$ can be estimated by:
\begin{align}
    \mathbf{x}&=\tau(\mathbf{t})=\mathbf{x}_{k}+\mathbf{J}_{\tau}^{\mathbf{t}_k}(\mathbf{t}-\mathbf{t}_{k}),
\end{align}
where $\mathbf{J}_{\tau}^{\mathbf{t}_k}$ is the Jacobian matrix given by the partial derivatives of $\tau$ at the point $\mathbf{t}_{k}$.
For the $k$-indexed 3D Gaussian $\mathcal{G}_{\Sigma,\boldsymbol{\mu}}$ in a certain camera coordinate, using EWA local affine transformation, the approximated density function can be expressed as a 3D Gaussian $\mathcal{G}_{\Sigma_{k},\mathbf{x}_{k}}$ in the ray coordinate.
 For $\forall \mathbf{x}$ around $\mathbf{x}_{k}$, we have:
\begin{align}
\mathcal{G}_{\Sigma_{k},\mathbf{x}_{k}}(\mathbf{x})&=\frac{1}{(2\pi)^{\frac{3}{2}}|\Sigma_k|^{\frac{1}{2}}}\exp\left({-\frac{1}{2}(\mathbf{x}-\mathbf{x}_{k})^\top\Sigma_{k}^{-1}(\mathbf{x}-\mathbf{x}_{k})}\right),\\
\mathrm{where}\ \Sigma_{k}&=\mathbf{J}_{\tau}^{\mathbf{t}_k}\Sigma\left(\mathbf{J}_{\tau}^{\mathbf{t}_k}\right)^\top.
\end{align}

Integrating a 3D Gaussian $\mathcal{G}_{\Sigma_{k},\mathbf{x}_{k}}^{(3)}$ along one coordinate axis yields a 2D Gaussian $\mathcal{G}^{(2)}_{\hat{\Sigma_{k}},\hat{\mathbf{x}}_{k}}$ , hence:
\begin{equation}
\int_{\mathbb{R}}\mathcal{G}_{\Sigma_{k},\mathbf{x}_{k}}^{(3)}(\mathbf{x})\, dx_{2}=\mathcal{G}^{(2)}_{\hat{\Sigma}_{k},\hat{\mathbf{x}}_{k}}(\hat{\mathbf{x}}),
\end{equation}
where $\hat{\mathbf{x}}=\begin{bmatrix}x_{0}\\x_{1}\end{bmatrix}$, and the mean $\hat{\mathbf{x}}_{k}=\begin{bmatrix}x_{k,0}\\x_{k,1}\end{bmatrix}$, and the $2\times2$ variance matrix $\hat{\Sigma}_{k}$ is easily obtained from the $3\times 3$ matrix $\Sigma_{k}$ by skipping the third row and column.

\addcontentsline{toc}{subsection}{Splatting approximation error}
\paragraph{Splatting approximation error.}\label{sec:3-2}
Although we can not expect the density function in ray space still obey the Gaussian distribution when we apply general projective transformation to the original view space, exact density can be derived through substitution via Eq.~\ref{equ:three}:
\begin{equation}
\begin{aligned}
\mathcal{G}_{\Sigma,\boldsymbol{\mu}}(\mathbf{t})&=\mathcal{G}_{\Sigma,\boldsymbol{\mu}}\left(\tau^{-1}(\mathbf{x})\right)
\\
&=\frac{1}{(2\pi)^{\frac{3}{2}}|\Sigma_k|^{\frac{1}{2}}}\exp\left({-\frac{1}{2}\left(\tau^{-1}(\mathbf{x})-\mathbf{t}_{k}\right)^\top\Sigma_{k}^{-1}\left(\tau^{-1}(\mathbf{x})-\mathbf{t}_{k}\right)}\right),
\end{aligned} 
\end{equation}

which means that the approximation error can be estimated through Monte Carlo integration of the following equation:
\begin{equation}\label{equ:9}
\mathcal{E}_{k}=\int_{\mathbb{R}}\left|\mathcal{G}_{\Sigma,\boldsymbol{\mu}}\left(\tau^{-1}(\mathbf{x})\right)-\mathcal{G}_{\Sigma_{k},\mathbf{x}_{k}}(\mathbf{x})\right|\, dx_{2}.
\end{equation}
Note that $x_2$ represents the dimension along the ray direction in the ray coordinate.

\section{Appendix: Advantage over Prior Ray-Tracing Methods}

\begin{figure}[t]
    \centering
    \includegraphics[width=1\linewidth]{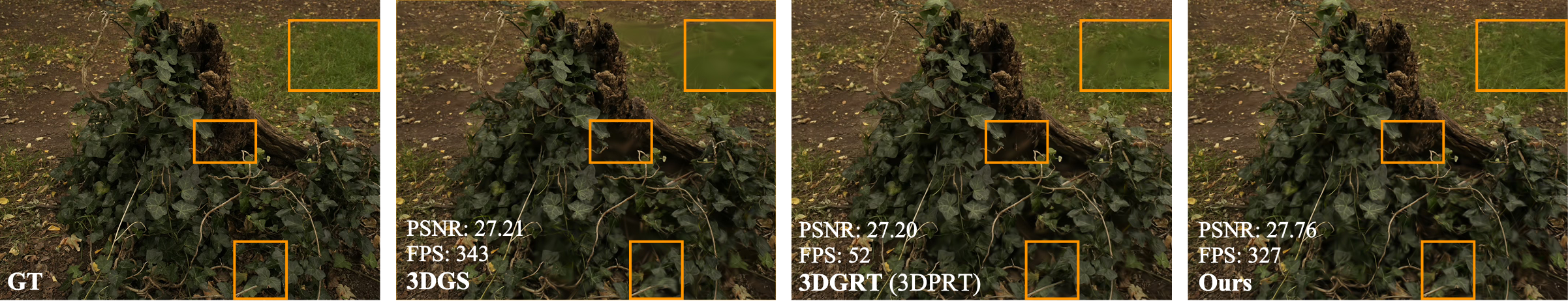}
    \caption{\textbf{Qualitative Results on the Stump (Mip-NeRF360).} 3DGEER outperforms both 3DGS-based (Gaussian Splatting) and 3DPRT-based (Particle Ray Tracing) methods on novel view synthesis benchmarks under (\textit{Left}) MipNeRF360 (pinhole), while maintaining high runtime efficiency. Larger differences are visible in the colored boxes.}
    \label{fig:leaf}
\end{figure}

Pure ray-tracing (e.g., 3DGRT) can be ideally projective-exact, but in practice the BVH adopted in 3DGRT for intersection tests uses inscribed proxy icosahedron for faster speed, which is tight at vertices but face centers lie $\sim$20\% inward based on icosahedron's property. Proxy shapes may degrade intersection accuracy and further give worse accumulated gradients during density control and eventually lead to incorrect geometry (e.g., missing leaves in Fig.~\ref{fig:leaf}). Further, intersection tests for far, tiny, and cluttered Gaussians are limited by floating-point precision that causes noisy rendering. (E.g., in Fig. 15 of the 3DGRT paper, native rendering in the leftmost block contains visible speckles in the tree and background areas.) Even 3DGUT with UT approximation outperforms 3DGRT on MipNeRF360. 
For validation, we use 3DGRT's ray-tracing and rendering engine and our 3DGEER rendering pipeline to render from the same pretrained MipNeRF360-stump and compare on PSNR$\uparrow$/ SSIM$\uparrow$/ LPIPS$\downarrow$: 3DGRT- 25.79/ 0.745/ 0.264; ours- 26.83/ 0.774/ 0.246 (+1.04/ +0.029/ -0.018).
3DGEER builds a projective-exact and efficient pipeline and further and optimizes with BEAP color supervision, achieving much faster speed and better accuracy (Tab.~\ref{tab:mipnerf360}).  

\section{Appendix: Results on Egocentric Dataset}
\label{sec:ego}
\begin{figure}[t]
    \centering
    \includegraphics[width=1\linewidth]{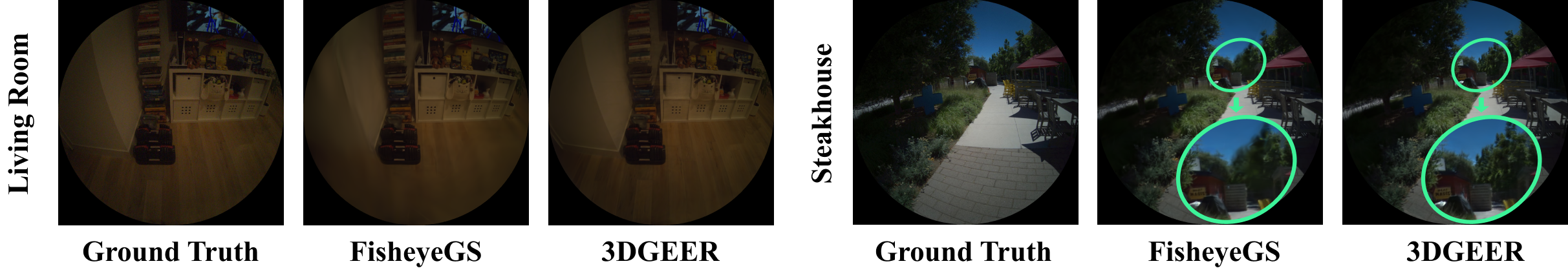}
    \caption{\textbf{Qualitative Results on the Aria Dataset.}  
Our method produces sharper floor textures in indoor scenes and clearer distant tree details in outdoor scenes.}

    \label{fig:aria}
\end{figure}

\begin{table}[t]
\centering
\caption{\textbf{Quantitative comparison on Aria}. Higher PSNR/SSIM and lower LPIPS are better.}
\begin{tabular}{l l c c c c}
\toprule
Method & Scene && PSNR$\uparrow$ & SSIM$\uparrow$ & LPIPS$\downarrow$ \\
\midrule
\multirow{3}{*}{3DGEER (Ours)} 
& livingroom & (indoor) & \textbf{36.398} & \textbf{0.933} & \textbf{0.226} \\
    & steakhouse & (outdoor) & \textbf{30.662} & \textbf{0.881} & \textbf{0.246} \\
\midrule
\multirow{3}{*}{FisheyeGS} 
& livingroom & (indoor) & 35.686 & 0.927 & 0.231 \\
& steakhouse & (outdoor) & 29.989 & 0.869 & 0.261 \\
\bottomrule
\end{tabular}
\label{tab:aria}
\end{table}

Recent egocentric devices typically feature wide FoVs, but existing pipelines struggle to maintain reconstruction quality when using full-resolution images or when restricted to cropped views. The challenge is further compounded by low-light imaging conditions—where limited light intake, especially with small sensors and aperture-constrained lenses, exacerbates degradation.

We evaluate on the Aria egocentric dataset ($\sim110^\circ$ FoV, circular shape). Our method achieves an average PSNR of 33.5dB, consistently surpassing FisheyeGS (32.8dB) across all metrics (Tab.~\ref{tab:aria}). Qualitatively, our reconstructions preserve sharper details in peripheral regions for the indoor scene (Fig.~\ref{fig:aria}, \textit{Left}) and recover distant structures such as trees in the outdoor scene (Fig.~\ref{fig:aria}, \textit{Right}). The performance gain ($+0.7$ dB) lies between the improvements observed in pinhole datasets (e.g., MipNeRF360: +0.3dB; ZipNeRF-cross: +2.0dB) and in highly distorted wide-FoV datasets (e.g., ScanNet++/ZipNeRF $180^\circ$ diagonal FoV: +0.9–4.8dB), which aligns with Aria’s moderate FoV and distortion level.

\clearpage
\section{Appendix: Additional Material and Results}
\label{app:more}
\subsection{Additional Material.}
\begin{center}
    Code, video, and an HTML page are provided in the Supplementary Material.
\end{center}

\subsection{Additional Results.}

\begin{figure}[h]
    \centering
    \includegraphics[width=0.5\linewidth]{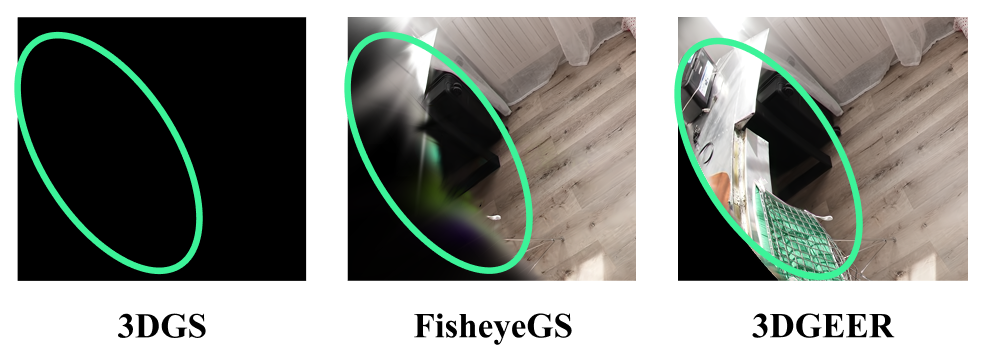}
    \caption{\textbf{Out-of-Distribution (OOD) Region in Fig.~\ref{fig:ood}.}  
3DGS collapses to black due to the numerical issue of $\tan 90^\circ \to \infty$, while our method robustly renders views up to $180^\circ$, sucessfully to reconstruct the green chair, where splat-based methods (e.g., FisheyeGS) fail. Similarly, in Fig.~\ref{fig:ood} (zoom-in), our method also preserves structures like the corridor and fridge, table and door, or window and cabinet.  
}
    \label{fig:ext:zoom}
\end{figure}

\begin{figure}[h]
    \centering
    \includegraphics[width=1.0\linewidth]{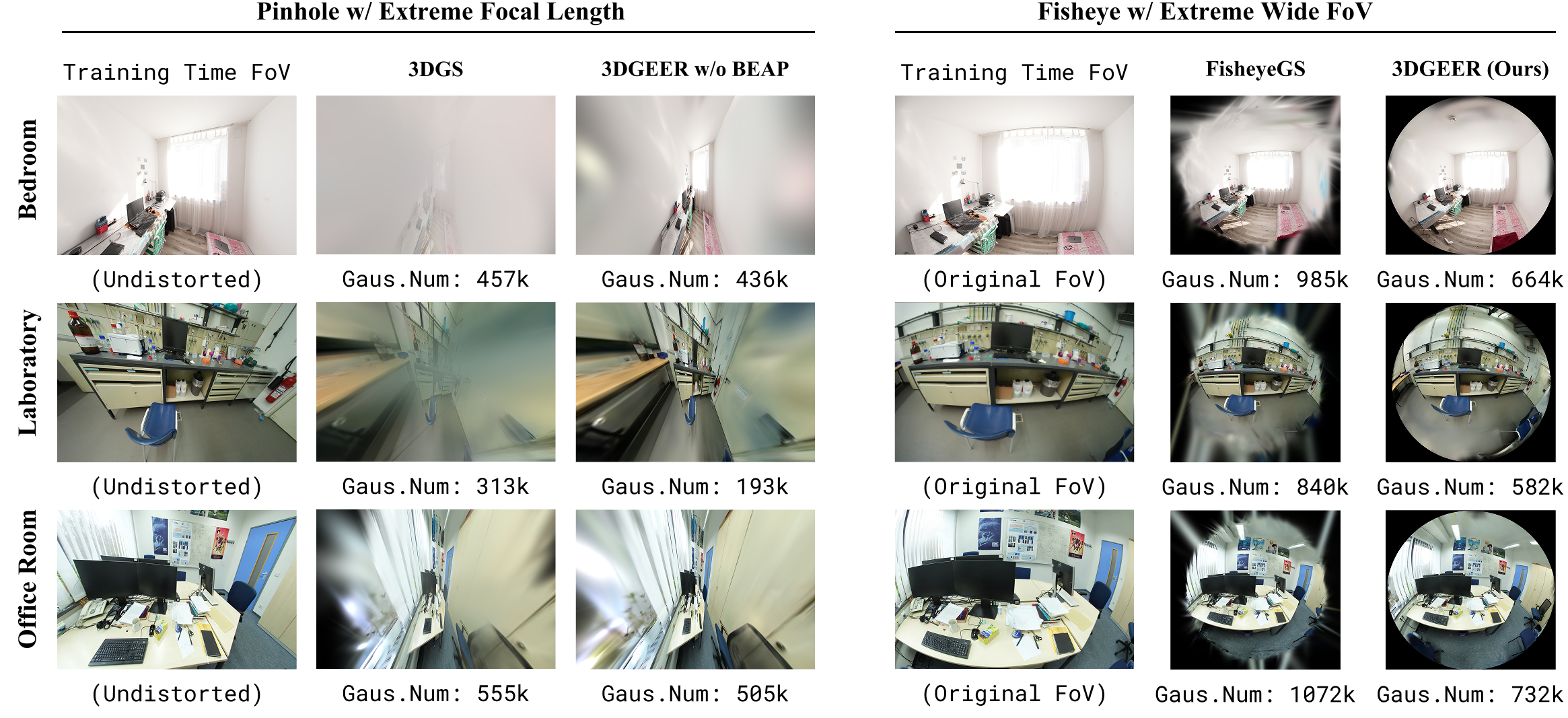}
    \caption{\textbf{Extreme-FoV Generalization.}  
(\textit{Left}) Comparison between our method and 3DGS under extreme-FoV pinhole ($170^\circ$) views (trained on undistorted central regions).  
(\textit{Right}) Zoom-out results of Fig.~\ref{fig:ood}, showing the complete circular $180^\circ$ FoV, where our method generalizes robustly to out-of-distribution regions.}
    \label{fig:ext:full}
\end{figure}



\begin{figure}[ht]
    \centering
    \includegraphics[width=1.0\linewidth]{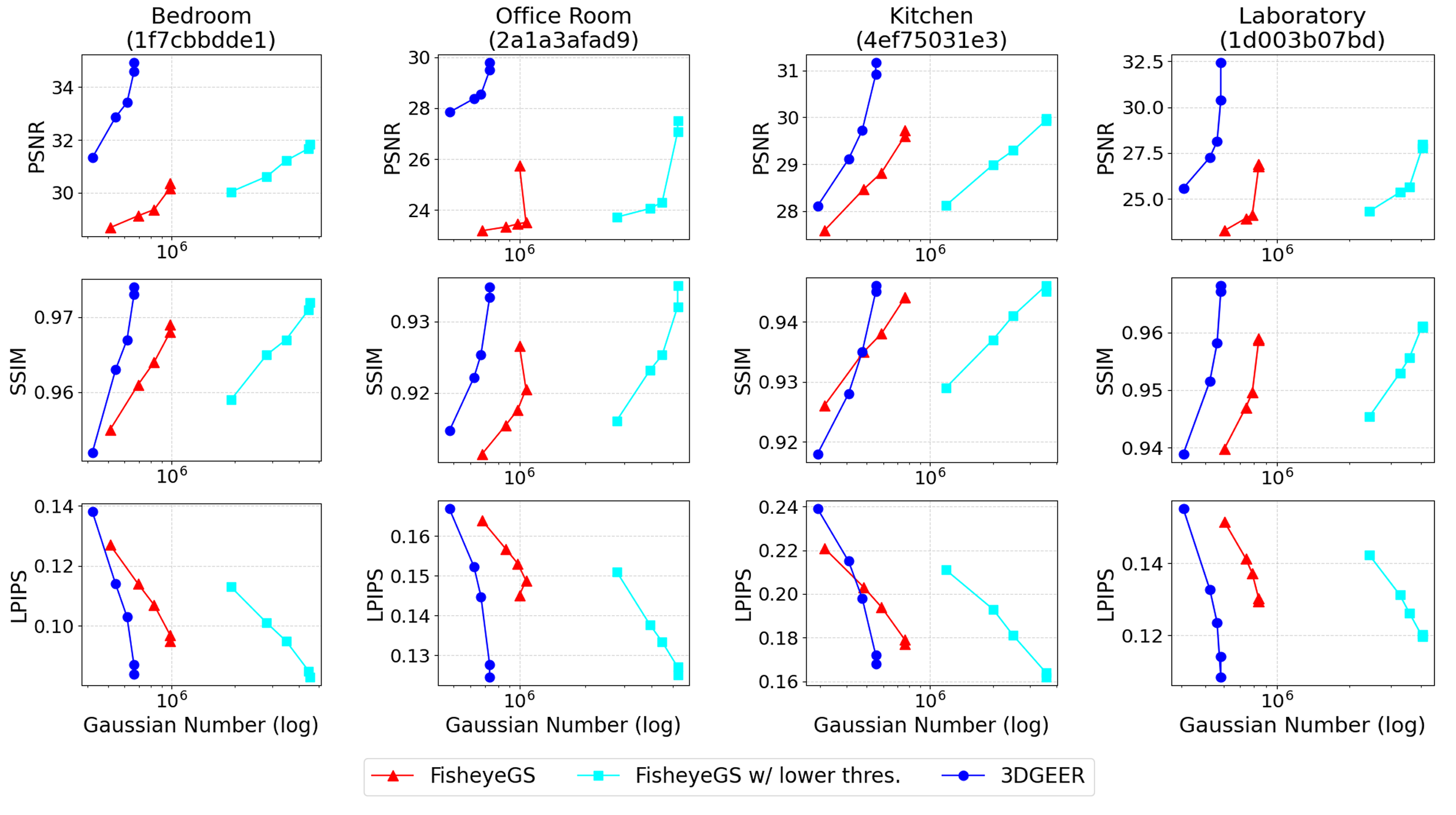}
    \caption{\textbf{PSNR$\uparrow$ SSIM$\uparrow$ LPIPS$\downarrow$ Trends When Scaling-Up Gaussian Counts.} Brute-force scaling fails to close the gap in projective exactness.}
    \label{fig:scaling-full}
\end{figure}
\begin{figure*}[ht]
    \centering
    \includegraphics[width=1.0\linewidth]{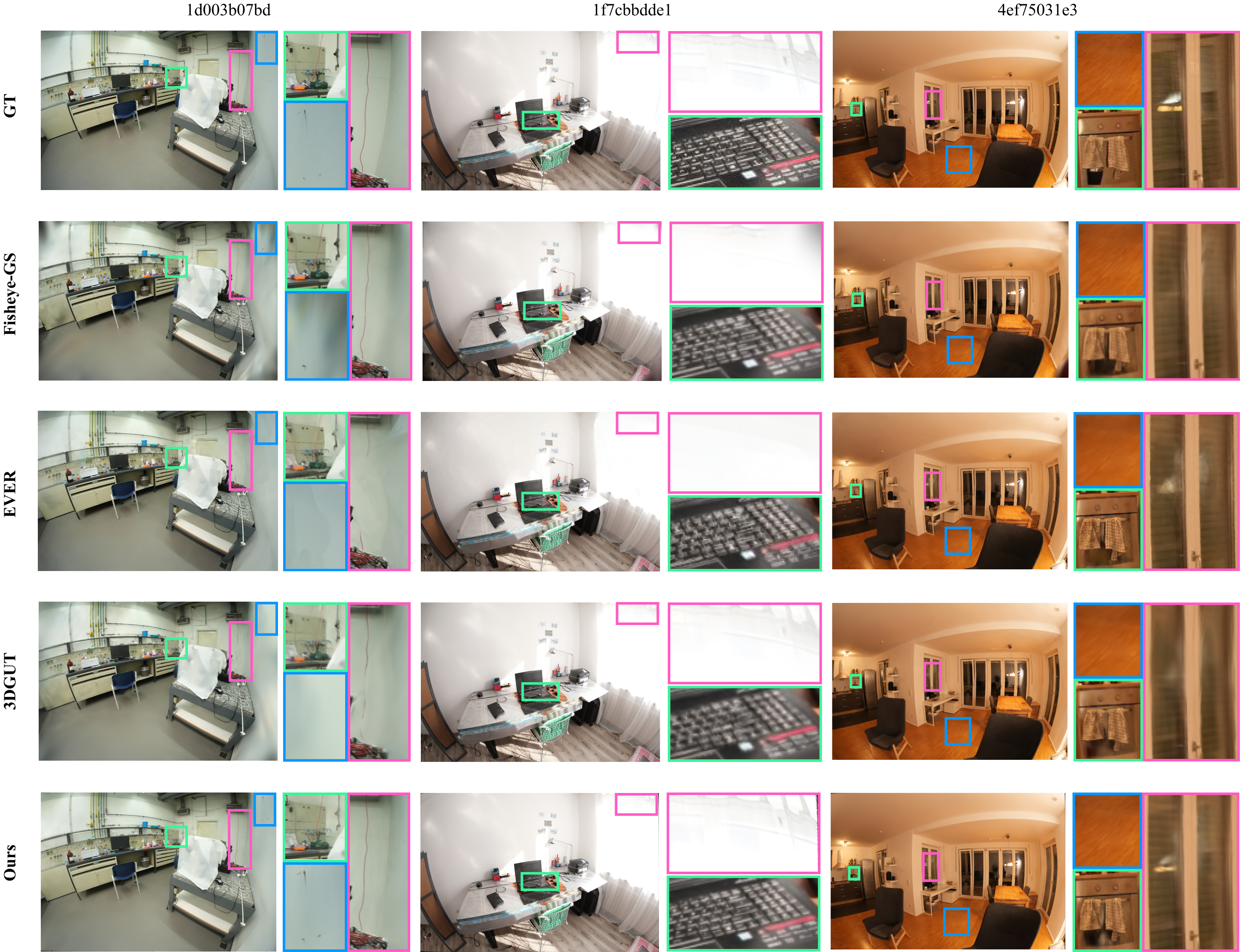}
    \caption{\textbf{Qualitative Results on the ScanNet++ Dataset.} Larger differences are visible in the zoomed-in colored boxes, particularly in structural details and artifact regions. Our method delivers superior peripheral reconstruction compared to FisheyeGS, exhibits fewer artifacts and sharper details than EVER, and produces crisper object boundaries and textures—such as wires, keyboards, and wooden floors—when compared to 3DGUT.}
    \label{fig:vis:main:scannetpp}
\end{figure*}

\begin{figure*}[ht]
    \centering
    \includegraphics[width=0.8\linewidth]{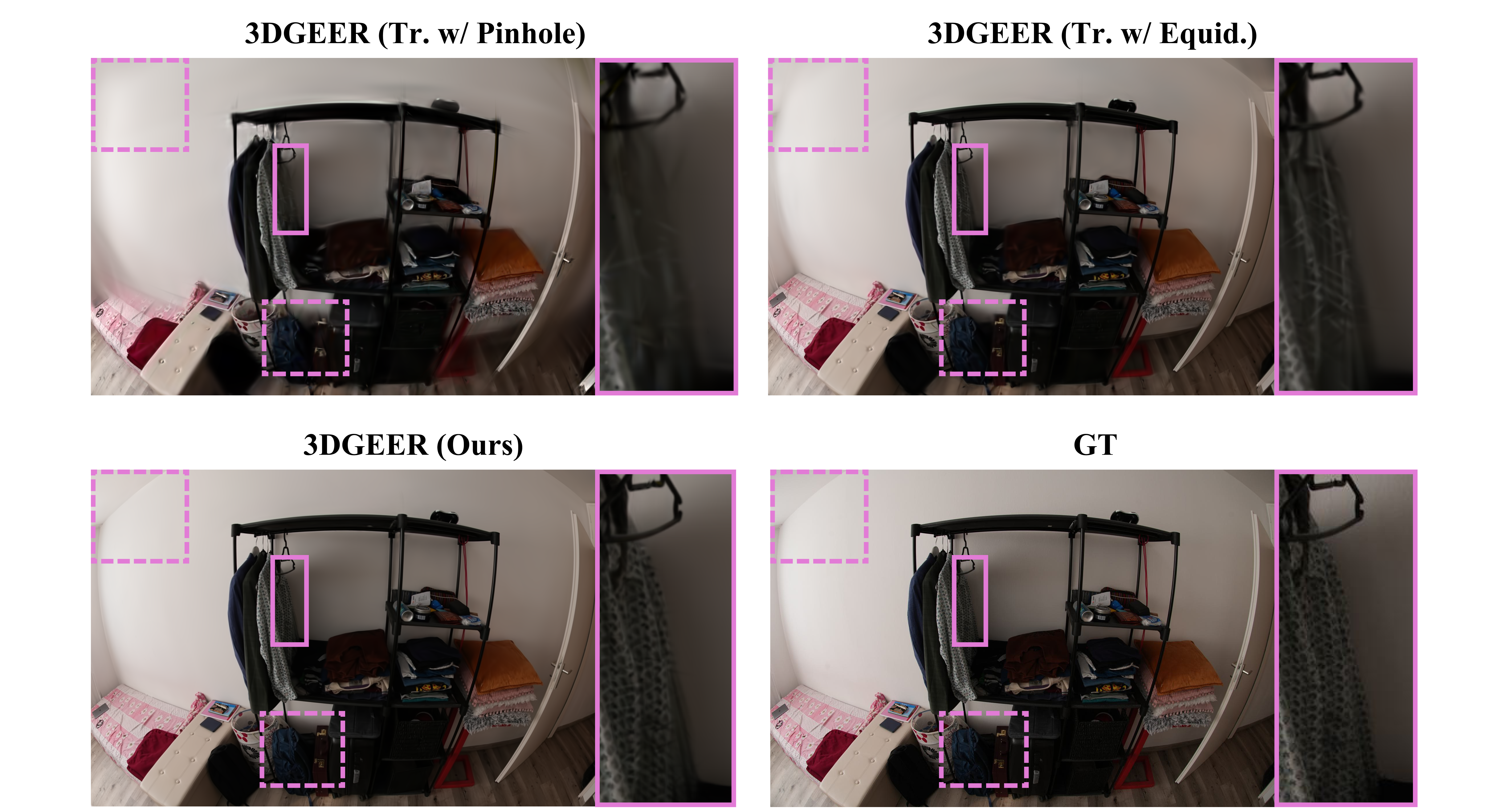}
    \caption{\textbf{Impact of BEAP supervision.}  Visual comparison on ScanNet++ shows that BEAP’s balanced angular sampling produces smoother gradients and better preservation of geometric and texture details than pinhole or equidistant sampling.  (Tab.~\ref{tab:scannet-beap-details} shows quantitative full stats).}
    \label{fig:scannet-beap-abl}
\end{figure*}

\begin{table*}[ht]
\centering
\scriptsize
\caption{\textbf{MipNeRF360 Quantitative Results by Sequence.} The overall results are shown in Tab.~\ref{tab:mipnerf360}.}
\begin{tabular}{ll|ccccccccc}
\toprule
Method & Metric & Bicycle & Bonsai & Counter & Garden & Kitchen & Stump & Flowers & Room & Treehill \\
\midrule
\multirow{3}{*}{Ours} 
& PSNR$\uparrow$  & 25.63 & 32.23 & 29.05 & 27.60 & 31.78 & 26.83 & 21.83 & 32.09 & 22.83 \\
& SSIM$\uparrow$  & 0.774 & 0.946 & 0.910 & 0.860 & 0.931 & 0.774 & 0.613 & 0.946 & 0.637 \\
& LPIPS$\downarrow$ & 0.223 & 0.175 & 0.195 & 0.136 & 0.120 & 0.246 & 0.340 & 0.108 & 0.344 \\
\bottomrule
\vspace{1em}
\end{tabular}
\label{tab:mipnerf-details}
\end{table*}

\begin{table*}[ht]
\centering
\small
\caption{\textbf{ScanNet++ Quantitative Results by Sequence (Full FoV).} The overall results are shown in Tab.~\ref{tab:scannetpp}.}
\setlength{\tabcolsep}{3pt}
\begin{tabular}{c|ccc|ccc|ccc|ccc}
\toprule
 & \multicolumn{3}{c|}{\textbf{EVER}} & \multicolumn{3}{c|}{\textbf{FisheyeGS}} & \multicolumn{3}{c|}{\textbf{3DGUT}} & \multicolumn{3}{c}{\textbf{3DGEER} (Ours)} \\
Scene& PSNR & SSIM & LPIPS & 
  PSNR & SSIM & LPIPS &
  PSNR & SSIM & LPIPS &
  PSNR & SSIM & LPIPS \\
\midrule
Storage & 28.24 & 0.916 & 0.164 & 24.74 & 0.932 & 0.146 & \textbf{29.26} & 0.934 & 0.147 & 29.18 & \textbf{0.940} & \textbf{0.141} \\
Office Room & 27.80 & 0.899 & 0.169 & 25.75 & 0.927 & 0.145 & 29.37 & 0.928 & 0.145 & \textbf{29.78} & \textbf{0.935} & \textbf{0.125} \\
Bedroom & 32.60 & 0.948 & 0.131 & 31.95 & 0.970 & 0.095 & 34.65 & 0.967 & 0.112 & \textbf{34.94} & \textbf{0.974} & \textbf{0.084} \\
Kitchen & 29.77 & 0.924 & 0.213 & 29.72 & 0.944 & 0.177 & 30.78 & 0.941 & 0.196 & \textbf{31.17} & \textbf{0.946} & \textbf{0.168} \\
Laboratory & 28.96 & 0.936 & 0.156 & 26.91 & 0.959 & 0.129 & 29.12 & 0.948 & 0.151 & \textbf{32.43} & \textbf{0.968} & \textbf{0.114} \\
\bottomrule
\end{tabular}
\vspace{1em}
\label{tab:scannet-fov-details}
\end{table*}

\begin{table*}[ht]
\small
\centering
\setlength{\tabcolsep}{1.5pt}
\renewcommand{\arraystretch}{1.2}
\caption{\textbf{Quantitative Results on the ScanNet++.} PSNR$\uparrow$ SSIM$\uparrow$ LPIPS$\downarrow$ Trends When Scaling-Up Gaussian Counts. Bedroom, Office Room, Kitchen, and Laboratory in
order.}
\begin{tabular}{l|cccc|cccc|cccc}
\toprule
 & \multicolumn{4}{c|}{\textbf{FisheyeGS}} & \multicolumn{4}{c|}{\textbf{FisheyeGS w/ Low Thres.}} & \multicolumn{4}{c}{\textbf{3DGEER}} \\
Iter & PSNR$\uparrow$ & SSIM$\uparrow$ & LPIPS$\downarrow$ & Gau$\downarrow$ 
     & PSNR$\uparrow$ & SSIM$\uparrow$ & LPIPS$\downarrow$ & Gau$\downarrow$
     & PSNR$\uparrow$ & SSIM$\uparrow$ & LPIPS$\downarrow$ & Gau$\downarrow$ \\
\midrule
4k  & 28.67 & 0.955 & 0.127 & 512K  & 30.03 & 0.959 & 0.113 & 1914K & 31.33 & 0.952 & 0.138 & 422K \\
7k  & 29.12 & 0.961 & 0.114 & 695K  & 30.61 & 0.965 & 0.101 & 2829K & 32.87 & 0.963 & 0.114 & 543K \\
10k & 29.36 & 0.964 & 0.107 & 827K  & 31.21 & 0.967 & 0.095 & 3489K & 33.42 & 0.967 & 0.103 & 616K \\
20k & 30.17 & 0.968 & 0.097 & 985K  & 31.68 & 0.971 & 0.085 & 4467K & 34.59 & 0.973 & 0.087 & 664K \\
30k & 30.36 & 0.969 & 0.095 & 985K  & 31.84 & 0.972 & 0.083 & 4535K & 34.94 & 0.974 & 0.084 & 664K \\
\bottomrule
\end{tabular}

\begin{tabular}{l|cccc|cccc|cccc}
\toprule
 & \multicolumn{4}{c|}{\textbf{FisheyeGS}} & \multicolumn{4}{c|}{\textbf{FisheyeGS w/ Low Thres.}} & \multicolumn{4}{c}{\textbf{3DGEER}} \\
Iter & PSNR$\uparrow$ & SSIM$\uparrow$ & LPIPS$\downarrow$ & Gau$\downarrow$ 
     & PSNR$\uparrow$ & SSIM$\uparrow$ & LPIPS$\downarrow$ & Gau$\downarrow$
     & PSNR$\uparrow$ & SSIM$\uparrow$ & LPIPS$\downarrow$ & Gau$\downarrow$ \\
\midrule
4k  & 23.19 & 0.911 & 0.164 & 677K  & 23.72 & 0.916 & 0.151 & 2778K & 27.84 & 0.915 & 0.167 & 480K \\
7k  & 23.33 & 0.916 & 0.157 & 867K  & 24.06 & 0.923 & 0.138 & 3936K & 28.37 & 0.922 & 0.152 & 621K \\
10k & 23.45 & 0.918 & 0.153 & 983K  & 24.31 & 0.925 & 0.133 & 4481K & 28.54 & 0.925 & 0.145 & 668K \\
20k & 23.51 & 0.921 & 0.149 & 1072K & 27.07 & 0.932 & 0.127 & 5272K & 29.49 & 0.933 & 0.128 & 732K \\
30k & 25.75 & 0.927 & 0.145 & 1001K & 27.50 & 0.935 & 0.125 & 5272K & 29.78 & 0.935 & 0.125 & 732K \\
\bottomrule
\end{tabular}

\begin{tabular}{l|cccc|cccc|cccc}
\toprule
 & \multicolumn{4}{c|}{\textbf{FisheyeGS}} & \multicolumn{4}{c|}{\textbf{FisheyeGS w/ Low Thres.}} & \multicolumn{4}{c}{\textbf{3DGEER}} \\
Iter & PSNR$\uparrow$ & SSIM$\uparrow$ & LPIPS$\downarrow$ & Gau$\downarrow$ 
     & PSNR$\uparrow$ & SSIM$\uparrow$ & LPIPS$\downarrow$ & Gau$\downarrow$
     & PSNR$\uparrow$ & SSIM$\uparrow$ & LPIPS$\downarrow$ & Gau$\downarrow$ \\
\midrule
4k  & 27.58 & 0.926 & 0.221 & 314K  & 28.12 & 0.929 & 0.211 & 1191K & 28.11 & 0.918 & 0.239 & 292K \\
7k  & 28.47 & 0.935 & 0.203 & 485K  & 28.99 & 0.937 & 0.193 & 1997K & 29.11 & 0.928 & 0.215 & 410K \\
10k & 28.82 & 0.938 & 0.194 & 588K  & 29.29 & 0.941 & 0.181 & 2491K & 29.73 & 0.935 & 0.198 & 476K \\
20k & 29.60 & 0.944 & 0.179 & 760K  & 29.93 & 0.946 & 0.164 & 3563K & 30.91 & 0.945 & 0.172 & 552K \\
30k & 29.72 & 0.944 & 0.177 & 760K  & 29.97 & 0.945 & 0.162 & 3563K & 31.17 & 0.946 & 0.168 & 552K \\
\bottomrule
\end{tabular}

\begin{tabular}{l|cccc|cccc|cccc}
\toprule
 & \multicolumn{4}{c|}{\textbf{FisheyeGS}} & \multicolumn{4}{c|}{\textbf{FisheyeGS w/ Low Thres.}} & \multicolumn{4}{c}{\textbf{3DGEER}} \\
Iter & PSNR$\uparrow$ & SSIM$\uparrow$ & LPIPS$\downarrow$ & Gau$\downarrow$ 
     & PSNR$\uparrow$ & SSIM$\uparrow$ & LPIPS$\downarrow$ & Gau$\downarrow$
     & PSNR$\uparrow$ & SSIM$\uparrow$ & LPIPS$\downarrow$ & Gau$\downarrow$ \\
\midrule
4k  & 23.27 & 0.940 & 0.152 & 602K  & 24.32 & 0.945 & 0.142 & 2429K & 25.59 & 0.939 & 0.155 & 406K \\
7k  & 23.94 & 0.947 & 0.141 & 740K  & 25.36 & 0.953 & 0.131 & 3258K & 27.27 & 0.952 & 0.133 & 524K \\
10k & 24.12 & 0.950 & 0.137 & 788K  & 25.65 & 0.956 & 0.126 & 3569K & 28.14 & 0.958 & 0.124 & 562K \\
20k & 26.76 & 0.959 & 0.130 & 838K  & 27.77 & 0.961 & 0.120 & 4035K & 30.39 & 0.967 & 0.108 & 582K \\
30k & 26.91 & 0.959 & 0.129 & 838K  & 27.97 & 0.961 & 0.120 & 4035K & 32.43 & 0.968 & 0.114 & 582K \\
\bottomrule
\end{tabular}
\label{tab:scannet-scaling-details}
\end{table*}

\begin{table}[ht]
\centering
\small
\caption{\textbf{Runtime Analysis vs. Other Methods.} 3DGEER is compared with efficient splatting-based methods on MipNeRF360 and ScanNet++, with a thorough comparison in stage-wise and total runtime.}
\setlength{\tabcolsep}{4pt}
\begin{tabular}{l|l|ccccc}
\toprule
 Dataset & Method & \multicolumn{5}{c}{Avg. Timings (ms) $\downarrow$} \\
 & & Prep. & Dup. & Sort & Render & Total \\
\midrule
\texttt{MipNeRF360} & 3DGS & 0.32 & 0.27 & 0.65 & 1.40 & 2.92\\
& 3DGEER (Ours) & 0.37 & 0.17 & 0.27 & 2.10 & 3.06\\
\midrule
\texttt{ScanNet++} & FisheyeGS & 0.10 & 0.63 & 1.45 & 2.33 & 4.70\\
& 3DGEER (Ours)  & 0.13 & 0.26 & 0.59 & 2.89 & 3.98\\
\bottomrule
\end{tabular}
\label{tab:ablation:asso:time}
\end{table}


\begin{table}[ht]
\centering
\small
\caption{\textbf{Runtime Analysis: Asso. Method on ScanNet++.} The efficiency of the PBF-based association is compared to the splatting-based EWA and UT with acceleration option SnugBox integrated into 3DGEER pipeline. 
}
\setlength{\tabcolsep}{6pt}
\begin{tabular}{l|ccccc|c}
\toprule
{3DGEER}& \multicolumn{5}{c|}{Avg. Timings (ms) $\downarrow$} & Avg.\\
 Asso.  & Prep. & Dup. & Sort & Render & Total & FPS $\uparrow$\\
\midrule
 EWA & 0.13&1.17&3.27&12.69&17.84&56\\
+ SnugBox &0.13&0.60&1.42&6.22&8.68&115\\
 \midrule
 UT & 0.13 & 0.71&1.94&7.97&11.16&90\\
 + SnugBox & 0.13&0.28&0.64&3.16&4.38&228\\
 \midrule
 PBF (Ours)&  0.13 & \textbf{0.26} & \textbf{0.59} & \textbf{2.89} & \textbf{3.98}& \textbf{251}\\
\bottomrule
\end{tabular}
\vspace{-1em}
\label{tab:ablation:timing}
\end{table}

\begin{table}[ht]
\centering
\small
\caption{\textbf{Impact of PBF-CSF Association.} The evaluation employs the artifact-sensitive LPIPS metric. To assess cross-model robustness, each variant is also tested using models trained with the other association strategies. Additionally, we report the memory cost during rendering along with the average number of Gaussian association per Tile, and the performance gap $\Delta$LPIPS between in-method and cross-method evaluations to quantify consistency and generalization.}
\setlength{\tabcolsep}{3pt}
\begin{tabular}{l|cc|c|c|c|c}
\toprule
Asso. Method & \multicolumn{2}{c|}{Gaus. $\downarrow$ per Tile} & Mem. $\downarrow$& LPIPS $\downarrow$ & Cross-LPIPS $\downarrow$  &  $\Delta$LPIPS $\downarrow$  \\
& mean & std & (GB) &&&(1e-2)\\
\midrule
 EWA & 2203 & 1447 &2.2& 0.1250 & 0.1321 &   0.71  \\
 UT & 1377 & 1058 & 1.4 & 0.1249 & 0.1302 & 0.53\\
\midrule
PBF (Ours) & \textbf{475} & \textbf{427} & \textbf{0.63} & \textbf{0.1245} & \textbf{0.1251} &\textbf{0.06}  \\
\bottomrule
\end{tabular}
\vspace{-1em}
\label{tab:ablation:asso}
\end{table}

\begin{table*}[ht]
\centering
\smaller
\caption{\textbf{Training-Time Transmittance Func. Replacement.} The overall results are shown in Tab.~\ref{tab:scannetpp_trans}.}
\setlength{\tabcolsep}{3pt}
\begin{tabular}{c|cccc|cccc|cccc}
\toprule
 & \multicolumn{4}{c|}{\textbf{3DGS Splats + PBF}} & \multicolumn{4}{c|}{\textbf{FisheyeGS Splats + PBF}} & \multicolumn{4}{c}{\textbf{3DGEER} (Ours)} \\
Scene & PSNR & SSIM  & 
  LPIPS & Gaus. Num (k) &
  PSNR & SSIM & LPIPS & Gaus. Num (k)&
  PSNR & SSIM & LPIPS & Gaus. Num (k)\\
\midrule
Storage & 19.27&0.859&0.173&1293&24.75&0.933&0.150&781& {29.18} &{0.940} & {0.141} &429\\
Office Room & 20.89&0.848&0.182&1527&25.89&0.928&0.150&1119&{29.78} & {0.935} & {0.125} &732\\
Bedroom & 27.39&0.923&0.119&1542&31.96&0.970&0.096&1079& {34.94} & {0.974} & {0.084} &664\\
Kitchen & 25.15&0.890&0.239&1241&29.87&0.949&0.182&801& {31.17} & {0.946} & {0.168} &552\\
Laboratory &21.60&0.869&0.173&1379&27.03&0.962&0.131&823&  {32.43} & {0.968} & {0.114}&582 \\
\bottomrule
\end{tabular}
\vspace{1em}
\label{tab:scannetpp-trans-details}
\end{table*}

\begin{table*}[t]
\centering
\small
\caption{\textbf{Impact of Different Supervision Space.} The overall results are shown in Tab.~\ref{tab:scannetpp_beap}.}
\setlength{\tabcolsep}{3pt}
\begin{tabular}{c|ccc|ccc|ccc|ccc}
\toprule
 & \multicolumn{3}{c|}{\textbf{3DGEER} w/o BEAP}  & \multicolumn{3}{c|}{\textbf{3DGEER} w/o BEAP} & \multicolumn{3}{c|}{\textbf{3DGEER} w/o BEAP} & \multicolumn{3}{c}{\textbf{3DGEER} (Ours)} \\
Supervision Space &\multicolumn{3}{c|}{Central Perspective}& \multicolumn{3}{c|}{Perspective} & \multicolumn{3}{c|}{Equidistant} & \multicolumn{3}{c}{BEAP} \\
Scene & PSNR & SSIM  & 
  LPIPS & PSNR & SSIM  & 
  LPIPS &
  PSNR & SSIM & LPIPS &
  PSNR & SSIM & LPIPS \\
\midrule
Storage & 28.42&0.931&0.143&19.74 & 0.834 & 0.302 &29.03&0.938& 0.142& {29.18} &{0.940} & {0.141} \\
Office Room &28.90&0.927&0.133& 21.12 & 0.825 & 0.291&29.71&0.932&0.126&{29.78} & {0.935} & {0.125} \\
Bedroom & 33.86&0.965&0.090&24.38 & 0.882 & 0.240&34.60&0.973&0.096& {34.94} & {0.974} & {0.084} \\
Kitchen & 30.65&0.941&0.172&23.23 & 0.889 & 0.338&31.23&0.941&0.177& {31.17} & {0.946} & {0.168} \\
Laboratory &27.35&0.951&0.118&17.08 & 0.834 & 0.329&30.65&0.956&0.135&  {32.43} & {0.968} & {0.114} \\
\bottomrule
\end{tabular}
\vspace{1em}
\label{tab:scannet-beap-details}
\end{table*}

\begin{table*}[ht]
\centering
\scriptsize
\caption{\textbf{ZipNeRF Quantitative Results by Sequence.} Alameda, Berlin, London, and NYC in order. The overall results are shown in Tab.~\ref{tab:zipnerf}.}
\setlength{\tabcolsep}{3pt}
\begin{tabular}{l|cc|cc|cc|cc|cc|cc}
\toprule
Train Data &
\multicolumn{2}{c|}{$1/8$ FE}  & 
\multicolumn{2}{c|}{$1/8$ FE} & 
\multicolumn{2}{c|}{$1/8$ FE} &
\multicolumn{2}{c|}{$1/4$ PH}  & 
\multicolumn{2}{c|}{$1/4$ PH} & 
\multicolumn{2}{c}{$1/4$ PH} \\
Test Data  &$1/8$ FE & $1/4$ PH &$1/8$ FE & $1/4$ PH &$1/8$ FE & $1/4$ PH &$1/8$ FE & $1/4$ PH &$1/8$ FE & $1/4$ PH &$1/8$ FE & $1/4$ PH\\
Method &
\multicolumn{2}{c|}{PSNR $\uparrow$}  & 
\multicolumn{2}{c|}{SSIM $\uparrow$} & 
\multicolumn{2}{c|}{LPIPS $\downarrow$} &
\multicolumn{2}{c|}{PSNR $\uparrow$}  & 
\multicolumn{2}{c|}{SSIM $\uparrow$} & 
\multicolumn{2}{c}{LPIPS $\downarrow$} \\
\midrule
FisheyeGS & 20.74 & 23.64 & 0.820 & 0.802 & 0.223 & 0.277 & 17.39 & 23.74 & 0.736 & 0.839 & 0.265 & 0.240  \\
EVER & 22.46 & 23.06 & 0.852 & 0.777 & 0.144 & 0.274 & 20.06 & 22.83 & 0.789 & 0.794 & 0.213 & 0.250 \\
3DGUT & 21.78 & 22.39& 0.841 & 0.718 & 0.193 & 0.368 & 17.03 & 22.58& 0.588 & 0.765 & 0.329 & 0.311 \\
Ours & 24.60 & 25.00& 0.857 & 0.839 & 0.154 & 0.241 & 21.49 & 25.21 & 0.792 & 0.807 & 0.232 & 0.277 \\
\bottomrule
\end{tabular}

\begin{tabular}{l|cc|cc|cc|cc|cc|cc}
\toprule
Train Data &
\multicolumn{2}{c|}{$1/8$ FE}  & 
\multicolumn{2}{c|}{$1/8$ FE} & 
\multicolumn{2}{c|}{$1/8$ FE} &
\multicolumn{2}{c|}{$1/4$ PH}  & 
\multicolumn{2}{c|}{$1/4$ PH} & 
\multicolumn{2}{c}{$1/4$ PH} \\
Test Data  &$1/8$ FE & $1/4$ PH &$1/8$ FE & $1/4$ PH &$1/8$ FE & $1/4$ PH &$1/8$ FE & $1/4$ PH &$1/8$ FE & $1/4$ PH &$1/8$ FE & $1/4$ PH\\
Method &
\multicolumn{2}{c|}{PSNR $\uparrow$}  & 
\multicolumn{2}{c|}{SSIM $\uparrow$} & 
\multicolumn{2}{c|}{LPIPS $\downarrow$} &
\multicolumn{2}{c|}{PSNR $\uparrow$}  & 
\multicolumn{2}{c|}{SSIM $\uparrow$} & 
\multicolumn{2}{c}{LPIPS $\downarrow$} \\
\midrule
FisheyeGS & 24.25 & 28.25 & 0.895 & 0.924 & 0.206 & 0.209 & 20.08 & 27.86 & 0.848 & 0.929 & 0.233 & 0.194  \\
EVER & 25.26 & 26.97 & 0.912 & 0.907 & 0.155 & 0.221 & 23.18 & 27.72 & 0.877 & 0.913 & 0.196 & 0.212 \\
3DGUT & 25.37 & 27.02 & 0.904 & 0.876 & 0.197 & 0.288 & 18.57 & 27.43& 0.730 & 0.898 & 0.297 & 0.249 \\
Ours & 25.51 & 29.15& 0.916 & 0.925 & 0.166 & 0.194 & 23.58 & 29.51 & 0.878 & 0.915 & 0.219 & 0.228 \\

\bottomrule
\end{tabular}

\begin{tabular}{l|cc|cc|cc|cc|cc|cc}
\toprule
Train Data &
\multicolumn{2}{c|}{$1/8$ FE}  & 
\multicolumn{2}{c|}{$1/8$ FE} & 
\multicolumn{2}{c|}{$1/8$ FE} &
\multicolumn{2}{c|}{$1/4$ PH}  & 
\multicolumn{2}{c|}{$1/4$ PH} & 
\multicolumn{2}{c}{$1/4$ PH} \\
Test Data  &$1/8$ FE & $1/4$ PH &$1/8$ FE & $1/4$ PH &$1/8$ FE & $1/4$ PH &$1/8$ FE & $1/4$ PH &$1/8$ FE & $1/4$ PH &$1/8$ FE & $1/4$ PH\\
Method &
\multicolumn{2}{c|}{PSNR $\uparrow$}  & 
\multicolumn{2}{c|}{SSIM $\uparrow$} & 
\multicolumn{2}{c|}{LPIPS $\downarrow$} &
\multicolumn{2}{c|}{PSNR $\uparrow$}  & 
\multicolumn{2}{c|}{SSIM $\uparrow$} & 
\multicolumn{2}{c}{LPIPS $\downarrow$} \\
\midrule
FisheyeGS & 23.57 & 26.65 & 0.829 & 0.861 & 0.253 & 0.254 & 20.83 & 27.33 & 0.768 & 0.883 & 0.288 & 0.229  \\
EVER & 24.59 & 26.83 & 0.823 & 0.859 & 0.218 & 0.223 & 23.14 & 26.98 & 0.793 & 0.860 & 0.268 & 0.235 \\
3DGUT & 25.61 & 26.25 & 0.854 & 0.803 & 0.221 & 0.336 & 19.82 & 26.41& 0.669 & 0.833 & 0.313 & 0.291 \\
Ours & 26.60 & 27.84& 0.895 & 0.882 & 0.142 & 0.230 & 23.95 & 27.15 & 0.840 & 0.846 & 0.217 & 0.289 \\
\bottomrule
\end{tabular}

\begin{tabular}{l|cc|cc|cc|cc|cc|cc}
\toprule
Train Data &
\multicolumn{2}{c|}{$1/8$ FE}  & 
\multicolumn{2}{c|}{$1/8$ FE} & 
\multicolumn{2}{c|}{$1/8$ FE} &
\multicolumn{2}{c|}{$1/4$ PH}  & 
\multicolumn{2}{c|}{$1/4$ PH} & 
\multicolumn{2}{c}{$1/4$ PH} \\
Test Data  &$1/8$ FE & $1/4$ PH &$1/8$ FE & $1/4$ PH &$1/8$ FE & $1/4$ PH &$1/8$ FE & $1/4$ PH &$1/8$ FE & $1/4$ PH &$1/8$ FE & $1/4$ PH\\
Method &
\multicolumn{2}{c|}{PSNR $\uparrow$}  & 
\multicolumn{2}{c|}{SSIM $\uparrow$} & 
\multicolumn{2}{c|}{LPIPS $\downarrow$} &
\multicolumn{2}{c|}{PSNR $\uparrow$}  & 
\multicolumn{2}{c|}{SSIM $\uparrow$} & 
\multicolumn{2}{c}{LPIPS $\downarrow$} \\
\midrule
FisheyeGS & 24.15 & 27.21 & 0.890 & 0.885 & 0.162 & 0.215 & 19.41 & 27.50 & 0.811 & 0.907 & 0.203 & 0.183  \\
EVER & 28.39 & 27.71 & 0.931 & 0.860 & 0.095 & 0.228 & 25.22 & 28.27 & 0.879 & 0.883 & 0.154 & 0.192 \\
3DGUT & 26.31 & 26.69 & 0.917 & 0.817 & 0.121 & 0.302 & 19.00 & 27.42 & 0.659 & 0.866 & 0.263 & 0.228  \\
Ours & 28.26 & 28.49& 0.922 & 0.906 & 0.097 & 0.190 & 24.52 & 28.55 & 0.874 & 0.885 & 0.168 & 0.220 \\
\bottomrule
\end{tabular}

\vspace{1em}
\label{tab:zipnerf-details}
\end{table*}

\end{document}